\newcommand{\isPreprint}{0}
\theoremstyle{plain}
\newtheorem{theorem}{Theorem}[section]
\newtheorem{proposition}[theorem]{Proposition}
\newtheorem{lemma}[theorem]{Lemma}
\theoremstyle{definition}
\newtheorem{definition}[theorem]{Definition}
\theoremstyle{remark}
\newtheorem{remark}[theorem]{Remark}
\def\eqref#1{equation~\ref{#1}}
\def\floor#1{\lfloor #1 \rfloor}
\def\1{\bm{1}}
\DeclareMathAlphabet{\mathsfit}{\encodingdefault}{\sfdefault}{m}{sl}
\SetMathAlphabet{\mathsfit}{bold}{\encodingdefault}{\sfdefault}{bx}{n}
\newcommand{\R}{\mathbb{R}}
\DeclareMathOperator*{\argmax}{arg\,max}
\DeclareMathOperator*{\argmin}{arg\,min}
\icmltitlerunning{Prices, Bids, Values: One ML-Powered Combinatorial Auction to Rule Them All}
\begin{document}

\twocolumn[
\icmltitle{Prices, Bids, Values: One ML-Powered Combinatorial Auction to Rule Them All}

\icmlsetsymbol{equal}{*}

\begin{icmlauthorlist}
\icmlauthor{Ermis Soumalias}{equal,uzh,ethAI}
\icmlauthor{Jakob Heiss}{equal,ethAI,eth,berk}
\icmlauthor{Jakob Weissteiner}{uzh,ethAI,ubs}
\icmlauthor{Sven Seuken}{uzh,ethAI}
\end{icmlauthorlist}

\icmlaffiliation{uzh}{Department of Informatics, University of Zurich, Zurich, Switzerland}
\icmlaffiliation{ethAI}{ETH AI Center, Zurich, Switzerland}
\icmlaffiliation{eth}{Department of Mathematics, ETH Zurich, Zurich, Switzerland}
\icmlaffiliation{berk}{Department of Statistics, University of California, Berkeley, USA}
\icmlaffiliation{ubs}{UBS, Zurich, Switzerland}

\icmlcorrespondingauthor{Ermis Soumalias}{ermis@ifi.uzh.ch}
\icmlcorrespondingauthor{Jakob Heiss}{jakob.heiss@berkeley.edu}

\icmlkeywords{Machine Learning, ICML, Combinatorial Auctions, Auction Design, Auctions, Market Design, Mechanism Design, Game Theory, Spectrum Auctions, Iterative Auctions, preference elicitation, Neural Networks, Deep Learning, Bayesian Optimization, Active Learning, Computational Economics}

\vskip 0.3in
]

\printAffiliationsAndNotice{\icmlEqualContribution} %

\begin{abstract}
We study the design of \emph{iterative combinatorial auctions (ICAs)}.
The main challenge in this domain is that the bundle space grows exponentially in the number of items. 
To address this, recent work has proposed machine learning (ML)-based preference elicitation algorithms that aim to elicit only the most critical information from bidders to maximize efficiency.
However, while the SOTA ML-based algorithms elicit bidders' preferences via \emph{value queries}, ICAs that are used in practice elicit information via \emph{demand queries}. 
In this paper, we introduce a novel ML algorithm that provably makes use of the full information from both value and demand queries, and we show via experiments that combining both query types results in significantly better learning performance in practice. Building on these insights, we present MLHCA, a new ML-powered auction that uses value and demand queries. MLHCA significantly outperforms the previous SOTA, reducing efficiency loss by up to a factor 10, with up to 58\% fewer queries. 
Thus, MLHCA achieves large efficiency improvements while also reducing bidders' cognitive load, establishing a new benchmark for both practicability and efficiency. Our code is available at \url{https://github.com/marketdesignresearch/MLHCA}.
\end{abstract}

\section{Introduction}\label{sec:Introduction}

\textit{Combinatorial auctions (CAs)} are used to allocate multiple items among several bidders who may view those items as complements or substitutes. 
CAs allow bidders to place bids on entire \textit{bundles} of items, enabling more nuanced expression of value.
CAs have enjoyed widespread adoption in practice, with their applications ranging from allocating spectrum licenses \citep{cramton2013spectrumauctions} to TV ad slots \citep{tvadauctions} and airport landing/take-off slots \citep{rassenti1982combinatorial}.

The key challenge in CAs is that the bundle space grows exponentially in the number of items, making it impossible for bidders to report their full value function in all but the smallest domains. 
Moreover, \citet{nisan2006communication} showed that, for arbitrary value functions, CAs require an exponential number of bids to guarantee full efficiency. 
Thus, practical CA mechanisms cannot provide efficiency guarantees in real world settings with more than a modest number of items. 
Instead, the focus has shifted towards \emph{iterative combinatorial auctions (ICAs)}, where bidders interact with the auctioneer over a series of rounds, providing only a limited (i.e., practically feasible) amount of information, with the aim to maximize the efficiency of the final allocation.

The most established ICA following this interaction paradigm is the \emph{combinatorial clock auction (CCA)} \citep{ausubel2006clock}. Extensively used for allocating spectrum licenses, the CCA generated over \emph{USD $20$ billion} in revenue between $2012$ and $2014$ alone \citep{ausubel2017practical}. However, a key challenge for any ICA, including the CCA, is balancing \textit{speed of convergence} with efficiency. Each bidding round involves significant computational costs and complex business modeling for participants \citep{kwasnica2005iterative, milgrom2017Designing, bichler2017coalition}, making faster convergence highly desirable.

Large spectrum auctions conducted under the CCA format can require over $100$ bidding rounds, prompting practitioners to adopt aggressive price update rules to reduce the number of rounds. For example, prices may be increased by up to $10$\% per round, but such approaches come at the expense of efficiency \citep{ausubel2017practical}. This trade-off highlights the ongoing challenge of designing ICAs that achieve both high efficiency and rapid convergence. Given the value of resources allocated in these auctions, even a one-percentage-point improvement in efficiency translates to welfare gains of hundreds of millions of dollars.

\subsection{ML-Powered Iterative Combinatorial Auctions}\label{subsec:Machine Learning-powered Assignment Mechanisms}

To tackle this challenge, researchers have explored using \emph{machine learning (ML)} to enhance the efficiency of ICAs. The foundational works of \citet{blum2004preference} and \citet{lahaie2004applying} were the first to frame preference elicitation in CAs as a learning problem. More recently, \citet{brero2018combinatorial, brero2021workingpaper} and \citet{weissteiner2020deep, weissteiner2022fourier, weissteiner2022monotone, weissteiner2023bayesian} introduced ML-powered ICAs. Central to these approaches is an ML-based preference elicitation algorithm that trains an ML model on each bidder's value function to generate informative \textit{value queries (VQs)} (e.g., “What is your value for the bundle $\{A, B\}$?”), which iteratively refine the ML model of each bidder's values.\footnote{From an optimization perspective, this can be viewed as a combinatorial Bayesian optimization problem.}

\citet{Soumalias2024MLCCA} took a different approach. 
To increase the likelihood of their approach being adopted in practice, they introduced ML-CCA, an ML-powered auction that follows the established interaction paradigm of the CCA using demand queries. 
Building on earlier works by \citet{brero2018bayesian,brero2019fast},
their design iteratively trains individual ML models for each bidder using their previously answered \emph{demand queries (DQs)} and then selects the next DQ with the highest clearing potential.

Although ML-CCA marked a major step towards a practical ML-powered ICA and outperformed the baseline CCA used in real-world applications, it still faced two key shortcomings. 
First, it fell short of achieving the SOTA efficiency of the VQ-based ML-powered ICAs. 
Second, like the CCA, it relied on a very large number of \textit{supplementary round bids} to enhance its efficiency, requiring bidders to decide on additional \textit{value bids}—a cognitively demanding task.

We address these shortcomings by introducing the \textit{Machine Learning-powered Hybrid Combinatorial Auction  (MLHCA)}. Leveraging sophisticated DQ and VQ generation algorithms, MLHCA maintains the established interaction paradigm of the CCA while achieving unprecedented efficiency gains. MLHCA outperforms the previous SOTA across all tested domains, reducing efficiency loss by up to a factor of ten. 
Based on the value of goods traded \citep{ausubel2017practical}, these efficiency improvements correspond to welfare gains of hundreds of millions of USD.
At the same time, MLHCA significantly reduces the cognitive load on bidders: compared to BOCA, the previous SOTA, MLHCA requires at least $42$\% fewer queries to achieve the same efficiency, and compared to ML-CCA, the SOTA auction following CCA’s interaction paradigm, MLHCA requires at least $26$\% fewer queries.
Moreover, unlike the CCA and ML-CCA, in MLHCA bidders do not need to decide which bundles to bid for in its VQ rounds, as the auction automatically suggests these bundles. 
Thus, MLHCA achieves unprecedented efficiency gains while significantly reducing bidders' cognitive load, establishing a new benchmark for both practicability and efficiency.

\subsection{Our Contributions}
We introduce the \textit{Machine Learning-powered Hybrid Combinatorial Auction (MLHCA)}, a practical ICA that achieves unprecedented efficiency and convergence speed. First, we establish a theoretical foundation and provide illustrative examples to demonstrate the advantages and limitations of DQs and VQs from an auction design perspective (\Cref{sec:auction_advantages}). 
Then we develop a learning algorithm that effectively leverages both query types (\Cref{sec:learning_advantages_reduced}). We provide strong experimental evidence of the learning benefits of combining both query types, as well as the advantages of starting an auction with DQs instead of VQs. 

We then integrate these auction and ML insights to design MLHCA, the first ICA to incorporate sophisticated DQ and VQ generation algorithms (\Cref{section:the_mechanism}). Simulations in realistic domains (\Cref{sec:ExperimentalResults}) show that MLHCA significantly outperforms the previous SOTA, achieving unprecedented efficiency while also using fewer queries, thus setting a new benchmark for both efficiency and practicality.

\subsection{Further Related work}
In the field of \textit{automated mechanism design}, \citet{dutting2015payment,dutting2019optimal}, \citet{golowich2018deep} and \citet{narasimhan2016automated} used ML to learn new mechanisms from data, 
while \citet{cole_rougharden2014samplecomplexity, morgensternRoughgarden2015pseudodimension} and \citet{balcan2023generalization}
bounded the sample complexity of learning approximately optimal mechanisms.
In contrast to this line of prior work, in our design, the ML algorithm is part of the mechanism itself. \citet{lahaielubin2019adaptive} suggest an adaptive price update rule that increases price expressivity as the rounds progress in order to improve efficiency and speed of convergence. 
Unlike that work, we aim to improve efficiency without increasing price expressivity, as that is not a popular interaction paradigm in practice, and can cause added cognitive load on the bidders.
Preference elicitation is also a key challenge in combinatorial allocation without money. \citet{soumalias2024mlcm} introduce an ML-powered mechanism for course allocation that improves preference elicitation by asking comparison queries. See \Cref{app:extendedLiteratureReview} for further related work.

\subsection{Practical Considerations and Incentives} \label{subsec:incentives}
MLHCA can be seen as a sophisticated modification of the CCA. In practice, many other considerations (beyond preference elicitation complexity and efficiency) are important. For example, \citet{ausubel2017practical} discussed the vital role of well-designed \emph{activity rules} to induce truthful bidding in the clock phase of the CCA. In \Cref{subsec:app_acticity_rules}, we provide a detailed discussion of the most common activity rules used in the CCA, and we detail how MLHCA can also leverage these rules for the same goal. Additionally, in \Cref{sec:app_MLHCA_misreports}, we prove that MLHCA can immediately detect if a bidder's reports are inconsistent.

 The payment rule used in the supplementary round of the CCA is also important for incentives. \citet{cramton2013spectrumauctions} argued that the use of the VCG-nearest payment rule, while not strategyproof, induces good incentives in practice. Similar to the supplementary round of the CCA, the VQ-based phase of MLHCA is not strategyproof. However, in \Cref{sec:marginal_economies}, we argue that the VQ-based phase of MLHCA offers strong incentives in practice, and we show that, under two additional assumptions, truthful bidding is an ex-post Nash equilibrium (following the arguments from \citet{brero2021workingpaper} for the MLCA).

\section{Preliminaries}\label{sec:Preliminaries}
\subsection{Formal Model for ICAs}\label{subsec:Formal Model for ICAs}
We consider \textit{multiset} CA domains with a set $N=\{1,\ldots,n\}$ of bidders and a set $M=\{1,\ldots,m\}$ of distinct items with corresponding \textit{capacities}, i.e., number of available copies, $c=(c_1,\ldots,c_m)\in\N^m$. 
We denote by $x\in \X=\{0,\ldots,c_1\}\times\ldots\times\{0,\ldots,c_m\}$ a bundle of items represented as a positive integer vector, where $x_{j}=k$ iff item $j \in M$ is contained $k$-times in $x$. 
The bidders' true preferences over bundles are represented by their (private) \textit{value functions} $v_i: \X\to \R_{\geq0},\,\, i \in N$, i.e., $v_i(x)$ represents bidder $i$'s value for bundle $x\in \X$. We assume that $v_i$ is nondecreasing and satisfies $v_i(0)=0$. We collect the value functions $v_i$ in the vector $v=(v_i)_{i \in N}$. By $a=(a_1,\ldots,a_n) \in \X^n$ we denote an \textit{allocation} of bundles to bidders, where $a_i$ is the bundle bidder $i$ obtains. We denote the set of \emph{feasible} allocations by $\F=\left\{a \in \X^n:\sum_{i \in N}a_{ij} \le c_j, \,\,\forall j \in M\right\}$. 
We assume that bidders have \textit{quasilinear utility functions} of the form $u_i(a_i) = v_i(a_i) - \pi_i$ where $v_i$ can be highly non-linear and $\pi_i\in \R_{\geq0}$ denotes the bidder's payment. This implies that the (true) \emph{social welfare} $V(a)$ of an allocation $a$ is equal to the sum of all bidders' values $\sum_{i\in N} v_i(a_i)$.\footnote{Note that $V(a)=\sum_{i\in N} u_i(a_i) + u_{\textnormal{auctioneer}}(a)=\sum_{i\in N} \left(v_i(a_i) - \pi_i\right) + \sum_{i\in N}\pi_i=\sum_{i\in N} v_i(a_i)$.}
We let $a^* \in \argmax_{a \in {\F}}V(a)$ denote a social-welfare maximizing, i.e., \textit{efficient}, allocation. 
The \emph{efficiency} of any allocation $a \in \F$ is determined as $V(a)/V(a^*)$.

An ICA \textit{mechanism} defines how the bidders interact with the auctioneer and how the allocation and payments are determined. 
We consider ICAs that iteratively ask bidders both \emph{demand queries} (DQs) and \emph{value queries} (VQs).
\begin{definition}[Demand Query]
In a (linear) demand query, the auctioneer presents a vector of item prices $p \in \R^m_{\ge 0}$ and each bidder $i$ responds with her utility-maximizing bundle,
\begin{align}\label{eq:utility_maximizing_bundle}
x_i^*(p) \in \argmax_{x\in \X}\left\{v_i(x)-\iprod{p}{x}\right\}\  i\in N,
\end{align}
where $\iprod{\cdot}{\cdot}$ denotes the Euclidean scalar product in $\R^m$.
\end{definition}

\begin{definition}[Value Query]
In a value query, the auctioneer presents to bidder $i$ a bundle of items $x$ and bidder $i$ responds with her value at those prices, i.e., $v_i(x)\in \R_{\ge 0}$.   
\end{definition}

For bidder $i \in N$, we denote her $K\in \N$ elicited DQs as 
$\Rdq_i = \left\{\left(x_i^*(p^{r}),p^{r}\right)\right\}_{r=1}^{K} $ 
and her $L \in \N$ elicited VQs as $\Rvq_i = \left\{\left(x_i^l, v_i (x_i^l)\right)\right\}_{l=1}^{L}$.  
Bidder $i$'s reports are denoted as $R_i = (\Rdq_i, \Rvq_i)$.
We collect the elicited reports of all bidders in the tuple  $R=(R_1,\ldots,R_n)$.

In auctions using DQs, a key concept is the bidder's \textit{{{inferred}} value}. This represents the maximum lower bound on a bidder's value for a bundle that the auctioneer can deduce from the bidder's reports, without assuming monotonicity. The inferred value is always weakly lower than the bidder's true value, with equality achieved if the bidder has answered the corresponding VQ for that bundle. Formally:

\begin{definition}[Inferred Value] \label{def:inferred_value}
Bidder $i$'s inferred value for bundle $x \in \mathcal{X}$  given her reports $R_i$ is
\begin{gather} \label{eq:inferred_value}
\widetilde{v}_i(x ; R_i) = 
\begin{cases} 
v_i(x), \;   \text{ if } (x, v_i(x)) \in \Rvq_i, \\
\max \left ( \left\{ \iprod{x}{p^r} : (x, p^r) \in \Rdq_i\right\} \cup \{0 \} \right )   \text{, else.}
\end{cases}
\end{gather}
\end{definition}

The ICA's final allocation $a^*(R)\in \F$ and payments $\pi_i\coloneqq\pi_i(R)\in \R^n_{\ge 0}$ are computed based \emph{only} on the \textit{elicited} reports $R$. 
Concretely, $a^*(R)\in \F$ is determined by solving the \textit{Winner Determination Problem (WDP):}
\begin{align}\label{WDPFiniteReports}
a^*(R) \in \argmax_{\substack{a \in \F}}
\sum_{i\in N} \widetilde{v}_i (a_i ; R_i), 
\end{align}
where $\sum_{i\in N} \widetilde{v}_i (a_i ; R_i)$ is the allocation's \emph{inferred social welfare}, 
a lower bound on its \emph{social welfare} $\sum_{i\in N}{v}_i (a_i)$.

\subsection{Benchmark ICAs} \label{sec:benchmark_icas}
In this section, we briefly introduce the three main benchmark mechanisms considered in this paper.

\paragraph{CCA} The most established ICA is the \textit{Combinatorial Clock Auction (CCA)} \citep{ausubel2006clock}. 
The CCA consists of two phases. 
The initial \textit{clock phase} proceeds in rounds. 
In each round $r$, the auctioneer sets anonymous (i.e., same prices for all bidders) item prices $p^r \in \mathbb{R}_{\ge 0}^m$, 
prompting each bidder to respond to a DQ, declaring her utility-maximizing bundle at $p^r$. 
In the next round, the prices of over-demanded items are increased by a fixed percentage, until over-demand is eliminated. 
The second phase of the CCA, known as the \textit{supplementary round}, allows bidders to report their \textit{values} for additional bundles of their choice.
The \textit{clock bids raised heuristic} suggests that bidders report their values for all bundles they requested during the clock phase. 
The final allocation is determined by solving the WDP based on all reports 
as in \Cref{WDPFiniteReports}.

\paragraph{ML-CCA} The most efficient DQ-based ICA is the \textit{Machine Learning-powered Combinatorial Clock Auction (ML-CCA)} \citep{Soumalias2024MLCCA}.
ML-CCA has the same interaction paradigm as the CCA, but with a substantially more refined DQ-generation algorithm in its clock phase. 
In each round, an ML model is trained to estimate each bidder’s value function based on previously submitted DQ responses. 
Then, the prices are not increased by a percentage like in the CCA, but instead a convex optimization problem determines the prices with the highest clearing potential.

\paragraph{BOCA} The SOTA ICA in terms of efficiency is the VQ-based \emph{Bayesian optimization-based combinatorial auction (BOCA)} \citep{weissteiner2023bayesian}.
The main idea of BOCA is that in each round, the auctioneer creates an estimate of the upper confidence bound of the value function of each agent based on her past responses. 
Then, the auctioneer solves an ML-based WDP to find the feasible allocation with the highest upper bound on its estimated social welfare, and queries each agent her value for her bundle in that allocation. 
This allows the mechanism to balance between exploring and exploiting the bundle space.

\subsection{ML Framework} 
The ML models used by ML-CCA, and as basis for the construction of the confidence bound estimates in BOCA are \textit{monotone-value neural networks (MVNNs)} $\mathcal{M}^\theta%
:\X\to\R$ \citep{weissteiner2022monotone}.
MVNNs are a class of NNs specifically designed to represent \emph{monotone} \emph{combinatorial} valuations.
MVNNs have also had success in combinatorial allocation domains without money, e.g., for course allocation \cite{soumalias2024mlcm}.
\citet{Soumalias2024MLCCA} introduced \emph{multiset MVNNs (mMVNNs)}, an extension of MVNNs that also incorporates at a structural level the information that some items in the auction are identical copies of each other. 
In this work, we instantiate our ML models using mMVNNs, and denote agent $i$'s model as 
$\mathcal{M}^\theta_i:\X\to\R$.
Within this work, we will refer to all mMVNNs simply as MVNNs. 
We provide more details in \Cref{app:MVNN}.

\section{A Theoretical Framework for Effectively Combining DQs and VQs} \label{sec:auction_advantages}
This section develops a theoretical framework for effectively combining DQs and VQs. 
Proofs are deferred to \Cref{app:Details on why one should start with DQs and end with VQs}.

\paragraph{VQ-Based Approaches Rely on Cognitively Complex Random VQs.}
At the start of an ICA, no specific information about bidders' preferences is available, making it challenging to identify bundles relevant to them. Thus, most VQ-based auctions, including the SOTA approach \citep{weissteiner2023bayesian}, begin by querying bidders about randomly selected bundles. However, in practice, answering VQs for such random bundles that do not align with the bidders' interests is cognitively demanding. In contrast, the most widely used ICAs in practice (e.g., the CCA) employ DQs, which are easier for bidders to answer effectively \citep{cramton2013spectrumauctions}. This key advantage highlights why relying exclusively on VQs is often impractical in real-world auctions.

\paragraph{DQs Offer Superior Efficiency Gains in Initial Rounds.}
Even if bidders could easily answer random VQs, in \Cref{sec:DisadvantagesOnlyVQ} we detail the significant advantages of DQs in the initial rounds of an ML-ICA,  
where DQs are providing more actionable and efficient information. 
This superior efficiency is formalized in the following proposition:

\begin{proposition}\label{le:OnlyVQsBadForSparseValueFunction}
The expected social welfare of an auction that uses a \textit{single} random demand query can be arbitrarily larger than that of an auction that uses any constant number ($k \ll 2^m$) of random value queries. 
\end{proposition}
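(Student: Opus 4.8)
The plan is to exhibit a simple parametric family of instances, indexed by a growing number of items $m$, on which a single well-chosen (random) demand query extracts essentially the full efficient welfare, while any fixed number $k$ of random value queries almost surely reveals nothing useful, so the welfare ratio diverges. The natural construction is a single bidder (or two bidders, one of whom is a trivial ``dummy'') whose value function is a large ``spike'' concentrated on one bundle: fix a target bundle $x^\dagger\in\X$ and set $v(x^\dagger)=H$ for some large $H$, and $v(x)=0$ (or some negligible $\varepsilon$) for all other $x\in\X$. The efficient allocation gives $x^\dagger$ to the bidder, so $V(a^*)=H$.

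First I would analyze the VQ side. A random value query picks a bundle $x$ uniformly (or from whatever prior the VQ-based mechanism uses over $\X$) and returns $v(x)$; since $|\X|$ is exponential in $m$ (at least $2^m$ when all capacities are $1$), the probability that a single random VQ hits $x^\dagger$ is at most $2^{-m}$, so the probability that any of the $k$ queries hits it is at most $k\,2^{-m}$, which is $o(1)$ since $k\ll 2^m$. Conditioned on the complement event, every answered VQ returns $0$, so the inferred value $\widetilde v(\cdot\,;R)$ is identically $0$ (Definition~\ref{def:inferred_value}), the WDP in~\eqref{WDPFiniteReports} yields inferred social welfare $0$, and the realized welfare of the allocation the mechanism returns is $0$ (or at most $\varepsilon$, if one uses the $\varepsilon$-spike variant to keep values ``generic''). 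Hence the expected welfare of the VQ-based auction is at most $k\,2^{-m}H + (1-k2^{-m})\varepsilon$, which can be made an arbitrarily small fraction of $H$.

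Next I would analyze the DQ side. The key point is that a demand query at price vector $p$ returns $x^*(p)\in\argmax_x\{v(x)-\iprod{p}{x}\}$, and for the spike function this argmax equals $x^\dagger$ as soon as $\iprod{p}{x^\dagger}<H - \max_{x\ne x^\dagger}(v(x)-\iprod{p}{x})$, i.e. for all sufficiently small prices — in particular at $p=\mathbf 0$ it is guaranteed (ties can be broken toward $x^\dagger$, or resolved by taking $p$ small and generic). So with a single demand query drawn from essentially any reasonable distribution over small price vectors (e.g. $p=\mathbf 0$, or $p$ uniform on a small box), the bidder reports $x^\dagger$, the inferred value $\widetilde v(x^\dagger;R)=\iprod{x^\dagger}{p}$ is close to (and for $p=\mathbf 0$ equal to) what is needed for the WDP to allocate $x^\dagger$, and the realized welfare is $H=V(a^*)$, i.e. full efficiency. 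Taking the ratio of expected welfares and letting $H\to\infty$ (or $\varepsilon\to 0$, $m\to\infty$) makes it unbounded, which is exactly the claim.

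The main obstacle — really the only subtle point — is making the comparison ``fair'' and model-robust: one must be careful that the DQ response is well-defined (the argmax in~\eqref{eq:utility_maximizing_bundle} is nonempty and, under a sensible tie-breaking or a small generic perturbation of $p$, equals $x^\dagger$), and that the statement quantifies correctly over the VQ distribution (``random value queries'' should be read as i.i.d.\ from the mechanism's query distribution, which has support of exponential size, so the union bound $k\,2^{-m}\to 0$ applies). A secondary bookkeeping point is to phrase the instance so that it is a legitimate CA domain — e.g. add a dummy bidder with value $0$ everywhere, or note that a single-bidder instance already fits the formal model — and to note that the monotonicity caveat in Definition~\ref{def:inferred_value} is irrelevant here since inferred values are what the WDP uses. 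Once these are pinned down, the welfare-ratio computation is immediate.
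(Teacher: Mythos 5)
Your construction is essentially the paper's own: a ``needle in a haystack'' valuation with a huge value concentrated on a single bundle, so that $k$ random VQs miss it with probability $1-k2^{-m}\to 1$ while a single DQ at modest prices elicits it directly; the paper implements this with two bidders (one valuing only the full bundle at $V\to\infty$, one valuing every non-empty bundle at $\eps$) and i.i.d.\ item prices plus a concentration argument, but the welfare-ratio computation is the same. The one detail you must pin down is that your spike $v(x^\dagger)=H$, $v(x)=0$ otherwise, is \emph{not} a monotone valuation unless $x^\dagger$ is the full bundle (any superset of a generic $x^\dagger$ would get value $0<H$); the paper explicitly verifies monotonicity of its value functions, so you should fix $x^\dagger$ to be the grand bundle, which also makes your DQ analysis coincide with the paper's. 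Your own hedge about taking $p$ small but strictly positive (rather than $p=\mathbf{0}$) is the right one, since at $p=\mathbf{0}$ the inferred value of the revealed bundle is $0$ and the WDP in \Cref{WDPFiniteReports} is indifferent among all allocations.
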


Additionally, DQs can establish a proof of optimality, allowing the auction to terminate early (\Cref{prop:ClearingPricesAreEfficient}). 
These theoretical insights are validated by our experimental results in \Cref{sec:ExperimentalResults}. 
An auction initialized with DQs has up to $20$\% points higher efficiency after its initial queries compared to an auction initialized with VQs.

\paragraph{VQs Offer Superior Efficiency Gains in Later Rounds.}
This raises the question: is it sufficient to rely exclusively on DQs? 
\Cref{thm:DQsNotSufficient} proves that the answer is negative:
\begin{theorem}\label{thm:DQsNotSufficient}
For every $\epsilon>0$, there exist infinitely many auction instances for which no combination of DQs can achieve an efficiency above $50\%+\epsilon$.
This remains true even for infinite combinations of DQs and even if the bidders additionally report their true values for all bundles they requested in those DQs. 
\end{theorem}

Notably, \Cref{thm:DQsNotSufficient} shows that this limitation persists even when supplementing the auction with the \textit{clock-bids raised} heuristic. 
Without this heuristic, adding more DQs can even \emph{decrease} efficiency.
In \Cref{le:EfficencyDropDQ} we prove that a single DQ can reduce the auction's efficiency arbitrarily close to 100\%, whereas VQ-based auctions do not face this issue.  

\begin{proposition}\label{le:EfficencyDropDQ}
In a DQ-based ICA, adding DQs can actually reduce efficiency.
A single DQ can cause an efficiency drop arbitrarily close to $100$\%. 
By comparison, in a VQ-based ICA, adding additional queries can never reduce efficiency (assuming truthful bidding). 
\end{proposition}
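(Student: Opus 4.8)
The plan is to prove the two halves of \Cref{le:EfficencyDropDQ} separately: (a) a construction in which adding one DQ drops the auction's achievable efficiency arbitrarily close to $0$ (i.e., the efficiency loss approaches $100\%$), and (b) the easy monotonicity observation for VQ-based ICAs.

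For part (a), I would build a small instance — two bidders and one or two item types suffice — designed so that \emph{before} the offending DQ the inferred-value WDP (\Cref{WDPFiniteReports}) already recovers the efficient allocation (or something with efficiency close to $1$), while \emph{after} the DQ the set of reports forces the WDP onto a poor allocation. The mechanism here is the key subtlety: adding a DQ adds a pair $(x_i^*(p^r), p^r)$ to $\Rdq_i$, which can only \emph{increase} the inferred value $\widetilde v_i(\cdot;R_i)$ pointwise (per \Cref{def:inferred_value}), so the extra information cannot by itself hurt. The efficiency drop must therefore come from how the auction \emph{uses} the reported bundles: in a DQ-based ICA the final allocation is a WDP over the inferred values, and making one bidder's inferred value for some \emph{large} bundle jump up (because at a very low price vector $p^r$ she demands a large bundle, so $\langle x_i^*(p^r), p^r\rangle$ is a large number) can make the WDP prefer to hand that large bundle to the wrong bidder, starving the other bidder. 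Concretely: let bidder $1$ value only a specific small bundle $\{A\}$ highly, say value $1$, and have bidder $2$ value the grand bundle at $1+\eps$ but value $\{A\}$ at essentially $0$; with a few initial DQs the inferred WDP gives $\{A\}$ to bidder $1$ and (almost) everything else to bidder $2$, yielding efficiency $\approx 1$. Now ask bidder $2$ a DQ at a tiny uniform price $p = \delta \vone$: bidder $2$, whose true valuation of the grand bundle is $1+\eps$, demands the grand bundle, and her inferred value for the grand bundle jumps to $\langle \vone, \delta\vone\rangle = \delta m$, which — if we instead scale the prices so that $\langle x_2^*(p), p\rangle$ lands just above $1$ — makes the WDP prefer to allocate the grand bundle (including item $A$) entirely to bidder $2$, leaving bidder $1$ with nothing. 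The realized social welfare of that allocation is only bidder $2$'s true value $1+\eps$ for the grand bundle, whereas $a^*$ achieves roughly $1 + (1+\eps)$ by splitting; tuning the parameters makes the efficiency ratio after the DQ arbitrarily small relative to before. I would state the instance, verify the pre-DQ WDP solution, verify which bundle bidder $2$ demands at the chosen price (a one-line $\argmax$ check), verify the post-DQ WDP solution, and compute the ratio.

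For part (b), the VQ claim is immediate: answering an additional VQ adds a pair $(x_i^l, v_i(x_i^l))$ to $\Rvq_i$, which by \Cref{def:inferred_value} sets $\widetilde v_i(x_i^l; R_i) = v_i(x_i^l)$ — a (weak) pointwise increase of the inferred value function — and leaves all other inferred values unchanged (and still valid lower bounds on true value under truthful bidding). Hence the feasible set of the WDP in \Cref{WDPFiniteReports} is unchanged while the objective it maximizes is pointwise weakly larger, and crucially the objective value attained at the \emph{new} optimum is $\ge$ the objective value at the \emph{old} optimum evaluated under the old inferred values; since inferred social welfare is a lower bound on true social welfare and the old optimal allocation's true welfare is unchanged, the true welfare of the new WDP solution is at least that of the old one. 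So efficiency is non-decreasing. I'd write this as two or three sentences.

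The main obstacle is engineering the instance in part (a) so that all three facts hold simultaneously with a clean parameterization: (i) the pre-DQ inferred WDP genuinely picks the good allocation (this requires that, before the bad DQ, no bidder has an inflated inferred value for a bundle that would derail it — so the initial DQs should be at prices high enough that the inferred-value lower bounds are small, or one can simply assume the auction starts from VQs on the relevant small bundles), (ii) at the chosen price vector the targeted bidder's true $\argmax$ is exactly the large bundle we want (a transparent computation if her valuation is, say, additive-plus-a-threshold or a simple step function), and (iii) the post-DQ WDP is forced onto the bad allocation, which needs the inflated inferred value $\langle x_i^*(p),p\rangle$ to strictly exceed the competing inferred social welfare — a matter of choosing the price scale. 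I expect the cleanest route is a single item type with capacity $\ge 2$ (a multiset domain, which the paper's model allows), or two item types, with piecewise-constant valuations, so that every $\argmax$ and every WDP is decided by comparing two or three explicit numbers.
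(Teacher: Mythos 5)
Your overall strategy matches the paper's: part (a) is an explicit two-bidder counterexample exploiting the fact that the WDP of \Cref{WDPFiniteReports} optimizes \emph{inferred} values (price lower bounds), so one extra DQ can shift its argmax onto an allocation whose true welfare is far lower; part (b) is the observation that under truthful VQs the inferred and true welfare of the winning allocation coincide, so true welfare inherits the monotonicity of the inferred objective. The paper's instance has $m=2$, $c=(1,1)$, bidder~1 valuing item~1 at $400$ and item~2 at $2$, bidder~2 valuing item~1 at $1.1$: two benign DQs yield the efficient allocation (inferred welfare $1.2$, true welfare $400$), and a third DQ at $p=(401,1)$ makes bidder~1 demand item~2, producing an allocation with inferred welfare $2>1.2$ but true welfare $3.1$, i.e.\ efficiency below $1\%$.

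The one genuine gap is quantitative. With the numbers you actually give --- $v_1(\{A\})=1$ and $v_2(\text{grand})=1+\eps$ --- the post-DQ allocation has true welfare $1+\eps$ against an optimum of roughly $2+\eps$, so efficiency only falls to about $50\%$; the clause ``tuning the parameters makes the ratio arbitrarily small'' is carrying the entire ``arbitrarily close to $100\%$'' claim, and you never identify which parameter must move. The essential degree of freedom is that the \emph{starved} bidder's true value for her bundle must be an unbounded parameter $V$ while her \emph{inferred} value stays bounded (it is merely a price at which she once demanded the bundle, and the WDP never learns more than that). Then the optimum grows like $V$, the post-DQ allocation's true welfare stays $O(1)$, and the pre-DQ WDP still selects the good allocation because it only ever compares the small inferred numbers. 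This is precisely the role of the constants $400$ versus $1.2$ in the paper's instance, and the paper explicitly notes that scaling these up drives the drop toward $100\%$. (Also, a one-clause fix for part (b): your chain $V(a^{\mathrm{new}})\ge \widetilde{V}^{\mathrm{new}}(a^{\mathrm{new}})\ge \widetilde{V}^{\mathrm{old}}(a^{\mathrm{old}})$ only closes to $V(a^{\mathrm{old}})$ because in a pure VQ auction the old winner can be taken to consist of queried or empty bundles, so its inferred and true welfare are equal; state that. And note that a \emph{low} price vector makes $\iprod{x_i^*(p)}{p}$ small, not large --- you silently correct this later, but the initial phrasing is backwards.)
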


These issues are also very prevalent in the real world. 
In \Cref{sec:ExperimentalResults}, we demonstrate that in realistic domains, the gap in \emph{average} efficiency between the SOTA DQ-based and VQ-based auctions can reach up to $8$\% points. Furthermore, the \textit{average} efficiency of the CCA, the prominent DQ-based auction, declines by $8$\% points during the auction.
VQ-based auctions avoid these pitfalls entirely. First, they can always achieve 100\% efficiency after sufficiently many VQs (\Cref{thm:InfinteVQsSufficient}). Second, asking additional VQs never reduces the efficiency of a VQ-based auction. In \Cref{sec:DisadvantagesOnlyDQs}, we provide further theoretical and intuitive arguments on why relying solely on DQs is insufficient.

\paragraph{Optimally Combining DQs and VQs.}
Building on the discussion in this section, it is natural to leverage the strengths of both query types by starting with DQs and then transitioning to VQs. \Cref{ex:OneVQNeeded} in \Cref{sec:theoretical_analysis_combining_dqs_vqs} illustrates why this approach is effective:
even after infinitely many DQs could not achieve more than $55\%$ efficiency,
a single VQ can achieve $100$\% efficiency. 
However, caution is required, as introducing even a single VQ can reduce the auction’s efficiency by nearly $100$\% (\Cref{le:mixed_auction_efficiency_drop}).
To address this, we introduce the \textit{bridge bid}, a specialized VQ designed to seamlessly connect the DQ and VQ phases of a hybrid auction. The bridge bid asks each bidder her value for the bundle she would have received according to the WDP (\Cref{WDPFiniteReports}) after the final DQ round.
Incorporating the bridge bid guarantees that the auction’s final efficiency will be no less than its DQ-only efficiency (\Cref{lemma:bridge_bid}). 
We demonstrate the significance of this bid in practice in \Cref{Sec:experiments_theoretical_alignment,sec:bridge_bid_evaluation}.
\Cref{sec:theoretical_analysis_combining_dqs_vqs} provides further insights into combining DQs with VQs.

\section{Mixed Query Learning} \label{sec:learning_advantages_reduced}
Combining DQs and VQs not only improves the final efficiency of auctions but also enables the global learning of bidders' value functions. In this section, we introduce a mixed training algorithm that leverages both query types. Specifically, we demonstrate the learning benefits of initializing auctions with DQs over VQs and show how integrating both query types leads to superior learning performance. Further details are presented in \Cref{sec:learning_advantages}.

\subsection{Mixed Training Algorithm} \label{sec:recap_mixed_training}
To leverage the advantages of both DQs and VQs, we propose a two-stage training algorithm
compatible with modern NN architectures, including mMVNNs.
In each epoch, the ML model is first trained on all DQ responses using the loss function of \citet{Soumalias2024MLCCA}. The key idea is to predict the bidder's utility-maximizing bundle at the given prices by treating her ML model as her true value function. If the predicted reply deviates from the bidder's true reply, the loss equals the difference in predicted utility between the two bundles. This loss function provably captures all information provided by the DQ responses.
Additionally, the model is trained on the VQ responses using a standard regression loss.
For details, please refer to \Cref{sec:app:training_algorithm}.

\subsection{Experimental Analysis} \label{subsec:learning_advantages_experiments_reduced}
We demonstrate the learning benefits of initializing auctions with DQs rather than VQs and highlight how combining both query types leads to superior learning performance.

We conduct the following experiment:
We perform \emph{hyperparameter optimization (HPO)} to train an MVNN for the most critical bidder in the most realistic simulation domain
(see \Cref{subsec:appendix_SATS_domains} for details on the simulation and \Cref{sec:app_learning_experiments} for results for other domains).
For this bidder, we generate three training sets: 
(1) 40 DQs simulating 40 CCA clock rounds and 20 random VQs, (2) 60 DQs 
simulating 60 clock rounds with no VQs, and (3) 60 random VQs with no DQs.
The models are evaluated on two validation sets: a \textit{random bundle set} ($\mathcal{V}_r$) with 50,000 uniformly sampled bundles, and a \textit{price-driven set} ($\mathcal{V}_p$) containing bundles requested under 200 random price vectors. $\mathcal{V}_r$ tests generalization across all bundles, while $\mathcal{V}_p$ focuses on utility-maximizing bundles. 
We select the configuration with the best coefficient of determination ($R^2$), averaged over 10 bidder instances.

The selected configurations are then tested on 10 new bidders, generating hold-out tests sets $\mathcal{T}_r$ and $\mathcal{T}_p$ in the same way as $\mathcal{V}_p$ and $\mathcal{V}_r$. We report $R^2$, \emph{Kendall Tau (KT)}, \emph{scaled Mean Absolute Error (scaled MAE)}, and $R^2_c$. %
An $R^2_c$ value of 1 indicates perfect learning up to a constant shift, with differences between $R^2_c$ and $R^2$ reflecting shift magnitude.
HPO procedures were consistent across all training sets, with identical test instances, seeds, search spaces, and computation time. Additional details are in \Cref{subsec:appendix_hpo}.

\begin{table}[t]
    \renewcommand\arraystretch{1.2}
    \setlength\tabcolsep{2pt}
	\robustify\bfseries
	\centering
	\begin{sc}
 \begin{adjustbox}{max width=0.5\textwidth}
\small
\begin{tabular}{l  c  c  c  c  c  c  c  c  c  c}
\toprule
\textbf{Optimization} &  \multicolumn{2}{c}{\textbf{Train Points}} & \multicolumn{2}{c}{{$\mathbf{R^2}$}} & \multicolumn{2}{c}{\textbf{KT}} & \multicolumn{2}{c}{\textbf{MAE scaled}} 
& \multicolumn{2}{c}{\textbf{$\mathbf{R^2_c}$}}
\\
\cmidrule(lr){2-3} \cmidrule(lr){4-5} \cmidrule(lr){6-7} \cmidrule(lr){8-9} \cmidrule(lr){10-11}
\textbf{Metric} &  $\text{VQs}$ & DQs &  $\mathcal{T}_r$ & $\mathcal{T}_p$ &  $\mathcal{T}_r$ & $\mathcal{T}_p$ &  $\mathcal{T}_r$ & $\mathcal{T}_p$ &  $\mathcal{T}_r$ & $\mathcal{T}_p$ \\ 
\cmidrule(lr){1-1} \cmidrule(lr){2-3} \cmidrule(lr){4-5} \cmidrule(lr){6-7} \cmidrule(lr){8-9} \cmidrule(lr){10-11}
$R^2$ on $\mathcal{V}_r$   & $20$ & $40$ & \ccell $0.84$ & \ccell $0.42$ & \ccell $0.79$ & \ccell $0.80$ & \ccell $0.037$ & \ccell $0.044$ & \ccell $0.84$ & \ccell $0.80$ \\
& $60$ & $0$ & $0.73$ & $-10.07$ & $0.68$ & $0.64$ & $0.052$ & $0.236$ & $0.74$ & $0.20$ \\
& $0$ & $60$ & $0.24$ & $-3.07$ & $0.77$ & $0.77$  & $0.103$ & $0.128$ & $0.83$ & $0.76$\\
\midrule
$R^2$ on $\mathcal{V}_p$ & $20$ & $40$ & \ccell $0.82$ & \ccell $0.01$ & \ccell $0.79$ & \ccell$0.80$ & \ccell $0.041$ & \ccell $0.062$ & \ccell $0.84$ & \ccell $0.83$ \\ 
& $60$ & $0$ & $0.76$ & $-3.40$ & $0.72$ & $0.62$ & $0.049$ & $0.141$ & $0.77$ & $0.05$ \\ 
& $0$ & $60$ & $-0.05$ & $-6.24$ & $0.78$ & $0.72$ &  $0.103$ & $0.154$ & $0.84$ & $0.69$ \\
\bottomrule
\end{tabular}
\end{adjustbox}
\end{sc}
\vskip -0.1 in
\caption{
Learning comparison of training only on DQs, only on VQs, or on both. 
Shown are averages over ten instances.
Winners marked in gray. 
}
\label{tab:learning_comparison}
\vskip -0.3cm
\end{table}

\Cref{tab:learning_comparison} shows that training on a mix of DQs and VQs consistently outperforms training on either query type, particularly for utility-maximizing bundles in $\mathcal{T}_p$, where mixed training achieves nearly three times lower MAE. The mixed model also closely matches the mean value for both test sets, as indicated by the small gap between $R^2_c$ and $R^2$.
In contrast, DQ-only models lack absolute value information, leading to relative but not unique value function learning, as evidenced by the large difference between $R^2$ and $R^2_c$. 
This limitation goes even beyond constant shifts: \Cref{ex:OneVQNeeded} shows that even with all possible DQs, unique identification up to a constant shift is impossible.
Meanwhile, VQ-only models suffer from distributional shifts between test sets, reflected in the significant discrepancy in $R^2$ and MAE across the two test sets.
These shifts prevent VQ-trained models from capturing critical, high-value bundles due to the absence of utility-maximizing bundles in their training data.\footnote{At the start of a VQ-based auction, models are not accurate enough to target value-maximizing bundles.}

DQ-trained models generalize better to $\mathcal{T}_p$ than VQ-trained models, as $\mathcal{T}_p$ emphasizes high-value bundles critical for efficient allocations. 
This motivates initially training with DQs, as they provide global information about the allocation space and focus on high-value regions from the outset. 
These learning advantages are so pronounced that, as shown in \Cref{sec:ExperimentalResults}, MLHCA only needs to follow up its 40 DQs with at most 18 VQs to outperform the previous SOTA, which requires 100 VQs.

\section{The Mechanism}  \label{section:the_mechanism}
\begin{algorithm2e}[h!]
        \DontPrintSemicolon
        \SetKwInOut{parameters}{Parameters}
        \parameters{$\Qcca,\Qdq$, $\Qvq$ and $\pi$}
    $\Rvq, \Rdq  \gets (\{\})_{i=1}^N, (\{\})_{i=1}^N$ \;
    \For(\Comment*[f]{{Draw $\Qcca$ initial prices}}){$r=1,...,\Qcca$}{ \label{alg_line:qinit_start}
    $p^r \gets CCA(\Rdq)$ \; 
        \ForEach(\Comment*[f]{{Initial DQs}}){$i \in N$}
        {
        $\Rdq_i \gets \Rdq_i\cup\{(x^*_{i}(p^r),p^{r})\}$ \label{alg_line:qinit_p_response}
        }
        }
    \For(\Comment*[f]{{ML-powered DQs}}){$r=\Qcca+1,...,\Qcca + \Qdq$}{
        \ForEach{$i \in N$}
        {{$\MVNNi{} \gets $}  \textsc{MixedTraining}$(\Rdq_i, \Rvq_i)$ \label{alg_line:train_mvnns}\Comment*[r]{{\Cref{alg:mixed_training}}}
        } 
        $p^{r} \gets$ \textsc{NextPrice$(\left(\MVNNi{}\right)_{i=1}^n)$}\Comment*[r]{{
        \Cref{sec:ML-powered Demand Query Generation} 
        }}\label{alg_line:next_pv} 
        \ForEach{$i \in N$}{
         $\Rdq_i \gets \Rdq_i\cup\{(x^*_{i}(p^r),p^{r})\}$ \label{alg_line:dq_responses}
        }
        \If(\Comment*[f]{Market-clearing prices found}){$\sum\limits_{i=1}^n (x^*_{i}(p^r))_j= c_j\, \forall j\in M$}{
         $a^*(\Rdq, \Rvq) \gets (x^*_i(p^r))_{i=1}^n$\;
         $\pi(\Rdq, \Rvq)\gets(\pi_i(\Rdq, \Rvq))_{i=1}^n$ \;
        \Return{$a^*(\Rdq, \Rvq)$ and $\pi(\Rdq, \Rvq)$} \label{alg_line:return_clearing_allocation}
        }}
    \ForEach(\Comment*[f]{{Bridge bid}}){$i \in N$}{
         $\Rvq_i \gets \Rvq_i\cup\{(a^*_{i}(\Rdq, \Rvq), v_i(a^*_{i}(\Rdq, \Rvq)))\}$ \label{alg_line:bridge_bid}
        }
    \For(\Comment*[f]{{ML-powered VQs}}){$r=\Qcca+ \Qdq + 2,...,\Qcca + \Qdq + \Qvq$}{ \label{alg_line:vq_for_loop}
        \ForEach{$i \in N$}
        {{$\MVNNi{} \gets $}  \textsc{MixedTraining}$(\Rdq_i, \Rvq_i)$ \label{alg_line:train_mvnns_vqs}\Comment*[r]{{\Cref{alg:mixed_training}}}
        } 
        $a \gets \textsc{NextAllocation} \left ( \left(\MVNNi{}\right)_{i=1}^n, \Rdq, \Rvq \right ) $  \label{alg_line:next_alloc}
        \Comment*[r]{\Cref{sec:app:detailed_mechanism}}
        \ForEach{$i \in N$}{
        $\Rvq_i \gets \Rvq_i\cup\{(a_{i},v_i(a_i))\}$ \Comment*[f]{{Value query responses}} \label{alg_line:vq_responses}
        }
        }
    Calculate final allocation $a^*(\Rdq, \Rvq)$ as in \Cref{WDPFiniteReports}\; \label{alg_line:final_allocation}
    Calculate payments $\pi(\Rdq, \Rvq)$ \Comment*[r]{{E.g., VCG (\Appendixref{app:payment_methods}{Appendix~A})}} \label{alg_line:final_payments}
    \Return{$a^*(\Rdq, \Rvq)$ and $\pi(\Rdq, \Rvq)$} \label{alg_line:return_final_result}
    \caption{\small \textsc{MLHCA}($\Qcca,\Qdq, \Qvq, \pi$)}
    \label{alg:MLHCA}
\end{algorithm2e}

\vspace{-0.2cm}
In this section, we describe our \emph{ML-powered Hybrid Combinatorial Auction (MLHCA)}, which combines the auction and ML insights from \Cref{sec:auction_advantages,sec:learning_advantages_reduced}.

We present a simplified version of MLHCA in \Cref{alg:MLHCA}. 
In Lines \ref{alg_line:qinit_start} to \ref{alg_line:qinit_p_response}, we generate the first $\Qcca \in \N$ DQs using the price update rule of the CCA.
Similar to ML-CCA, we use larger price increments to arrive to similar prices as the ML-CCA in fewer rounds.
In each of the next $\Qdq \in \N$ ML-powered rounds, we first train, for each bidder, an mMVNN on her demand responses (\Cref{alg_line:train_mvnns}),
and call \textsc{NextPrice} \citep{Soumalias2024MLCCA} (see \Cref{sec:ML-powered Demand Query Generation}) to generate the next DQ based on the agents' trained mMVNNs  (\Cref{alg_line:next_pv}).
If MLHCA has found market-clearing prices, then the corresponding allocation is efficient and is returned, along with payments $\pi(R)$ according to the deployed payment rule  (\Cref{alg_line:return_clearing_allocation}). 
MLHCA is plug-and-play compatible with many payment rules, such as VCG and VCG-nearest. 
If, by the end of the ML-powered DQs, the market has not cleared 
we switch to VQ rounds. 
In the first VQ round (\Cref{alg_line:bridge_bid})
we ask each bidder for her \textit{bridge bid} (see \Cref{def:bridge_bid}).
This single VQ bid ensures that the MLHCA's efficiency is lower bounded by the efficiency after just the DQ rounds (\Cref{lemma:bridge_bid}). For a detailed experimental evaluation of the bridge bid see \Cref{sec:bridge_bid_evaluation}. 
In the final $\Qvq - 1$ VQ rounds, 
for each bidder, we query her value for the bundle she is allocated in the predicted optimal allocation (based on all ML models), 
under the constraint that she has not answered a VQ for that bundle in the past (Lines \ref{alg_line:next_alloc} to \ref{alg_line:vq_responses}).\footnote{
This VQ algorithm was introduced in \citet{brero2021workingpaper} and used in most follow-up work following the MLCA framework. 
}
The final allocation and payments are then determined based on all reports (Lines \ref{alg_line:final_allocation} to \ref{alg_line:final_payments}).
For details, please see \Cref{sec:app:detailed_mechanism}.

\section{Experiments}\label{sec:ExperimentalResults}

In this section, we experimentally evaluate MLHCA. 
We compare its efficiency against BOCA \citep{weissteiner2023bayesian} and ML-CCA \citep{Soumalias2024MLCCA}
the SOTA VQ-based and DQ-based ICAs, respectively. 

\subsection{Experiment Setup} \label{subsec:experiment_setup}
To generate synthetic CA instances, we use the \emph{spectrum auction test suite (SATS)} \citep{weiss2017sats}, which includes various value models (domains) designed to simulate different auction environments. Following standard practice in this line of research (e.g., \citet{Soumalias2024MLCCA, weissteiner2023bayesian}), we conduct experiments on the GSVM, LSVM, SRVM, and MRVM domains (see \Cref{subsec:appendix_SATS_domains} for details). 
SATS provides access to the true optimal allocation $a^* \in \mathcal{F}$, allowing us to measure the \textit{efficiency loss}, defined as $1 - V(a^* (R)) / V(a^*)$, where $R$ represents elicited reports.
We focus on efficiency rather than revenue, as do all mechanisms we compare against.
This is consistent with the primary application of ICAs in spectrum allocation, a government-run operation with a welfare-maximization mandate \citep{cramton2013spectrumauctions}. For results on revenue, see \Cref{subsec:appendix_revenue_results}.
To ensure a fair comparison with prior work, we limit all auction mechanisms to 100 total queries. These consist of 100 VQs for BOCA, 100 DQs for ML-CCA, and 40 DQs and 60 VQs for MLHCA. For BOCA and ML-CCA, we use the best mechanism configurations and hyperparameters reported in their respective papers.
For MLHCA's VQ rounds, we performed HPO separately for each bidder type in each domain, as detailed in \Cref{subsec:learning_advantages_experiments}. 
For the DQ rounds, we adopted the HPO parameters reported by \citet{Soumalias2024MLCCA}, since our learning algorithm, when restricted to DQs, is equivalent to theirs. 
For further experimental details and analysis of MLHCA's low computational costs,
please refer to \Cref{subsec:appendix_hpo,sec:compute_analysis} respectively.

\subsection{Efficiency Results}  \label{sec:efficiency_results}

\begin{table*}[t!]
    \renewcommand\arraystretch{1.2}
    \setlength\tabcolsep{2pt}
	\robustify\bfseries
	\centering
	\begin{sc}
    \begin{adjustbox}{max width=\textwidth}
    \small
    \begin{tabular}{l  c  c  c  c  c  c  c  c  c }
    \toprule
    &  \multicolumn{5}{c}{\textbf{Efficiency Loss in \%}}  & \multicolumn{3}{c}{\textbf{Queries to Reject Null Hypothesis}} \\
    \cmidrule(lr){2-6} \cmidrule(lr){7-9}
    \textbf{Domain} & \textbf{MLHCA} & \textbf{BOCA} & \textbf{ML-CCA\textsubscript{clock}} & \textbf{ML-CCA\textsubscript{raised}} & \textbf{CCA} & \textbf{BOCA $\geq$ MLHCA} & \textbf{ML-CCA\textsubscript{clock} $\geq$ MLHCA} & \textbf{ML-CCA\textsubscript{raised} $\geq$ MLHCA} \\
    \cmidrule(lr){1-1} \cmidrule(lr){2-6} \cmidrule(lr){7-9}
    \textbf{GSVM} & \ccell $0.00 \pm 0.00$ & --- & $1.77 \pm 0.68$ & $1.07 \pm 0.37$ & $9.60 \pm 1.49$ & --- & 42 & 42 \\
    \textbf{LSVM} & \ccell $0.04 \pm 0.07$ & $0.39 \pm 0.31$ & $8.36 \pm 1.70$ & $3.61 \pm 0.77$ & $17.44 \pm 1.60$ & 58 & 42 & 43 \\
    \textbf{SRVM} & \ccell $0.00 \pm 0.00$ & $0.06 \pm 0.02$ & $0.41 \pm 0.11$ & $0.07 \pm 0.02$ & $0.37 \pm 0.11$ & 42 & 42 & 42 \\
    \textbf{MRVM} & \ccell $4.81 \pm 0.57$ & $7.77 \pm 0.35$ & $6.94 \pm 0.24$ & $6.68 \pm 0.22$ & $7.53 \pm 0.48$ & 54 & 74 & 79 \\
    \bottomrule
    \end{tabular}
    \end{adjustbox}
    \end{sc}
    \vskip -0.1 in
    \caption{MLHCA (40DQs + 60VQs) vs BOCA (100VQs), ML-CCA (ML-CCA\textsubscript{clock}) (100DQs) and ML-CCA with raised clock bids (ML-CCA\textsubscript{raised}) (100DQs and up to 100VQs). Shown are averages and a 95\% CI. Winners based on a $t$-test with significance level of 5\% are marked in grey. 
    }
    \label{tab:mlhca_vs_boca_mlcca_combined_redacted}
    \vskip -0.3cm
\end{table*}

\begin{figure*}[t!]
    \begin{center}
    \resizebox{1.05\textwidth}{!}{
    \includegraphics[trim=10pt 0pt 0pt 0pt, clip]{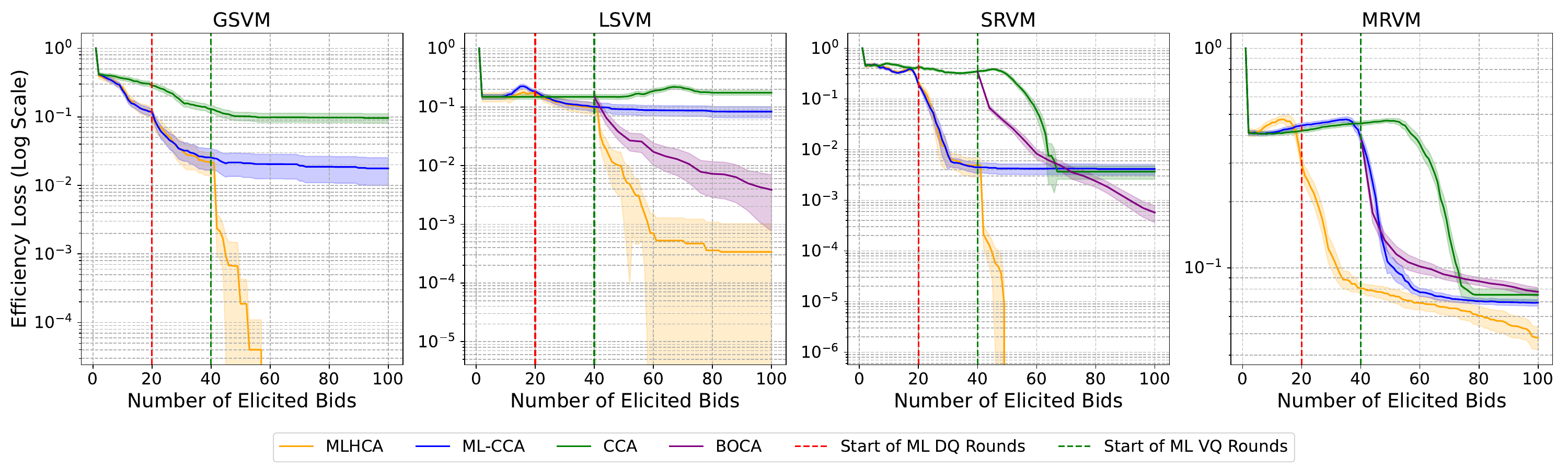}
    }
    \vspace{-0.5cm}
    \caption{Efficiency loss paths (i.e., regret plots) of MLHCA compared to BOCA, ML-CCA and CCA. 
    Shown are averages over 50 instances with 95\% CIs.}
    \label{fig:efficiency_path_plot_summary}
    \end{center}
    \vskip -0.2 in
\end{figure*}

In \Cref{tab:mlhca_vs_boca_mlcca_combined_redacted}, we show the average efficiency loss of each mechanism after $100$ queries. 
For ML-CCA, we also report results if it were supplemented with the clock bids raised heuristic (see \Cref{sec:benchmark_icas}), which would involve up to an additional $100$ VQs per bidder.\footnote{
In the clock bids raised heuristic, the bidders only need to report their value for each \textit{unique} bundle they bid on during the auction, which, for $100$ DQs, can be up to $100$ bundles. 
} 
Finally, we report the number of queries that MLHCA requires to outperform the final efficiency of each other mechanism, i.e., in GSVM, with $42$ queries ($40$ DQs and $2$ VQs) MLHCA statistically outperforms ML-CCA, even if ML-CCA were supplemented with $100$ VQs from the clock bids raised heuristic.

In \Cref{tab:mlhca_vs_boca_mlcca_combined_redacted}, we observe that MLHCA significantly outperforms all other mechanisms across all domains. Notably, MLHCA is the \textit{only} mechanism capable of achieving a perfect $100$\% efficiency in SRVM. Remarkably, it accomplishes this with fewer than $60$ queries, while the other mechanisms fail even with $100$ queries.
In the LSVM domain, MLHCA achieves a 10-fold reduction in efficiency loss compared to BOCA, the previous SOTA. The most realistic domain, MRVM further highlights MLHCA's superiority. %
Here, MLHCA exceeds the efficiency of \emph{all} other mechanisms by over $2$\% points, making MLHCA the first mechanism to substantially outperform CCA.
MRVM simulates the 2014 Canadian spectrum auction \citep{weiss2017sats} with a revenue of USD 5.27 billion \citep{ausubel2017practical}, where 2\% points correspond to over USD 100 million.

Speed of convergence is another critical factor in these auctions. In all domains, MLHCA requires at most $74$ queries ($40$ DQs and $34$ VQs) to statistically outperform the final efficiency of both BOCA and ML-CCA, which use $100$ VQs and $100$ DQs, respectively. Furthermore, in three out of four domains, MLHCA surpasses the $100$ DQ efficiency of ML-CCA with only $40$ DQs and $2$ VQs.
These results align with our theoretical analysis in \Cref{sec:theoretical_analysis_combining_dqs_vqs}, where we show that, once DQs have sufficiently informed the bidders' value functions, a single VQ can lead to $100$\% efficiency.

\Cref{fig:efficiency_path_plot_summary} illustrates the efficiency loss path for all domains, highlighting MLHCA's consistent superiority. Up to query $40$, MLHCA and ML-CCA perform identically since both mechanisms employ the same DQs and network configurations during these rounds. However, after query $40$, MLHCA’s integration of VQs leads to a marked reduction in efficiency loss compared to ML-CCA, 
aligning with our insights on the efficiency of VQs and on the learning advantages of combining DQs and VQs (\Cref{sec:auction_advantages,sec:learning_advantages_reduced}). Across all domains, MLHCA also consistently outperforms BOCA, 
leveraging the early-stage advantages of DQs when ML models are still being quite uninformed and the later-stage learning advantages of combining DQs and VQs.

In summary, MLHCA outperforms both DQ-based and VQ-based SOTA mechanisms in terms of both efficiency and speed of convergence, achieving high efficiency with fewer queries. This makes MLHCA a powerful and practical choice for real-world auction scenarios where high efficiency and 
rapid convergence are crucial.
These empirical findings not only highlight the efficiency and convergence speed of MLHCA but also closely align with our theoretical insights. In the next section, we analyze how these results validate the predictions and theoretical guarantees established in this paper. 

\subsection{Alignment with Theoretical Insights} \label{Sec:experiments_theoretical_alignment}
\Cref{fig:efficiency_path_plot_summary} further validates our theoretical findings. The non-monotonicity of DQ-based mechanisms, as suggested in \Cref{le:EfficencyDropDQ}, is evident in the efficiency loss path of both the CCA and the ML-CCA. Notably, in the LSVM domain, the CCA achieves higher \textit{average} efficiency after just 5 DQs compared to 100. Additionally, the comparison between BOCA and ML-CCA underscores the inefficiency of random VQs in the early stages (\Cref{le:OnlyVQsBadForSparseValueFunction}), particularly in the MRVM domain, where BOCA’s efficiency loss is orders of magnitude worse than that of mechanisms employing ML-powered DQs.
Finally, MLHCA’s performance after query 40 demonstrates the potential efficiency gains of supplementing DQs with VQs. The switch to ML-powered VQs results in a dramatic reduction in efficiency loss—by several orders of magnitude in the GSVM and SRVM domains—while the DQ-based ML-CCA, which was identical to MLHCA up to that point, stagnates. This aligns with \Cref{thm:DQsNotSufficient}, which proves that once ML models effectively capture bidder preferences, VQs can dramatically enhance efficiency. In contrast, ML-CCA’s reliance on DQs prevents further improvements, even with well-trained models.

\begin{figure}[h!]
    \begin{center}
    \vskip -0.3cm
        \centering
        \resizebox{0.49\textwidth}{!}{
        \includegraphics[trim=37 5 45 45, clip]{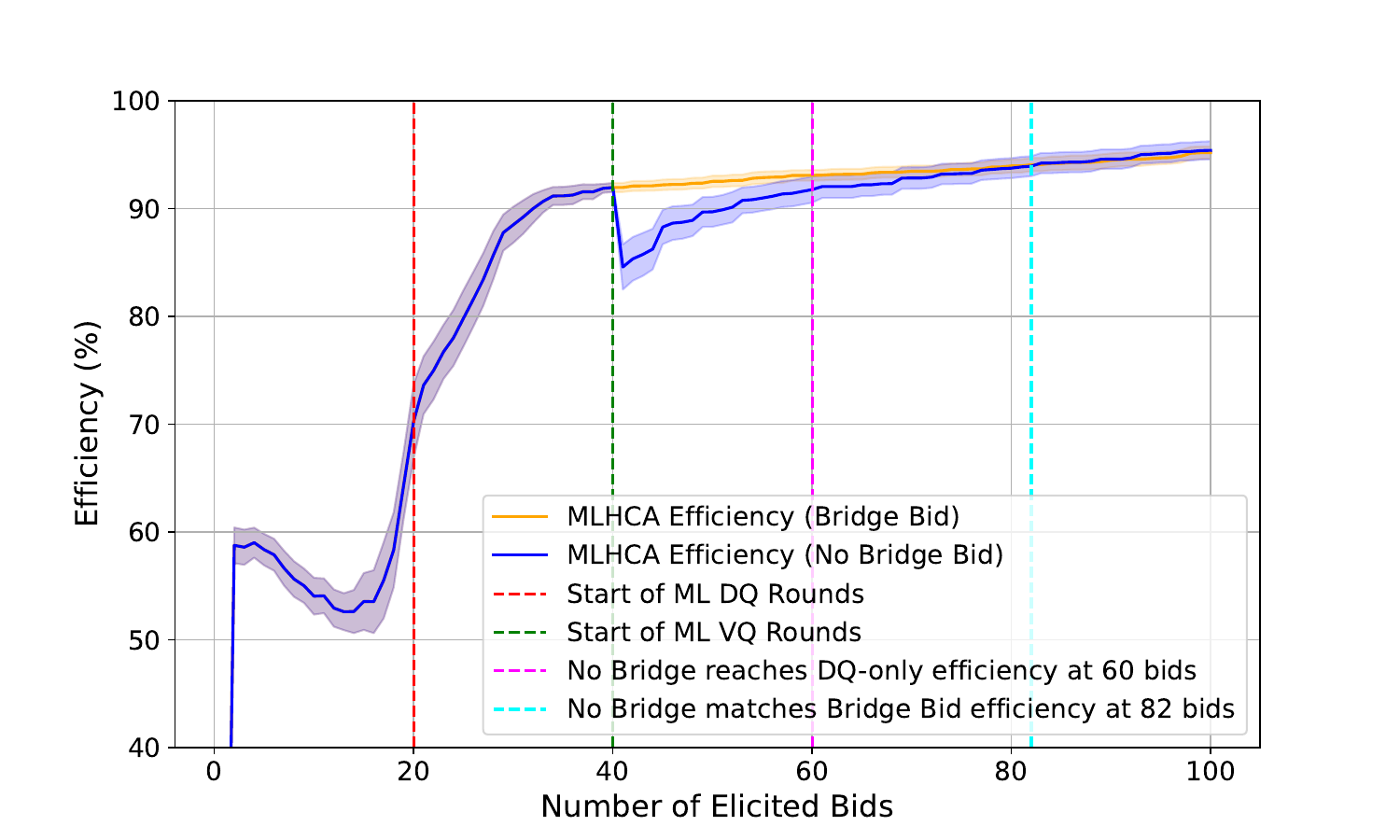}
        }
        \vspace{-0.55cm}
        \caption{Efficiency of MLHCA with and without the bridge bid (\Cref{def:bridge_bid}) in the MRVM domain.}
        \vspace{-0.4cm}
        \label{fig:mrvm_bridge_bid_comparison_main}
    \end{center}
\end{figure}

To demonstrate the effectiveness of the bridge bid, in \Cref{fig:mrvm_bridge_bid_comparison_main}, we plot MLHCA’s efficiency in MRVM--the most realistic domain--against the number of bids, comparing performance with and without the bridge bid. 
Without the bridge bid, MLHCA's efficiency drops by 7.3\% points when it transitions to its VQ rounds. 
Notably, MLHCA requires 20 of our powerful ML-powered VQs just to recover the efficiency lost by the introduction of the first VQ.
This is consistent with \Cref{le:mixed_auction_efficiency_drop}, where we showed that efficiency can arbitrarily decrease when a VQ is introduced in a DQ-based auction.
In contrast, the bridge bid completely mitigates this efficiency drop, as proven in \Cref{lemma:bridge_bid}.
In \Cref{sec:bridge_bid_evaluation}, we provide a detailed analysis and explaining the bridge bid's efficacy relative to market competition.

Finally, in \Cref{subsec:experiment_inverse_query_order}, we experimentally evaluate the \textit{Inverse} variant of MLHCA, which uses the inverse query order: it begins with VQs and then transitions to DQs.
Across all tested domains, reversing the query order results in substantial efficiency losses, reaching up to $5$ percentage points. 
In the inverse auction, ML-powered DQs fail to improve upon the efficiency achieved by the preceding VQs. Moreover, the early use of VQs alone cannot match the efficiency attained by the later-stage VQs in MLHCA, due to significantly weaker learning performance when the bidders' models have not been trained on both query types.
These findings further reinforce our theoretical results on the critical role of query ordering in hybrid auctions.

\section{Conclusion}

We have introduced MLHCA, the first ICA to effectively combine both demand and value queries. By employing tailored query generation algorithms, incorporating the full information from both query types, and leveraging the theoretical insights developed in this work, MLHCA significantly outperforms current SOTA mechanisms across all tested domains and with significantly fewer queries. Notably, prior to MLHCA, the best-performing mechanism varied by domain, but MLHCA unifies the SOTA, delivering the best performance across all domains.

At first glance, it might seem obvious that combining DQs and VQs improves performance. However, one of the key insights of our work is that the \textit{ordering of queries matters.} DQs provide broad but imprecise information across the entire space, while VQs offer targeted, precise information. As a result, DQs are more effective at the beginning of an auction, while VQs become advantageous once the auction's ML model has already been trained for a while.
A second insight is that combining both query types requires careful handling. The efficiency of an auction using both DQs and VQs is non-monotone with respect to answered queries, as DQ responses establish lower bounds on bidders' valuations for queried bundles. Naively combining the two can lead to sharp efficiency drops, particularly in low-competition scenarios. However, by introducing a single, carefully-designed VQ, we can mitigate this effect and guarantee that the auction’s efficiency does not fall below its DQ-only value.

A promising direction for future work is incorporating epistemic uncertainty into MLHCA to enhance efficiency. Another is developing an algorithm to dynamically determine the optimal switch to VQs, reducing cognitive load.

\newpage
\clearpage 

\section*{Acknowledgments}
We are grateful to Greg d'Eon, Bin Yu, Josef Teichmann, and Denise Künzli for helpful discussions and their support.
This work was supported by the Swiss National Science Foundation (SNSF) Postdoc.Mobility fellowship  [grant number P500PT\_225356] and ETH Zürich.

Jakob Heiss thanks the Yu group and the Department of Statistics at UC Berkeley for hosting him while we were finishing the work. 
\section*{Impact Statement}
This paper advances the field of iterative combinatorial auctions (ICAs) by introducing MLHCA, a novel machine learning-powered auction mechanism that achieves unprecedented efficiency while reducing bidders’ cognitive load. The primary goal of this work is to enhance the practicality and efficiency of real-world auctions, such as those used in spectrum allocation, with large potential benefits for social welfare. By enabling more accessible and efficient auctions, MLHCA has the potential to positively impact market design, increasing participation and improving resource allocation across various domains.

The methods proposed rely on standard ML and optimization techniques, and we do not foresee immediate ethical concerns arising from their application. However, as in any setting involving self-interested agents, potential conflicts of interest between participants may arise. 
While such concerns are important in practice (as they are for any auction mechanism), addressing them lies outside the scope of this paper.

\bibliographystyle{icml2025}

\newpage
\appendix
\onecolumn
\section{Extended Preliminaries and Literature Review}

\subsection{Extended Literature Review} \label{app:extendedLiteratureReview}
In addition to the related work mentioned in \Cref{sec:Introduction}, we also want to mention some further recent work on ML-based ICAs.

\citet{estermann2023deep} use more diverse VQs for the initial VQs. They show that this diversity leads to higher efficiency than just asking initial VQs for i.i.d.{} uniformly random bundles. However, this does not solve the problem of it being cognitively very hard for bidders to answer these VQs that are not aligned with their preferences. Moreover, their efficiency results are outperformed by our MLHCA.

\citet{maruo2024efficientpreferenceelicitationiterative} uses multi-task learning to transfer knowledge to improve the generalization of the MVNNs by leveraging similarities among the value functions across bidders. This technique should also be compatible with our MLHCA. Thus, it would be an interesting direction for future work to incorporate multi-task learning into MLHCA and to evaluate how much this would improve efficiency. From a game-theoretical perspective, one should think very carefully if multi-task learning could change the incentives of bidders. From a game-theoretical perspective, one would achieve incentive-alignment with social welfare, if each bidder $i$ cannot change the marginal efficiency of the economy $N\setminus\{i\}$ (see \Cref{sec:marginal_economies}). For MLCA, 3 out of 4 VQs actually query these marginal economies, such that $\MVNNi{}$ has no direct influence on these queries, which provides quite a strong game-theoretical argument. Via multi-task learning, bidders have a more direct way to influence other bidders' models. While multi-task learning is a very promising direction to explore, one should be aware of potential game-theoretical risks imposed by multi-task learning.

\citet{lubin2021imlcamachinelearningpowerediterative} allow bidders to answer VQs with an interval over prices instead of an exact price. It would be interesting to combine this approach with MLHCA in future work.

\citet{weissteiner2023integrating} and \citet[Section~4.4]{HeissInductiveBias2024} provide a broader picture on ML-based ICAs.

\citet{huang2025accelerated} explore how LLMs can be leveraged to create a new interaction paradigm for auctions, where the bidders interact with the mechanism by providing only natural language input. 

Another related line of research is \textit{mechanism design for LLMs}, where participants bid to affect the output of an ML model, specifically an LLM, e.g.  \citet{duetting2024mechanism,soumalias2024truthful}. 

\citet{GregRLCA10.1145/3670865.3673644,Almahdi_Mohammed_Attia_2025} apply reinforcement learning algorithms to combinatorial auctions to better understand bidder strategies. Extending this line of work to mechanisms such as MLHCA would be an interesting direction for future research.

\subsection{A Machine Learning-Powered ICA}\label{sec:appendix_A Machine Learning powered ICA}
In this section, we present in detail the \textit{machine learning-powered combinatorial auction (MLCA)} by \citet{brero2021workingpaper}.

At the core of MLCA is a \textit{query module} (Algorithm~\ref{alg:QueryModule}), which, for each bidder $i\in I\subseteq N$, determines a new value query $q_i$. First, in the \textit{estimation step} (Line 1), an ML algorithm $\mathcal{A}_i$ is used to learn bidder $i$'s valuation from reports $R_i$. Next, in the \textit{optimization step} (Line 2), an \textit{ML-based WDP} is solved to find a candidate $q$ of value queries. In principle, any ML algorithm $\mathcal{A}_i$ that allows for solving the corresponding ML-based WDP in a fast way could be used. Finally, if $q_i$ has already been queried before (Line 4), another, more restricted ML-based WDP (Line 6) is solved and $q_i$ is updated correspondingly. This ensures that all final queries $q$ are new.
\setlength{\textfloatsep}{5pt}%
\begin{algorithm2e}[h!]
        \DontPrintSemicolon
        \SetKwInOut{inputs}{Inputs}
        \inputs{~Index set of bidders $I$ and reported values $R$}
    \lForEach(\Comment*[f]{\color{blue}Estimation step}){$i \in I$}{
    \hspace{-0.07cm}Fit $\mathcal{A}_i$ on $R_i$: $\mathcal{A}_i[R_i]$
    }
    Solve $q \in \argmax\limits_{a \in {\F}}\sum\limits_{i \in I} \mathcal{A}_i[R_i](a_i)$ \hspace{-0.03cm}\Comment*[r]{\color{blue}Optimization step}
    \ForEach{$i \in I$}{
        \If(\Comment*[f]{\color{blue} Bundle already queried}){$(q_i,\hvi{q_i})\in R_i$}{ 
        Define $\pF=  \{a\in \F : a_i \neq x, \forall (x,\hvi{x})\in R_i\}$\;
        Re-solve $\pq \in \argmax_{a \in \pF}\sum_{l \in I} \mathcal{A}_l[R_l](a_l)$\;
        Update $q_i = \pqi\;$
        }
    }
    \Return{Profile of new queries $q=(q_1,\ldots,q_n)$}
    \caption{\textsc{NextQueries}$(I,R)$\, {\scriptsize (Brero et al. 2021)}}
    \label{alg:QueryModule}
\end{algorithm2e}

In Algorithm~\ref{MLCA}, we present \textsc{Mlca}. In the following, let $R_{-i}=(R_1,\ldots,R_{i-1},R_{i+1},\ldots, R_n)$. \textsc{Mlca} proceeds in rounds until a maximum number of queries per bidder $\Qmax$ is reached. In each round, it calls Algorithm \ref{alg:QueryModule}  $(\Qround-1)n+1$ times: for each bidder $i\in N$, $\Qround-1$ times excluding a different bidder $j\neq i$ (Lines 5--10,  sampled \textit{marginal economies}) and once including all bidders (Line 11, \textit{main economy}). In total each bidder is queried $\Qround$ bundles per round in \textsc{MLCA}. At the end of each round, the mechanism receives reports $\Rnew$ from all bidders for the newly generated queries $\qnew$ and updates the overall elicited reports $R$ (Lines 12--14). In Lines 16--17, \textsc{Mlca} computes an allocation $a^*_R$ that maximizes the \emph{reported} social welfare (see \Cref{WDPFiniteReports}) and determines VCG payments $\pi^{\text{\tiny VCG}}(R)$ based on the reported values $R$ (see \Appendixref[Appendix ]{def:vcg_payments}{Definition~B.1}).
\setlength{\textfloatsep}{5pt}%
\begin{algorithm2e}[t!]
        \DontPrintSemicolon
        \SetKwInOut{parameters}{Params}
        \parameters{$\Qinit,\Qmax,\Qround$ {initial, max and \#queries/round}}
    \ForEach{$i \in N$}{Receive reports $R_i$ for $\Qinit$ randomly drawn bundles}
    \For(\Comment*[f]{\hspace{-0.05cm}\color{blue}Round iterator}){$k=1,...,\floor{(\Qmax-\Qinit)/\Qround}$}{
        \ForEach(\Comment*[f]{\color{blue}Marginal economy queries}){$i \in N$}{
            {Draw uniformly without replacement $(\Qround\hspace{-0.1cm}-\hspace{-0.1cm}1)$ bidders from $N\setminus\{i\}$ and store them in $\tilde{N}$}\;
            \ForEach{$j \in \tilde{N}$}{
            $\qnew=\qnew\cup$ \textsc{NextQueries$(N\setminus\{j\},R_{-j})$}
            }
        }
        $\qnew=\qnew\cup$ \textsc{NextQueries$(N,R)$} \Comment*[r]{\color{blue}Main economy queries}
        \ForEach{$i \in N$}{
         Receive reports $\Rnewi$ for $\qnew_i$, set $R_i=R_i\cup\Rnewi$
        }
    }
    Given elicited reports $R$ compute $a^*_{R}$ as in \Cref{WDPFiniteReports}\;
    Given elicited reports $R$ compute VCG-payments $\pi^{\text{\tiny VCG}}(R)$\;
    \Return{Final allocation $a^*_{R}$ and payments $\pi^{\text{\tiny VCG}}(R)$}
    \caption{\small \textsc{Mlca}($\Qinit,\Qmax,\Qround$)\, {\scriptsize (Brero et al. 2021)}}
    \label{MLCA}
\end{algorithm2e}

\subsection{ML-Powered Demand Query Generation}\label{sec:ML-powered Demand Query Generation}
In this section, we reprint the ML-powered demand query generation algorithm from \citet{Soumalias2024MLCCA}.
The critical notions behind the idea are those of indirect utility and revenue and clearing prices. 

\begin{definition}[Indirect Utility and Revenue]\label{def:Indirect_Utility_and_Revenue}
    For linear prices $p \in \R_{\ge0}^m$, a bidder's indirect utility $U$ and the seller's indirect revenue $R$ are defined as
    \begin{align}
    &U(p,v_i)\coloneqq \max\limits_{x\in \X}\left\{v_i(x)-\iprod{p}{x}\right\} \text{ and } \label{eq:indirect_utility}\\
    &R(p)\coloneqq\max\limits_{a\in \F}\left\{\sum\limits_{i\in N}\iprod{p}{a_i}\right\}\stackrel{\textsuperscript{\ref{foot:Rlinear}}}{=}\sum\limits_{j\in M}c_j p_j, \label{eq:indirect_revenue}
    \end{align}
    i.e., at prices $p$, \Cref{eq:indirect_utility,eq:indirect_revenue} are the maximum utility a bidder can achieve for all $x\in \X$ and the maximum revenue the seller can achieve among all feasible allocations.
\end{definition}
\begin{definition}[Clearing Prices]\label{def:linear_clearing_prices}
    Prices $p \in \R_{\ge 0}^m$ are \emph{clearing prices} if there exists an allocation $a(p)\in \mathcal{F}$ such that
    \begin{enumerate}
    \item \label{itm:clearing_bidder}  for each bidder $i$, the bundle $a_i(p)$ maximizes her utility, i.e., $v_i(a_i(p))-\iprod{p}{a_i(p)}=U(p,v_i),\forall i \in N$, and
    \item \label{itm:clearing_seller} the allocation $a(p)\in \F$ maximizes the sellers revenue, i.e., $\sum_{i\in N}\iprod{p}{a_i(p)}=R(p)$.\footnote{\label{foot:Rlinear}For linear prices, %
    this maximum is achieved by selling every item, i.e., $\forall j\in M: \sum_{i\in N}(a_{i})_j=c_j$.}
    \end{enumerate}
\end{definition}

\Cref{thm:app:connection_clearing_prices_efficiency_constrainedVersion} extends \citet[Theorem 3.1]{bikhchandani2002package}, establishing a connection between the aforementioned definitions:

\begin{theorem}[\citet{Soumalias2024MLCCA}]\label{thm:app:connection_clearing_prices_efficiency_constrainedVersion}
    Consider the notation from \Cref{def:Indirect_Utility_and_Revenue,def:linear_clearing_prices}
    and the objective function $W(p,v)\coloneqq R(p) + \sum_{i \in N}U(p,v_i)$.
    Then it holds that, if a linear clearing price vector exists, every price vector
    \begin{subequations}\label{eqs:ConstraintWmin}
    \begin{align}\label{eq:constrained_clearing_objective}
    p^{\prime} \in &\argmin_{\tilde{p}\in \R_{\ge 0}^m}  &&W(\tilde{p},v)\\
    & \text{such that}&&(x^*_i(\tilde{p}))_{i\in N}\in \F\label{eq:app:corollary_constraint}
    \end{align}
    \end{subequations}
    is a clearing price vector and the corresponding allocation $a(p^{\prime})\in \F$ is \emph{efficient}.\footnotemark
\end{theorem}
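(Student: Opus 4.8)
The statement is a strong-duality fact: $W(\cdot,v)$ is, up to the reparametrization $\pi_i=U(p,v_i)$, the dual objective of the LP relaxation of the winner-determination problem, and the hypothesis that a linear clearing price exists is precisely what makes the duality gap vanish. I would not invoke LP machinery explicitly, though; everything can be done via elementary chains of inequalities. The first ingredient is \emph{weak duality}: for every $p\in\R^m_{\ge0}$ and every $a\in\F$,
\[
V(a)=\sum_{i\in N}\bigl(v_i(a_i)-\iprod{p}{a_i}\bigr)+\iprod{p}{\textstyle\sum_{i\in N}a_i}\le\sum_{i\in N}U(p,v_i)+R(p)=W(p,v),
\]
where the inequality uses the defining maximality of each $U(p,v_i)$ together with $\iprod{p}{\sum_{i}a_i}\le\iprod{p}{c}=R(p)$, which holds since $a$ is feasible and $p\ge0$. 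In particular $W(p,v)\ge V(a^*)$ for all $p$. (Finiteness of $\X$ guarantees the maxima defining $U$ and $R$ are attained.)

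Next I would record the ``first welfare theorem'' direction. If $p$ is any clearing price with clearing allocation $a(p)$, then using the bidder-optimality conditions (Item~\ref{itm:clearing_bidder}) and the revenue-maximization condition (Item~\ref{itm:clearing_seller}),
\[
W(p,v)=R(p)+\sum_{i\in N}U(p,v_i)=\sum_{i\in N}\iprod{p}{a_i(p)}+\sum_{i\in N}\bigl(v_i(a_i(p))-\iprod{p}{a_i(p)}\bigr)=V(a(p)).
\]
Combining with weak duality, $V(a^*)\le W(p,v)=V(a(p))\le V(a^*)$, so every clearing allocation is efficient and $W(p,v)=V(a^*)$ at every clearing price. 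This reduces the final clause of the theorem to showing that $p'$ is a clearing price. It also shows that the constrained program \eqref{eqs:ConstraintWmin} is nonempty with optimal value $V(a^*)$: a clearing price $p^{\mathrm{cl}}$ is feasible for \eqref{eq:app:corollary_constraint}, since its clearing allocation $a(p^{\mathrm{cl}})$ is a profile of utility-maximizing bundles lying in $\F$, and $W(p^{\mathrm{cl}},v)=V(a^*)$ is the smallest value $W$ can take by weak duality. Hence any constrained minimizer $p'$ satisfies $W(p',v)=V(a^*)$.

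The core step is then to show that \emph{any} $p'$ with $W(p',v)=V(a^*)$ is a clearing price. I would do this by transporting a \emph{given} efficient clearing allocation $a^{\mathrm{cl}}$ (at a clearing price $p^{\mathrm{cl}}$, which exists by hypothesis) to the price vector $p'$:
\[
V(a^*)=W(p',v)=R(p')+\sum_{i\in N}U(p',v_i)\ \ge\ R(p')-\iprod{p'}{\textstyle\sum_{i\in N}a^{\mathrm{cl}}_i}+V(a^{\mathrm{cl}})\ \ge\ V(a^{\mathrm{cl}})=V(a^*),
\]
using $U(p',v_i)\ge v_i(a^{\mathrm{cl}}_i)-\iprod{p'}{a^{\mathrm{cl}}_i}$ and then $\iprod{p'}{\sum_i a^{\mathrm{cl}}_i}\le R(p')$. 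Equality throughout forces (i) $U(p',v_i)=v_i(a^{\mathrm{cl}}_i)-\iprod{p'}{a^{\mathrm{cl}}_i}$ for every $i$, i.e.\ each $a^{\mathrm{cl}}_i$ is utility-maximizing at $p'$; and (ii) $\iprod{p'}{\sum_{i}a^{\mathrm{cl}}_i}=R(p')$. Thus $a^{\mathrm{cl}}$ satisfies both conditions of \Cref{def:linear_clearing_prices} at $p'$, so $p'$ is a clearing price, and by the previous paragraph any clearing allocation $a(p')$ satisfies $V(a(p'))=W(p',v)=V(a^*)$, i.e.\ is efficient.

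I expect the main obstacle to be this last step, and in particular the temptation to argue directly about the demanded profile $(x_i^*(p'))_{i\in N}$: that profile need not be efficient — with unfavorable tie-breaking it can leave positively-priced items unsold — so one must instead carry along the externally-supplied clearing allocation $a^{\mathrm{cl}}$ and extract the two clearing conditions from the tightness of the inequality chain. Note also that the feasibility constraint \eqref{eq:app:corollary_constraint} plays only a bookkeeping role: it keeps the feasible set of \eqref{eqs:ConstraintWmin} nonempty and, by the argument above, does not change the optimal value, so the theorem would remain true with the same proof even if that constraint were dropped.
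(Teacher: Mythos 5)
Your proof is correct. Note that the paper does not prove this theorem itself---it is imported from \citet{Soumalias2024MLCCA}, and the proof there follows essentially the same route you take: weak duality ($W(p,v)\ge V(a^*)$ for all $p\ge 0$), equality at any clearing price via the two clearing conditions, and then extraction of the clearing conditions at a minimizer $p'$ from tightness of the inequality chain, using an externally supplied clearing allocation. The one genuinely delicate point---that the demanded profile $(x^*_i(p'))_{i\in N}$ witnessing constraint~\eqref{eq:app:corollary_constraint} need not itself be revenue-maximizing under adverse tie-breaking, so one must carry along the clearing allocation $a^{\mathrm{cl}}$ rather than argue about the demanded profile directly---is exactly right, and your closing remark that the constraint does not alter the optimal value or the minimizer set when clearing prices exist is also correct (the constraint only matters for the algorithm's behavior when they do not).
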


\footnotetext{{\label{foot:PrecisFormulationConstraint}More precisely, constraint~\meqref{eq:app:corollary_constraint} should be reformulated as \[\exists \left(x^*_i(\tilde{p})\right)_{i\in N}\in \bigtimes_{i\in N}\X^*_i(\tilde{p}) : \left(x^*_i(\tilde{p})\right)_{i\in N} \in \F
    ,\] where $\X^*_i(\tilde{p}):=\argmax_{x\in \X}\left\{\hvi{x}-\iprod{\tilde{p}}{x}\right\}$, since in theory, $x^*_i(\tilde{p})$ does not always have to be unique.}}
\Cref{thm:app:connection_clearing_prices_efficiency_constrainedVersion} does not claim the existence of \emph{linear clearing prices (LCPs)} $p \in \R_{\ge 0}^m$. For general
value functions $v$, LCPs may not exist \citep{bikhchandani2002package}. 
However, in the case that LCPs do exist, \Cref{thm:app:connection_clearing_prices_efficiency_constrainedVersion} shows that \textit{all} minimizers of \cref{eqs:ConstraintWmin} are LCPs and their corresponding allocation is efficient. This is at the core of their ML-powered demand query generation algorithm.

Their key idea to generate ML-powered demand queries is the following: As an approximation for the true value function $v_i$, they use for each bidder a distinct mMVNN $\MVNNi{}:\X \to \R_{\ge0}$ that has been trained on the bidder's elicited DQ data $R_i$. 
Motivated by \Cref{thm:app:connection_clearing_prices_efficiency_constrainedVersion}, they then try to find the DQ $p\in \R_{\ge 0}^m$ minimizing $W(p,\left(\MVNNi{}\right)_{i=1}^n)$ subject to the feasibility constraint~\meqref{eq:app:corollary_constraint}. 
This way, we find demand queries $p\in \R_{\ge 0}^m$ which, given the already observed demand responses $R$, have high clearing potential.

Note that \cref{eqs:ConstraintWmin} is a hard, bi-level optimization problem. 
Instead, \Cref{thm:GD_on_W} allows them to minimize the problem via gradient descent:

\begin{theorem}[\citep{Soumalias2024MLCCA}]\label{thm:GD_on_W}
Let $\left(\MVNNi{}\right)_{i=1}^n$ be a tuple of trained mMVNNs and let $\hat{x}^*_i(p)\in \argmax_{x\in \X}\left\{\MVNNi{x}-\iprod{p}{x}\right\}$ denote each bidder's predicted utility maximizing bundle w.r.t. $\MVNNi{}$. Then it holds that $p\mapsto W(p,\left(\MVNNi{}\right)_{i=1}^n)$ is \emph{convex}, \emph{Lipschitz-continuous} and \emph{a.e.
differentiable}. Moreover, 
\begin{equation}
    c-\sum_{i\in N}\hat{x}^*_i(p)\in \subgrad_p W(p,\left(\MVNNi{}\right)_{i=1}^n)
\end{equation}
is always a sub-gradient and a.e.\ a classical gradient.
\end{theorem}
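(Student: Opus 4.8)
The plan is to exploit the explicit decomposition $W(p,(\MVNNi{})_{i=1}^n)=R(p)+\sum_{i\in N}U(p,\MVNNi{})$ together with the fact that the bundle space $\X$ is \emph{finite}, so that each indirect utility $U(p,\MVNNi{})=\max_{x\in\X}\{\MVNNi{x}-\iprod{p}{x}\}$ is a pointwise maximum of \emph{finitely many affine} functions of the price vector $p$. First I would dispatch the linear term: recall that $R(p)=\iprod{c}{p}$ (see \eqref{eq:indirect_revenue}), which is linear (hence convex), globally $\|c\|_2$-Lipschitz, and everywhere differentiable with $\nabla_p R(p)=c$; this is the source of the $+c$ in the claimed sub-gradient. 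Next, each piece $p\mapsto\MVNNi{x}-\iprod{p}{x}$ is affine with gradient $-x$, and $\|x\|_2\le\|c\|_2$ for all $x\in\X$ since $0\le x_j\le c_j$. A pointwise maximum of convex (resp.\ $\|c\|_2$-Lipschitz) functions is convex (resp.\ $\|c\|_2$-Lipschitz), so each $U(\cdot,\MVNNi{})$ has these properties, and therefore so does the finite sum $W(\cdot,(\MVNNi{})_{i=1}^n)$ on all of $\R^m$.

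A.e.\ differentiability then follows from a standard fact of convex analysis: a finite-valued convex function on $\R^m$ is locally Lipschitz, hence differentiable Lebesgue-a.e.\ (Rademacher's theorem; in fact its non-differentiability set lies in a countable union of hypersurfaces).

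For the sub-gradient, I would apply the standard sub-differential calculus for maxima of affine functions, namely
\[
\subgrad_p U(p,\MVNNi{})=\operatorname{conv}\Bigl\{-x:\ x\in\argmax_{x'\in\X}\bigl\{\MVNNi{x'}-\iprod{p}{x'}\bigr\}\Bigr\},
\]
which in particular gives $-\hat x^*_i(p)\in\subgrad_p U(p,\MVNNi{})$ for \emph{any} choice of maximizer $\hat x^*_i(p)$. Adding $\nabla_p R(p)=c$ and using that a sum of sub-gradients is always a sub-gradient of the sum (indeed the sum rule holds with equality for finite-valued convex functions), I obtain $c-\sum_{i\in N}\hat x^*_i(p)\in\subgrad_p W(p,(\MVNNi{})_{i=1}^n)$.

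Finally, at every $p$ where $W(\cdot,(\MVNNi{})_{i=1}^n)$ is differentiable --- a full-measure set by the second paragraph --- its sub-differential is the singleton $\{\nabla_p W(p,(\MVNNi{})_{i=1}^n)\}$, so the sub-gradient exhibited above must coincide with the classical gradient. (Alternatively one argues directly: differentiability of $U(\cdot,\MVNNi{})$ at $p$ forces the maximizer to be unique, since distinct bundles $x\ne x'$ give affine pieces with distinct gradients $-x\ne-x'$; hence $\nabla_p U(p,\MVNNi{})=-\hat x^*_i(p)$ there.) I do not anticipate any genuinely hard step; the argument is essentially convex-analysis bookkeeping. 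The only points needing care are making explicit that finiteness of $\X$ is what reduces everything to maxima of finitely many affine functions (so no general Danskin-type theorem is required), and the tie-handling in the last step: ``$c-\sum_{i\in N}\hat x^*_i(p)$ is a sub-gradient'' holds under \emph{every} tie-breaking rule, whereas ``it is the classical gradient'' holds only on the a.e.\ set where no ties occur, which is precisely where $W$ is differentiable.
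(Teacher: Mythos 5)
Your proof is correct. Note that this paper does not actually supply its own proof of this statement --- it is imported verbatim from the cited prior work --- so there is nothing in the present manuscript to compare against line by line; but your argument (write $W$ as the linear revenue term plus a finite sum of pointwise maxima of finitely many affine functions of $p$, deduce convexity and $\|c\|_2$-Lipschitzness, get a.e.\ differentiability from local Lipschitzness, identify $-\hat x^*_i(p)$ as a subgradient of each indirect utility via the max-of-affine-functions rule, add $\nabla_p R(p)=c$ by the Moreau--Rockafellar sum rule, and observe that the subdifferential collapses to the gradient wherever $W$ is differentiable) is exactly the standard convex-analysis route this result is meant to follow, and your handling of the tie-breaking subtlety in the final step is the right one.
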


With \Cref{thm:GD_on_W}, we obtain the following update rule of classical GD $
p^{\textnormal{new}}_j \stackrel{a.e.}{=} p_j-\gamma (c_j-\sum_{i\in N}(\hat{x}^*_{i}(p))_j),\, \forall j \in M$.
Interestingly, this equation has an intuitive economic interpretation. 
If the $j$\textsuperscript{th} item  is over/under-demanded based on the predicted utility-maximizing bundles $\hat{x}^*_{i}(p)$, then its new price $p^{\textnormal{new}}_j$ is increased/decreased by the learning rate times its over/under-demand. 
To enforce constraint~\meqref{eq:app:corollary_constraint} in GD, they asymmetrically increase the prices $1 + \mu \in \mathbb{R}_{\ge 0}$ times more in case of over-demand than they decrease them in case of under-demand. 
This leads to the final update rule:
\begin{subequations}\label{eqs:finalUpdateRule}
\begin{equation}\label{eq:simplified_update_rule_GD}
p^{\textnormal{new}}_j \stackrel{a.e.}{=} p_j-\tilde{\gamma_j} (c_j-\sum_{i\in N}(\hat{x}^*_{i}(p))_j),\, \forall j \in M, 
\end{equation}
\begin{equation}\label{eq:gamma_tilde}
\tilde{\gamma}_j\coloneqq\begin{cases}
\gamma\cdot(1+\mu)& ,c_j < \sum_{i\in N}(\hat{x}^*_{i}(p))_j\\
\gamma& ,\text{else}\\
\end{cases}
\end{equation}
\end{subequations}

\section{Payment and Activity Rules}\label{app:payment_methods}

In this section, we reprint the VCG and VCG-nearest payment rules, as well as give an overview of activity rules for the CCA, and argue why the most prominent choices are also applicable to our MLHCA. 
Finally, we show how MLHCA can immediately detect if a bidder's reports are inconsistent with any valuation function.

\subsection{VCG Payments} \label{sec:vcg_payments}
\begin{definition}{\textsc{(VCG Payments from Demand And Value Query Data)}}\label{def:vcg_payments}
Let $R=(R_1,\ldots,R_n)$ denote an elicited set of both demand and value query data from each bidder and let $R_{-i}\coloneqq(R_1,\ldots,R_{i-1},R_{i+1},\ldots,R_n)$. We then calculate the VCG payments $\pi^{\text{\tiny VCG}}(R)=(\pi^{\text{\tiny VCG}}_1(R)\ldots,\pi^{\text{\tiny VCG}}_n(R)) \in \R_{\ge0}^n$ as follows:
{\small
\begin{align}\label{VCGPayments}
\pi^{\text{\tiny VCG}}_i(R) \coloneqq 
\sum_{j \in N \setminus \{i\}} \widetilde{v}_j \left (a^*(R_{-i})_j ; R_j \right ) - \sum_{j \in N \setminus \{i\}} \widetilde{v}_j \left (a^*(R)_j ; R_j \right ).
\end{align}
}%
where $a^*(R_{-i})$ is the allocation that maximizes the inferred social welfare when excluding bidder $i$, i.e.,
\begin{align}
a^*(R_{-i}) \in \argmax_{\substack{a \in \F}}
\sum_{j\in N\setminus\{i\}} \widetilde{v}_j (a_j ; R_j), 
\end{align}
and $a^*(R)$ is the inferred social welfare-maximizing allocation (see \Cref{WDPFiniteReports}). 
\end{definition}

Thus, when using VCG payments, bidder $i$'s utility is:
{\small
\begin{align*}
u_i &= v_i(a^*(R)_i)-\pi^{\text{\tiny VCG}}_i(R)\\
&=  v_i(a^*(R)_i) + \sum_{j \in N \setminus \{i\}} \widetilde{v}_j \left (a^*(R)_j ; R_j \right ) \\
& - \sum_{j \in N \setminus \{i\}} \widetilde{v}_j \left (a^*(R_{-i})_j ; R_j \right ).
\end{align*}
}

\subsection{VCG-Nearest Payments}
\label{subsec:app_vickrey_nearest}
To define the VCG-nearest payments, we must first introduce the core:
\begin{definition}{\textsc{(The Core)}}\label{def:core}
An outcome $(a,\pi)\in \F\times \Rpz^n$ (i.e., a tuple of a feasible allocation $a$ and payments $\pi$) is in the core if it satisfies the following two properties:
\begin{enumerate}
\item The outcome is \emph{individual rational}, i.e, $u_i=v_i(a_i)-\pi_i\ge0$ for all $i\in N$
\item The core constraints
\begin{equation}
    \forall  \; L \subseteq N \; \sum_{i \in N \setminus L} \pi_i(R) 
    \ge \max_{a^{\prime} \in \F} \sum_{i \in L} v_i(a^{\prime}_i) - \sum_{i \in L} v_i(a_i)  
\end{equation}
where $v_i(a_i)$ is bidder $i$'s value for bundle $a_i$ and $\mathcal{F}$ is the set of feasible allocations.
\end{enumerate}
\end{definition}
In words, a payment vector $\pi$ (together with a feasible allocation $a$) is in the core if no coalition of bidders $L\subset N$ is willing to pay more for the items than the mechanism is charging the winners. 
Note that by replacing the true values $v_i(a_i)$ with the bidders' (possibly untruthful) \textit{inferred values} based on their reports $\widetilde{v}_i(a_i ; R_i)$ in \Cref{def:core} one can equivalently define the \emph{revealed core}. 

Now, we can define 

\begin{definition}{\textsc{(Minimum Revenue Core)}}
Among all payment vectors in the (revealed) core, the (revealed) minimum revenue core is the set of payment vectors with smallest $L_1$-norm, i.e., which minimize the sum of the payments of all bidders.
\end{definition}

We can now define VCG-nearest payments:
\begin{definition}{\textsc{(VCG-Nearest Payments)}} \label{def:vcg_nearest_payments}
Given an allocation $a_R$ for bidder reports $R$, the VCG-nearest payments $\pi^{\text{\tiny VCG-nearest}}(R)$ are defined as the vector of payments in the (revealed) minimum revenue core that minimizes the $L_2$-norm to the VCG payment vector $\pi^{\text{\tiny VCG}}(R)$.
\end{definition}

\subsection{On the Importance of Activity Rules to Align Incentives} \label{subsec:app_acticity_rules}
In the CCA, activity rules serve multiple purposes. First, they help accelerate the auction process. Second, they reduce ``bid-sniping'' opportunities—bidders concealing their true intentions until the very last rounds of the auction.\footnote{The notion of \enquote{bid-sniping} originated in eBay auctions with predetermined ending times, where high-value bidders could reduce their payments by submitting bids at the very last moment.} Third, they limit surprise bids in the supplementary round of the CCA, significantly reducing a bidder's ability to drive up opponents' payments by overbidding on bundles they cannot win \citep{ausubel2017practical}.
There are two types of activity rules that are implemented in a CCA: 
\begin{enumerate} 
\item \emph{Clock phase activity rules}, which limit the bundles that an agent can bid on during the clock phase, based on their bids in previous clock rounds. 
\item \emph{Supplementary round activity rules}, which restrict the amounts that an agent can bid on specific sets of items during the supplementary round. 
\end{enumerate}

Traditionally, most clock phase activity rules in the CCA have relied on either revealed-preference principles or a \textit{points-based system}, where points are assigned to each item before the auction, and bidders are only allowed to submit monotonically non-increasing bids in terms of points. In other words, as prices rise across rounds, bidders cannot submit bids for larger sets of items. Both of these approaches, as well as hybrid combinations thereof, were shown to actually further interfere with truthful bidding in some cases \citep{ausubel2014, ausubel2020}.

However, \citet{ausubel2019iterative} showed that basing clock phase activity rules entirely on the \emph{generalized axiom of revealed preference (GARP)} can dynamically approximate VCG payoffs, thus improving the bidding incentives of the CCA. GARP imposes revealed-preference constraints (see \Cref{def:revealed_preference_constr}) on the bidder’s demand responses. The GARP activity rule requires that the bidder demonstrates rational behavior in her demand choices, without necessitating a monotonic price trajectory. As a result, it can also be applied during the ML-powered DQ phase of MLHCA, allowing our mechanism to enjoy similar improvements in bidding incentives.

For the supplementary round, the CCA's most prominent activity rules are again based on a combination of points-based systems and revealed-preference ideas, which we outline below:

\begin{definition}{\textsc{(Revealed-preference constraint)}} \label{def:revealed_preference_constr}
The revealed-preference constraint for bundle $x\in X$ with respect to clock round $r$ is 
\begin{equation}
b_i(x) \le b_i(x^r) + \iprod{p^r}{x - x^r},
\end{equation}
where $b_i(x)\in \Rpz$ is bidder $i$'s bid for bundle $x\in \X$ in the supplementary round, $x^r\in \X$ is the bundle demanded by the agent at clock round $r$, $b_i(x^r)\in \Rpz$ is the final bid for bundle $x^r\in \X$ and $p^r\in \Rpz^m$ is the linear price vector of clock round $r$.
\end{definition}

Intuitively, the revealed-preference constraint ensures that a bidder cannot claim a higher value for bundle $x$ relative to bundle $x^r$, given that they expressed a preference for bundle $x^r$ at the given prices $p^r$ (see \Cref{eq:utility_maximizing_bundle}).
The difference between the three most prominent supplementary round activity rules is with respect to \textit{which clock rounds} the revealed-preference constraint should be satisfied. Specifically: 
\begin{enumerate}
    \item \emph{Final Cap:} A bid for bundle $x\in \X$ should satisfy the \emph{revealed-preference constraint (\Cref{def:revealed_preference_constr})} with respect to the \emph{final} clock round's price $p^{\Qcca}\in \Rpz$ and bundle $x^{\Qcca}\in \X$.
    \item \emph{Relative Cap:} A bid for bundle $x\in \X$ should satisfy the \emph{revealed-preference constraint (\Cref{def:revealed_preference_constr})} with respect to the last clock round for which the bidder was eligible for that bundle $x\in \X$, based on the points-based system.
    \item \emph{Intermediate Cap:} A bid for bundle $x \in \X$ should satisfy the \emph{revealed-preference constraint (\Cref{def:revealed_preference_constr})} with respect to all eligibility-reducing rounds, starting from the last clock round for which the bidder was eligible for $x\in \X$ based on the point system.
\end{enumerate}

\citet{ausubel2017practical} showed that combining the \emph{Final Cap} and \emph{Relative Cap} activity rules leads to the largest amount of reduction in bid-sniping opportunities for the UK 4G auction, as measured by the theoretical bid amount that each bidder would need to increase her bid by in the supplementary round in order to protect her final clock round bundle. 
Finally, note that the \emph{Final-} and \emph{Intermediate Cap} activity rules can also be applied to the ML-powered DQ phase of our MLHCA.\footnote{\citet{Soumalias2024MLCCA} argued that with the modification for the \emph{Relative Cap} rule that the revealed-preference constraint should hold for the $\Qcca$ rounds that follow the same price update rule as the CCA, and then the ML-powered clock rounds should be treated as corresponding to the same amount of points, since the prices in these rounds on aggregate stay very close to the prices of the last $\Qcca$ round.}

To conclude, both the DQ and VQ phases of MLHCA are compatible with the most prominent activity rules of the CCA, and MLHCA also remains compatible with the commonly used VCG-nearest pricing rule (\Cref{def:vcg_nearest_payments}). 
Combined with MLHCA's similar interaction paradigm to the CCA, these aspects provide strong evidence that our mechanism can leverage activity rules to effectively mitigate bidder misreporting opportunities, much like the classical CCA.

\subsection{MLHCA Can Detect  Inconsistent Misreports} \label{sec:app_MLHCA_misreports}

In the following lemma, we formally prove that if a bidder's reports are inconsistent with any valuation function, then the training loss for that bidder's network will be strictly positive, thus MLHCA can detect such misreports. 

\begin{lemma}[Strictly Positive Loss from an Inconsistent Datapoint]
\label{thm:strictly_positive_loss}
Let $R = (\Rdq, \Rvq)$
be a set of elicited reports by a bidder that is rationalizable by some monotone valuation function
$v_0:\mathcal{X}\to\mathbb{R}_{\ge 0}$.
Suppose, that during the MLHCA auction (\Cref{alg:MLHCA}), the bidder responds to the next query, either a DQ  $(\widetilde{x}^*(p^{\widetilde{r}}), p^{\tilde{r}})$ or a 
VQ $(\tilde{x}, \tilde{v}(\tilde{x}) )$
and assume that \emph{no} monotone valuation $v$ can simultaneously rationalize all of her responses $\mathcal{R}'$. 
Then, when using \Cref{alg:mixed_training} (with any regression loss $F$ for the VQ responses that satisfies $F\geq0$ and $y=\tilde{y} \iff F(y,\tilde{y})=0$) to fit an MVNN $\mathcal{M}^\theta$ to $\mathcal{R}'$, we have
$\min_{\theta}\; L(\theta)>0.$
\end{lemma}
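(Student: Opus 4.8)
The plan is to show that if the optimal training loss were zero, the fitted network would \emph{itself} be a monotone valuation rationalizing every response in $\mathcal{R}'$, contradicting the hypothesis. Two facts drive this. First, for every parameter vector $\theta$ the function $\mathcal{M}^\theta:\mathcal{X}\to\R_{\ge 0}$ is, by construction of (m)MVNNs, a monotone valuation (non-negative and non-decreasing; see \Cref{app:MVNN} and \citet{weissteiner2022monotone}). Second, the loss of \Cref{alg:mixed_training} decomposes as $L(\theta)=L_{\mathrm{DQ}}(\theta)+L_{\mathrm{VQ}}(\theta)$ with both terms non-negative, where: the demand-query loss from \citet{Soumalias2024MLCCA} contributes, per response $(x^r,p^r)\in\Rdq$, the quantity $\max_{x\in\mathcal{X}}\{\mathcal{M}^\theta(x)-\iprod{p^r}{x}\}-\big(\mathcal{M}^\theta(x^r)-\iprod{p^r}{x^r}\big)$, which is $\ge 0$ and equals $0$ exactly when $x^r$ is utility-maximizing at prices $p^r$ under value function $\mathcal{M}^\theta$, i.e.\ exactly when $\mathcal{M}^\theta$ rationalizes that demand response; and the value-query term contributes, per response in $\Rvq$, a regression loss that by assumption equals $0$ exactly when $\mathcal{M}^\theta$ attains the reported value at that bundle.

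The core step is then immediate. Suppose some $\theta$ achieves $L(\theta)=0$. Then $L_{\mathrm{DQ}}(\theta)=0$ forces $\mathcal{M}^\theta$ to rationalize every demand response in $\Rdq$, and $L_{\mathrm{VQ}}(\theta)=0$ forces $\mathcal{M}^\theta$ to match every reported value in $\Rvq$. Since $\mathcal{R}'$ is exactly $R=(\Rdq,\Rvq)$ augmented by the newly answered query, which is itself of one of these two types, $\mathcal{M}^\theta$ rationalizes all of $\mathcal{R}'$. But $\mathcal{M}^\theta$ is a monotone valuation, contradicting the hypothesis that no monotone valuation rationalizes $\mathcal{R}'$. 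Hence $L(\theta)>0$ for every $\theta$.

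The main obstacle is strengthening ``$L(\theta)>0$ for all $\theta$'' to ``$\min_\theta L(\theta)>0$'', i.e.\ excluding a sequence $\theta_k$ with $L(\theta_k)\to 0$; I would do this by robustness of polyhedral infeasibility rather than by compactness of parameter space. Note $L$ depends on $\theta$ only through the vector $\widehat v(\theta):=\big(\mathcal{M}^\theta(x)\big)_{x\in\mathcal{X}}$, which always lies in the polyhedral cone $\mathcal{C}\subseteq\R_{\ge 0}^{|\mathcal{X}|}$ of monotone valuations, and write $L(\theta)=\ell(\widehat v(\theta))$ with $\ell$ continuous and non-negative. For $v\in\mathcal{C}$, $L_{\mathrm{DQ}}$ is at least the largest violation of the finitely many linear inequalities $v(x^r)-v(x')\ge\iprod{p^r}{x^r-x'}$ encoding the demand responses, and a small $L_{\mathrm{VQ}}(v)$ forces a small residual at every value-query bundle. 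The full linear system ``monotonicity $+$ all demand inequalities $+$ all value equalities'' is infeasible by hypothesis, so Farkas' lemma yields an infeasibility certificate with \emph{fixed} non-negative multipliers whose contradiction survives any sufficiently small uniform relaxation of the constraints; hence there is $\delta^\ast>0$ such that no $v\in\mathcal{C}$ can simultaneously violate every demand inequality by less than $\delta^\ast$ and keep every value residual below $\delta^\ast$. Translating back through $\ell$ gives a uniform positive lower bound $\min_\theta L(\theta)\ge c(\delta^\ast)>0$. The only point requiring care is the quantitative link between a small value of the regression loss and a small value residual, which holds for the squared, absolute, and Huber losses (and generally whenever the regression loss has sublevel sets near $0$ that shrink to the correct value); together with making the Farkas-robustness estimate explicit, this is the part of the argument that needs genuine, though routine, work.
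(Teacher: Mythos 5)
Your core argument is exactly the paper's: assume some $\theta$ attains zero loss, observe that both the demand-query term and the regression term are non-negative and vanish only when $\mathcal{M}^\theta$ rationalizes the corresponding response, conclude that $\mathcal{M}^\theta$ — which is by construction a monotone valuation — rationalizes all of $\mathcal{R}'$, contradicting the hypothesis. Where you genuinely go beyond the paper is the last step: the paper's proof stops at ``$L(\theta)>0$ for every $\theta$'' and silently equates this with $\min_\theta L(\theta)>0$, which presupposes that the minimum is attained (not obvious over an unbounded parameter space). Your Farkas-type argument closes exactly this gap: since $L$ factors through the finite-dimensional vector of network outputs $\bigl(\mathcal{M}^\theta(x)\bigr)_{x\in\mathcal{X}}$, which always lies in the polyhedral cone of monotone valuations, and since the finite linear system ``monotonicity $+$ demand inequalities $+$ value equalities'' is infeasible, standard LP duality gives a strictly positive minimal constraint violation, hence a uniform positive lower bound on $L$ over all $\theta$. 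This is sound (the set $\mathcal{X}$ is finite, so the system is finite, and the auxiliary LP minimizing the maximal violation attains a strictly positive optimum when the system is infeasible), and the remaining quantitative link between the regression loss and the value residual is indeed routine for the losses you name. In short: same route as the paper for the main contradiction, plus a careful (and worthwhile) repair of the infimum-versus-minimum issue that the paper's own proof glosses over.
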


\begin{proof}
We prove the claim in cases. Case 1: Suppose that the bidder misreports in a way that is non-rationalizable by any valuation function during the DQ phase of the auction. In that phase, the bidder's set of reports consists only of demand queries. 

For each datapoint $(x^*(p^r), p^r)$ in $\Rdq$, \Cref{alg:mixed_training} attempts to make
\[
\hat{x}^*(p^r)
\;\;\in\;
\arg\max_{x\in\mathcal{X}}
\Bigl[\,
    \mathcal{M}^\theta(x)
    - \langle p^r,\;x\rangle
\Bigr]
\]
match the reported $x^*(p^r)$.  If it does \emph{not} match, the loss is incremented by a nonnegative amount:
\[
\Delta L_r(\theta)
\;=\;
\bigl[\mathcal{M}^\theta(\hat{x}^*(p^r)) 
    - \langle p^r,\hat{x}^*(p^r)\rangle\bigr]
\;-\;
\bigl[\mathcal{M}^\theta(x^*(p^r)) 
    - \langle p^r,x^*(p^r)\rangle\bigr]
\;\;\ge\; 0.
\]
Hence the total loss $L(\theta)$ is always weakly positive. 

Suppose, for contradiction, that there exists $\theta$ with $L(\theta)=0$.
If $L(\theta)=0$, it means the predicted best response matches the reported one, i.e., 
$\hat{x}^*(p^r) 
\;=\;
x^*(p^r)
\quad
\text{for all }r$,
including $r=\tilde{r}$.  

However, for any $\theta \in \Theta$, the (m)MVNN  $\mathcal{M}^\theta$ is by construction a valid valuation function satisfying free disposal \citep{weissteiner2022monotone,Soumalias2024MLCCA}. 
The condition $\hat{x}^*(p^r) = x^*(p^r)$ for all $r$ means precisely that $\mathcal{M}^\theta$ \emph{rationalizes} \emph{all} data in $\mathcal{D}'$.
Thus, there exists a valuation function rationalizing all data points, including 
$\widetilde{x}^*(p^r)$, a contradiction.

Case 2: Suppose that the bidder misreports in a way that is non-rationalizable by any valuation function during the VQ phase of the auction. Similarly, given that the loss function in each datapoint  (both DQs and VQs) is weakly positive, the only way the loss can be zero is if it is zero on every point. But then, the MVNN once again has rationalized the agent's reports. Thus, a value function exists that rationalizes all of the agent's reports, a contradiction.

\end{proof}

Note that \Cref{thm:strictly_positive_loss} can also be applied to the case where we observe 0 VQs. Thus, \Cref{thm:strictly_positive_loss} can also be applied to detect inconsistent misreporting for DQ-only auctions such as ML-CCA.

Further note that \Cref{thm:strictly_positive_loss} can always detect inconsistent misreporting, while other forms of misreporting cannot be detected this way.

\subsection{On the Importance of Marginal Economies to Align Incentives} \label{sec:marginal_economies}
In this section, we review the key arguments from \citet{brero2021workingpaper} on why MLCA provides strong incentives for truthful reporting in practice. These arguments extend to any ML-powered ICA that employs the same VQ-generation algorithm, including MLHCA.

Bidder $i$'s utility in MLCA (and MLHCA) under VCG payments (see \Cref{def:vcg_payments}) can be expressed as:

{\small
\begin{align*}
u_i &= v_i(a^*(R)_i)-\pi^{\text{\tiny VCG}}_i(R)\\
&= \underbrace{v_i(a^*(R)_i) + \sum_{j \in N \setminus \{i\}} \widetilde{v}_j \left (a^*(R)_j ; R_j \right )}_{\textrm{\scriptsize (a)}} - 
\underbrace{\sum_{j \in N \setminus \{i\}} \widetilde{v}_j \left (a^*(R_{-i})_j ; R_j \right )}_{\textrm{\scriptsize (b) Inferred SW of marginal economy}}.
\end{align*}}
Any beneficial misreport by bidder $i$ must increase the difference (a) $-$ (b).

MLCA has two features that mitigate manipulations. First, MLCA explicitly queries each bidder's marginal economy (\Cref{MLCA}, Line 5), which implies that (b) is practically independent of bidder $i$'s reports. 
Experimental evidence supporting this claim is provided in Section 7.3 of \citet{brero2021workingpaper}.
Second, MLCA (and also MLHCA) enables bidders to ``push'' information to the auction which they deem useful. 
This mitigates certain manipulations that target (a), as it allows bidders to increase (a) with truthful information. 
\citet{brero2021workingpaper} argue that any remaining manipulation would be implausible as it would require almost complete information.

Under further assumptions, we can also derive two theoretical incentive guarantees:

\begin{itemize}
\item Assumption 1 requires that, for all bidders $i\in N$, if all other bidders report truthfully, then the reported social welfare of bidder $i$'s marginal economy (i.e., term (b)) is \emph{independent} of her value reports. 
\item Assumption 2 requires that, if all bidders $i\in N$ bid truthfully, then MLCA \emph{finds an efficient allocation}. 
\end{itemize}

\paragraph{Result 1: Social Welfare Alignment} Under Assumption 1, and given that all other bidders are truthful, MLCA is \emph{social welfare aligned}. 
This means that the only way for a bidder to increase her true utility is by increasing the reported social welfare of $a^*(R)$ in the main economy (i.e., term (a)), which, in this case, equals the true social welfare of $a^*(R)$ \citep[Proposition 3]{brero2021workingpaper}.
The same is true for the VQ phase of MLHCA, as it employs the same allocation and payment rules.

\paragraph{Result 2: Ex-Post Nash Equilibrium}  If both Assumption 1 and Assumption 2 hold, then bidding truthfully constitutes an ex-post Nash equilibrium in MLCA \citep[Proposition 4]{brero2021workingpaper}. 
The same is true for the VQ phase of MLHCA, as it employs the same allocation and payment rules.

\begin{remark}[Experimental Evaluation of Assumption 2]
The results shown in \Cref{tab:mlhca_vs_boca_mlcca_combined_redacted,tab:mlhca_vs_boca_mlcca_combined_extended} suggest that Assumption 2 is more realistic for MLHCA than for any other mechanism. For GSVM, Assumption 2 is absolutely realistic for MLHCA and was already realistic for other VQ-based mechanisms such as the ones proposed by \cite{weissteiner2020deep,weissteiner2022monotone,weissteiner2023bayesian}. Also for SRVM, Assumption 2 is very realistic for MLHCA. In fact, MLHCA is the first method from \Cref{tab:mlhca_vs_boca_mlcca_combined_redacted} that always found an efficient allocation (only methods from \Cref{tab:mlhca_vs_boca_mlcca_combined_extended} that use significantly more than 200 queries can keep up with this).
Theoretically achieving 100\% efficiency in all 50 random instances of an auction does not suffice as mathematical proof that the auction will always achieve 100\% efficiency. However, for GSVM and SRVM, the fact that MLHCA found an efficient allocation within the first 60 out of 100 queries for all 50 instances, strongly suggests that 100 queries allow MLHCA to find an efficient allocation with almost 100\% probability.
For LSVM, MLHCA found an efficient allocation in 49 out of 50 auction instances, which from a practical point of view also almost satisfies Assumption 2, and with a few queries more fully satisfying Assumption 2 might be in reach. At least for every domain, MLHCA is closer to satisfying Assumption 2 than its competitors.
\end{remark}

To conclude, MLHCA's compatibility with both \textit{activity rules} during its DQ rounds and \textit{marginal economies} during its VQ rounds, as well as its compatibility with VCG and VCG-nearest payments, provides strong evidence that MLHCA can effectively mitigate opportunities for bidder misreporting.

\newpage
\clearpage

\section{MVNN}\label{app:MVNN}
The original definition \citep{weissteiner2022monotone} is a special case of the more general definition \citep{Soumalias2024MLCCA} that we state here.

\begin{definition}[MVNN]\label{def:MVNN}
		An MVNN $\MVNNi{}:\X \to \R_{\ge0}$  for bidder $i\in N$ is defined as
            {\small
		\begin{equation}\label{eq:MVNN}
		\MVNNi{x}\coloneqq  W^{i,K_i}\varphi_{0,t^{i, K_i-1}}\left(\ldots\varphi_{0,t^{i, 1}}(W^{i,1}\left(\D x\right)+b^{i,1})\ldots\right)
		\end{equation}
             }
		\begin{itemize}[leftmargin=*,topsep=0pt,partopsep=0pt, parsep=0pt]
		\item $K_i+2\in\mathbb{N}$ is the number of layers ($K_i$ hidden layers),
		\item $\{\varphi_{0,t^{i, k}}{}\}_{k=1}^{K_i-1}$ are the MVNN-specific activation functions with cutoff $t^{i, k}>0$, called \emph{bounded ReLU (bReLU)}:
		\begin{align}\label{itm:MVNNactivation}
		\varphi_{0,t^{i, k}}(\cdot)\coloneqq\min(t^{i, k}, \max(0,\cdot))
		\end{align}
		\item $W^i\coloneqq (W^{i,k})_{k=1}^{K_i}$ with $W^{i,k}\ge0$ and $b^i\coloneqq (b^{i,k})_{k=1}^{K_i-1}$ with $b^{i,k}\le0$ are the \emph{non-negative} weights and \emph{non-positive} biases of dimensions $d^{i,k}\times d^{i,k-1}$ and $d^{i,k}$, whose parameters are stored in $\theta=(W^i,b^i)$.
        \item\label{itm:D} $\D\coloneqq \diag\left(\nicefrac{1}{c_1},\ldots,\nicefrac{1}{c_m}\right)$ is the linear normalization layer that ensures $\D x\in [0,1]$ and is not trainable.
		\end{itemize}
\end{definition}

\begin{remark}
    The index $i$ of the MVNN $\MVNNi{x}$ emphasizes that we train an individual MVNN for every bidder $i$ to approximate $v_i$. In the following, we sometimes omit the index $i$ if we just want to make general arguments about the MVNN architecture without.
\end{remark}

\begin{remark}[Linear Skip Connection]
    Sometimes we also use linear skip connections as introduced in \citet[Definition~F.1]{weissteiner2023bayesian}
\end{remark}

\begin{remark}[Initialization]
    We always use the initialization scheme from \citet[Section~3.2 and Appendix~E]{weissteiner2023bayesian}, which offers crucial advantages over standard initialization schemes as discussed in \citet[Section~3.2 and Appendix~E]{weissteiner2023bayesian}.
\end{remark}

\subsection{On the Inductive Bias of MVNNs}\label{appendix:InductiveBias}

\citet{weissteiner2022monotone,Soumalias2024MLCCA} have shown that MVNNs %
 can represent any monotonic normalized function on $\mathcal{X}$.
However, for finitely many data points, multiple different monotonic functions can fit the data equally well, but the training algorithm will choose only one of these functions.
We want to understand according to which preferences the algorithm makes this choice, i.e., we want to understand its inductive bias.

For certain ReLU-NNs it has been shown that L2-regularization (also known as \enquote{weight decay}) of the parameters $\theta$ corresponds to regularizing a Lp-norm of the second derivative of the function \citep{implReg1,implReg2, HeissPart3Arxiv,HeissInductiveBias2024,savarese2019infinite,ongie2019function,williams2019gradient,parhi2022kinds}. Since the second derivative of linear functions is zero, these NNs prefer linear functions.

However, MVNNs use a different activation function \citep{weissteiner2022monotone}. For MVNNs, no theoretical result about their second derivative has been proven so far. It is quite clear that the L2-regularization of the parameters of a MVNN does not exactly correspond to any Lp-norm of the second derivative.
\citet{weissteiner2023bayesian} modified the MVNN architecture by adding so-called linear skip connections \citep[Definition F.1]{weissteiner2023bayesian} to obtain an inductive bias towards linear functions. If one uses unregularized linear skip connections but regularizes all other parameters, it is quite obvious that the optimal parameters will only have non-zero weights in the linear skip connections if a monotonic linear function can perfectly explain the data.\footnote{If the data can be perfectly explained by a linear function, then only using the linear skip connections can achieve zero training loss and zero regularization costs, while setting any parameter outside the linear skip connections to any non-zero value would lead to non-zero L2 regularization costs.}

In the setting of \Cref{ex:OneVQNeeded} (which is based on the example in the proof of \Cref{thm:DQsNotSufficient}) one can also prove that MVNNs with arbitrarily small L2-regularization, would always choose a function that is linear on $\X$ given any possible truthful DQ responses from bidder 2, even without linear skip connections. 

\begin{proposition}
    As in \Cref{ex:OneVQNeeded}, let $n=2$, $m=1$, $c_1=10$ and $v_2$ such that whenever bidder 2 is queried a DQ she answers in the following way:
    \begin{itemize}
    \item If the price $p$ is below $\frac{94}{10}$, bidder 2 will answer with $x_2^*(p)=(10)$;
    \item if the price $p=\frac{94}{10}$, bidder 2 will answer with either $x_2^*(p)=(10)$ or $x_2^*(p)=(0)$;
    \item if the price $p$ is higher than $\frac{94}{10}$, bidder 2 will answer with $x_2^*(p)=(0)$.
\end{itemize}    
Let $\{p^1,\dots,p^{\ntrDQ}\}\subset [0,\infty)$ be the subset of prices bidder 2 is queried. Let $\theta^*$ be any (local) minimizer of the L2-regularized loss~from \cite{Soumalias2024MLCCA}

    \[L^{\lambda}(\theta):=
    \sum_{r=1}^{\ntrDQ}\Big(\mathcal{M}_{\theta}(\hat{x}^*_2(p^r)) - \iprod{p^r}{\hat{x}^*_2(p^r)}%
- \left(\vphantom{\big(}\mathcal{M}_{\theta}(x_2^*(p^{r})) - \iprod{p^r}{x_2^*(p^{r})}\right)\Big)^+
+\lambda \twonorm[\theta]^2
\label{subeq:dataLossDQmean},\]
where $\hat{x}^*_2(p^r):=\argmax_{x\in\mathcal{X}}
\left(\mathcal{M}_{\theta}(x) - \iprod{p^r}{x}\right)$.
Then the  MVNN $\mathcal{M}_{\theta^*}:\mathcal{X}\to\R$ is linear.
\end{proposition}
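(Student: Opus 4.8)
\emph{The plan.} The crucial structural fact is that bidder 2's truthful demand is ``bang--bang'': she demands the full bundle $(10)$ while $p<94/10$ and the empty bundle $(0)$ while $p>94/10$. This means the demand-query part of $L^\lambda$ is governed entirely by the candidate valuation's value at the grand bundle, while the regularizer penalizes making that value large unless the network is linear. First I would reduce to the realized function: the loss term of $L^\lambda$ depends on $\theta$ only through $f:=\mathcal{M}_{\theta}|_{\mathcal{X}}$ and $\twonorm[\theta]^2$, so write $L^\lambda(\theta)=g(f)+\lambda\twonorm[\theta]^2$ with $g(f)=\sum_{r=1}^{\ntrDQ}\bigl(\max_{x\in\mathcal{X}}(f(x)-\iprod{p^r}{x})-(f(x_2^*(p^r))-\iprod{p^r}{x_2^*(p^r)})\bigr)$, reading $\hat x_2^*(p^r)$ as the predicted utility-maximizing bundle (so the $(\cdot)^+$ is redundant); each summand is convex in $f$.

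\emph{Two lemmas.} (i) \emph{Chord lemma.} For $p^r<94/10$ with reply $(10)$, plugging $x=0$ into the $\max$ shows the $r$-th summand of $g(f)$ is $\ge(10p^r-f((10)))^+$; for $p^r>94/10$ with reply $(0)$, plugging $x=(10)$ shows it is $\ge(f((10))-10p^r)^+$. Now let $\hat f(x):=f((10))\,x/10$ be the linearization of $f$ through the grand bundle; a one-line computation gives that the $r$-th summand of $g(\hat f)$ equals exactly these lower bounds. Hence $g(\hat f)\le g(f)$, and $g(\hat f)=\psi\bigl(f((10))\bigr)$, where $\psi(v):=\sum_{p^r<94/10}(10p^r-v)^++\sum_{p^r>94/10}(v-10p^r)^+$ is convex, piecewise linear, and minimized on an interval containing $94$ (since $v_2$ rationalizes all of bidder 2's replies). (ii) \emph{Norm lemma.} For an MVNN with $K$ hidden layers, the bReLU obeys $0\le\varphi_{0,t}(y)\le y^+$, the weights are $\ge 0$, and the biases $\le 0$; evaluating at the grand bundle (where $Dx=1$) gives layerwise $\lVert h^1\rVert\le\lVert W^1\rVert_F$ and $\lVert h^k\rVert\le\lVert W^k\rVert_F\,\lVert h^{k-1}\rVert$, hence $\mathcal{M}_\theta((10))\le\prod_k\lVert W^k\rVert_F\le(\twonorm[\theta]^2/K)^{K/2}$ by AM--GM. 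Equality forces all biases $=0$ and no clipping on $[0,10]$, hence $\mathcal{M}_\theta|_{\mathcal{X}}$ linear; conversely $v\,x/10$ is realizable with squared parameter norm $Kv^{2/K}$. In particular, any $\theta$ whose realized function is non-linear satisfies $\twonorm[\theta]^2>K\,\mathcal{M}_\theta((10))^{2/K}$ strictly.

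\emph{Putting it together.} Combining, $L^\lambda(\theta)\ge\psi\bigl(\mathcal{M}_\theta((10))\bigr)+\lambda K\,\mathcal{M}_\theta((10))^{2/K}=:\Phi\bigl(\mathcal{M}_\theta((10))\bigr)$, with equality iff $\mathcal{M}_\theta|_{\mathcal{X}}$ is linear; $\Phi$ is continuous and coercive on $\R_{\ge 0}$, so it attains its minimum at some $v^\star$, which is realized by the linear MVNN $v^\star x/10$. Thus every \emph{global} minimizer realizes a linear function. For a \emph{local} minimizer $\theta^\star$, suppose $\mathcal{M}_{\theta^\star}|_{\mathcal{X}}$ were non-linear; then $\twonorm[\theta^\star]^2>K\,\mathcal{M}_{\theta^\star}((10))^{2/K}$ strictly, so $L^\lambda(\theta^\star)>\Phi\bigl(\mathcal{M}_{\theta^\star}((10))\bigr)$, and I would derive a contradiction by moving $\theta^\star$ continuously so that the realized function drifts from $f^\star$ toward its linearization $\hat f^\star$ --- which has the same value at $(10)$ (hence the same $\psi$), no larger $g$ (chord lemma), and strictly smaller minimal parameter norm (norm lemma) --- for instance by rescaling the (linear) output layer and blending in a single-neuron realization of $\hat f^\star$.

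\emph{The main obstacle.} The first three steps pin down the global minimizers cleanly; the hard part is turning the last step into a genuine \emph{local} descent that stays inside the fixed MVNN architecture. Because the bReLU cutoffs $t^{i,k}$ are not trainable, the positive-rescaling symmetry of ordinary ReLU networks is broken, so one cannot simply rebalance layer norms; instead one must straighten the network explicitly: remove, one neuron at a time, every neuron that kinks ($b^k_j<0$) or saturates ($W^k_j>t^k_j$) inside $[0,10]$, compensating with the remaining weights so that $\mathcal{M}_\theta((10))$ is held fixed, and verify that each elementary move keeps $g$ non-increasing (again via the chord lemma applied to the current function) and strictly lowers $\twonorm[\theta]^2$ (via the strict slack in the norm lemma). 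If instead the architecture is allowed to vary (or $\theta^\star$ has a spare neuron), the ``blend in a linear neuron'' path above already gives a strictly decreasing continuous path, and the local case becomes immediate.
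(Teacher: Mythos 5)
There is a genuine gap: the proposition is about \emph{local} minimizers, and your argument only pins down \emph{global} ones. Your chord lemma and AM--GM norm bound yield $L^{\lambda}(\theta)\ge \Phi\bigl(\mathcal{M}_{\theta}(10)\bigr)$ with equality only for linear realizations, which indeed forces every global minimizer to be linear; but to rule out a non-linear \emph{local} minimizer you must exhibit a strictly descending direction \emph{inside an arbitrarily small neighbourhood} of $\theta^{*}$, and you explicitly defer this (``the main obstacle''), offering only a sketch. The sketch as written does not close the gap: ``removing a neuron'' or ``blending in a single-neuron realization of $\hat f^{*}$'' is not a local perturbation of the given parameter vector, and the fact that the non-linear $\theta^{*}$ sits strictly above the envelope $\Phi$ does not by itself produce a nearby point with smaller loss. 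There is also a secondary unaddressed issue in the norm lemma: the claimed minimal-norm realization of $v\,x/10$ with squared parameter norm $K v^{2/K}$ requires the per-layer pre-activations $v^{k/K}$ to stay below the fixed, non-trainable bReLU cutoffs $t^{i,k}$, which need not hold for large $v$.

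For comparison, the paper's proof works entirely with local perturbations and needs none of your machinery. It shows that at any local minimizer three properties must hold, each by a one-line strict-descent argument: (i) $\mathcal{M}_{\theta^{*}}(10)\le 10\tilde p$ (otherwise shrink the last layer's weights slightly: the data loss cannot increase and the regularizer strictly decreases); (ii) no pre-activation exceeds its cutoff on $\mathcal{X}$ (otherwise shrink that neuron's incoming weights); (iii) all biases are zero (otherwise increase a negative bias slightly, which for finitely many DQs leaves the data loss unchanged but lowers the regularizer). Properties (ii) and (iii), with non-negative weights, immediately force linearity on $\mathcal{X}$. If you want to salvage your route, the cleanest fix is to replace your descent sketch with exactly these three elementary moves; your chord and norm lemmas then become unnecessary for the local case, though they do give a nice characterization of the global minimum.
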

\begin{proof}
    We define $\tilde{p}:=\max\left\{p^r : x^*(r)=(10), 1\leq r\leq \ntrDQ\right\}$.\footnote{If $\left\{p^r : x^*(r)=(10), 1\leq r\leq \ntrDQ\right\}$ is empty, we define $\tilde{p}:=0$. In \Cref{ex:OneVQNeeded}, $\tilde{p}=\frac{94-\epsilon}{10}$.}
    \begin{enumerate}
        \item\label{itm:OutputSmallat10} First we show that $\mathcal{M}_{\theta^*}(10)\leq10\tilde{p}$ via a contraposition argument. Let's assume $\mathcal{M}_{\theta^*}(10)\geq 10q>10\tilde{p}$, then multiplying the last layer's weights by $1-\delta>{\tilde{p}}{q}$ would both reduce the data-loss-term $L^0$ (since the activation the hidden layers of MVNNs are always non-negative) and the regularization costs $\lambda\twonorm^2$. Therefore, no local minima $\theta^*$ can satisfy $\mathcal{M}_{\theta^*}(10)>10\tilde{p}$. Thus, we have shown that $\mathcal{M}_{\theta^*}(10)\leq10\tilde{p}$ holds for any local minima $\theta^*$.
        \item\label{itm:PreActivationsBelowCutoff} Next, we show that %
        all pre-activations of our $\mathcal{M}_{\theta^*}$ are smaller or equal to the cut-off of the corresponding bReLU activation function for any input $x\in\X$. Let's assume again the contraposition that at least one pre-activation is larger than the cut-off. In this case, we can scale down all the incoming weights of such a neuron without changing $\mathcal{M}_{\theta^*}(10)$. Scaling down these weights cannot increase the value of $\mathcal{M}_{\theta^*}(x)$ for any $x\in\X$, so it cannot increase the data-loss term $L^0$, but scaling down weights obviously decreased the regularization costs.
        Thus, via this counterposition argument, we have proven that all the pre-activations are smaller or equal to the cut-off for any local minima $\theta^*$.
        \item\label{itm:BiasesZero} Next, we show that all biases of $\theta^*$ are zero. First, note that by \cref{itm:PreActivationsBelowCutoff}, we know that $\mathcal{M}_{\theta^*}$ is convex (since the bReLU is convex below the cut-off). By combining this fact with \cref{itm:OutputSmallat10}, we obtain that $\mathcal{M}_{\theta^*}(x)\leq x\tilde{p}$, since MVNNs always satisfy $\mathcal{M}_{\theta}(0)=0$. Let's assume the counterposition of at least one bias being strictly negative (as by definition, biases can never be positive for MVNNs). Then we could increase the bias a little bit without increasing the data-loss-term $L^0$,\footnote{This argument relies on the fact that we only queried finitely many DQs. If we asked infinitely many DQs that are dense around $p=\frac{94}{10}$, one would need to modify the argument by not only increasing the biases but simultaneously also decreasing certain weights.} but increasing the bias reduces its regularization cost. Thus any local minima $\theta^*$ satisfies that the biases are zero.
    \end{enumerate}
By combining \cref{itm:PreActivationsBelowCutoff,itm:BiasesZero} we obtain that $\mathcal{M}_{\theta^*}$ is linear.
\end{proof}

\begin{remark}[Interpretation of the Inductive Bias of MVNNs] It is important to keep in mind that MVNNs can learn complicated non-linear monotonic functions, \emph{if} the training data requires it. For example, if we receive the 2 VQs, $v(20)=20\$$ and $v(10)=19\$$, there is no linear function that can explain both VQs simultaneously, but an MVNN can easily learn a non-linear monotonic function which perfectly fits both VQs simultaneously, i.e., $\mathcal{M}_{\theta^*}(20)=20\$$ and $\mathcal{M}_{\theta^*}(10)=19\$$.  Therefore, the ability of MVNNs to learn non-linear monotonic functions is important, since we don't want the MVNN to predict $\mathcal{M}_{\theta^*}(10)=10\$$, if we know already $v(10)=19\$$. However in the case that we don't know $v(10)=19\$$, but only observe 1 VQ, $v(20)=20\$$, then this section provides intuition to understand that the MVNN would typically predict $\mathcal{M}_{\theta^*}(10)=10\$$. So the goal of this section is to better understand how MVNNs deal with insufficient information.     
\end{remark}

\newpage
\clearpage

\section{Details on \Cref{sec:auction_advantages}}\label{app:Details on why one should start with DQs and end with VQs}

In this section, we examine the limitations of using only VQs or only DQs in auctions and highlight the benefits of combining them.

\subsection{Disadvantages of Only Using VQs}\label{sec:DisadvantagesOnlyVQ}
Almost all ML-powered VQ-based auctions including the current SOTA, BOCA \citep{weissteiner2023bayesian} first ask each bidder multiple random VQs (i.e., VQs for randomly selected bundles).
These VQs are necessary to initialize the ML estimates of the bidder's value functions.
In practice, it is very hard for bidders to answer random VQs since they are not aligned with their preferences.\footnote{\label{foot:randomVQ} To provide some intuition, imagine you go to the supermarket because you want to bake a birthday cake for your friend and then you 
are asked your value for 30 frying pans plus 500 coconuts. It might be hard to estimate your value for such a random combination of items.} 
The most popular ICAs in practice (e.g., the CCA) ask the bidders DQs, which have been argued can be answered by the bidders sufficiently well \citep{cramton2013spectrumauctions}.\footnote{\label{foot:DQSupermarket} In our practical supermarket example, now imagine that you view the price tags for the same items. 
It is quite doable to decide which items you want to buy and in which quantities.}

Even if bidders manage to respond perfectly to random VQs, the information obtained is limited. This is because, in large combinatorial domains, bidders typically have high values for only a small subset of possible bundles, making the probability of querying one of these high-value bundles at random exceedingly low. On the other hand, querying bidders with DQs at a random price vector is more likely to prompt responses that reveal their high-value bundles. 
This is formalized in \Cref{le:OnlyVQsBadForSparseValueFunction}, which we reprint for convenience:

\begin{proposition}[Restatement of \Cref{le:OnlyVQsBadForSparseValueFunction}]
The expected social welfare of an auction that uses a \textit{single} random demand query can be arbitrarily larger than that of an auction that uses any constant number ($k \ll 2^m$) of random value queries. 
\end{proposition}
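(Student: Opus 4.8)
The plan is to exhibit a one-parameter family of \emph{sparse} valuation instances (also parameterized by the number of items $m$) on which a single, suitably randomized demand query reveals the unique high-value bundle with probability one, whereas $k$ random value queries miss it with overwhelming probability. Concretely, take $n=1$ bidder and $m$ single-copy items, so $\X=\{0,1\}^m$; fix a small $\epsilon>0$ and a large $H>0$ and set
\[
v(x)\;=\;\epsilon\,\|x\|_1\;+\;H\cdot\mathbf 1[x=\mathbf 1],
\]
where $\mathbf 1$ is the grand bundle. Both summands are monotone and $v(\mathbf 0)=0$, so $v$ is an admissible valuation, and the efficient welfare is $V(a^*)=v(\mathbf 1)=\epsilon m+H$. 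The small additive term $\epsilon\|x\|_1$ is the key gadget: it gives every non-empty bundle strictly positive value, which is what prevents the value-query auction from accidentally selecting $\mathbf 1$ through the WDP.

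For the demand-query auction, let the single query use a price $p$ drawn uniformly from $[0,\epsilon)^m$. Almost surely $p_j<\epsilon$ for every $j$, so each item has strictly positive marginal utility $\epsilon-p_j$, and completing the grand bundle additionally yields the bonus $H$; hence $\mathbf 1$ is the \emph{unique} utility-maximizing bundle. The elicited report is $(\mathbf 1,p)$, giving $\widetilde v(\mathbf 1)=\iprod{\mathbf 1}{p}>0$ and $\widetilde v(y)=0$ for all $y\neq\mathbf 1$, so the WDP (\Cref{WDPFiniteReports}) allocates $\mathbf 1$ and the realized welfare is $v(\mathbf 1)=\epsilon m+H$ with probability one. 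Thus the demand-query auction achieves expected welfare exactly $\epsilon m+H$ (indeed $100\%$ efficiency).

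For the value-query auction, suppose the $k$ queried bundles $x^1,\dots,x^k$ are i.i.d.\ uniform on $\X$. A union bound gives $\Pr[\mathbf 1\in\{x^1,\dots,x^k\}]\le k2^{-m}$ and $\Pr[\text{all }x^l=\mathbf 0]\le 2^{-m}$. Off these two events, (i) some queried bundle is non-empty, so $\max_l\widetilde v(x^l)\ge\epsilon>0$; and (ii) every bundle $y$ satisfies $\widetilde v(y)\le\epsilon m$, since a queried $y\neq\mathbf 1$ has $\widetilde v(y)=\epsilon\|y\|_1\le\epsilon m$ and every non-queried bundle (in particular $\mathbf 1$) has $\widetilde v(y)=0$ as there are no demand responses. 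Hence the WDP cannot output $\mathbf 1$ — its inferred value $0$ is strictly dominated — so it outputs some queried $x^l\neq\mathbf 1$ whose true value is $\epsilon\|x^l\|_1\le\epsilon m$. Bounding the welfare by $\epsilon m+H$ on the two excluded events yields
\[
\E\bigl[V(a^*_{\mathrm{VQ}})\bigr]\;\le\;\epsilon m\;+\;(k+1)\,2^{-m}\,(\epsilon m+H).
\]

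Combining the two bounds, the ratio of expected welfares is at least $\frac{\epsilon m+H}{\epsilon m+(k+1)2^{-m}(\epsilon m+H)}$, which tends to $\frac{2^m}{k+1}$ as $\epsilon\to 0$ and is therefore unbounded as $m\to\infty$ with $k$ fixed — precisely the regime $k\ll 2^m$ of the statement (the additive gap $\E[V(a^*_{\mathrm{DQ}})]-\E[V(a^*_{\mathrm{VQ}})]\ge H(1-(k+1)2^{-m})-\epsilon m$ likewise blows up with $H$). The step I expect to need the most care is exactly point (ii) above: because the WDP optimizes over \emph{all} feasible allocations, one must rule out the value-query auction benefiting "by accident" by allocating the grand bundle it never queried; the $\epsilon\|x\|_1$ term makes this clean, at the price of a vanishing slack $\epsilon m$ that is harmless after letting $\epsilon\to 0$. (The construction extends verbatim to multiset domains and to more bidders by padding with a zero-valued bidder.)
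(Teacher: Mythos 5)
Your proof is correct and follows essentially the same route as the paper's: a needle-in-a-haystack valuation whose single high-value bundle (the grand bundle) is revealed by one low-price demand query but missed by $k$ uniform value queries with probability at least $1-k2^{-m}$, with a small $\epsilon$-term pinning the value-query auction's welfare near zero. The only cosmetic differences are that you fold the paper's two bidders (an $\epsilon$-bidder and a grand-bundle bidder) into a single additive valuation, tailor the price distribution to $[0,\epsilon)^m$ rather than using generic i.i.d.\ prices with positive mean, and give explicit finite-$m$ bounds instead of limiting arguments.
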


\begin{proof}[Proof of \Cref{le:OnlyVQsBadForSparseValueFunction}]
Let $n = 2$ and $c_1 = c_2 = \dots =  c_m = 1$, i.e., the auction has $m$ unique items. 
Bidder $1$ has a value of zero for the empty set and a value of $\epsilon > 0$ for any non-empty set of items, while bidder $2$ has a value of $V \to \infty$ for the full bundle, and a value of zero for any other bundle. Note that these are proper value functions, as they are both monotone and assign a value of zero to the empty set. 
The bundle space $\mathcal{X}$ has a size of $2^m$. 
For the auction that asks random value queries, the probability that bidder $2$ is queried her value for the full bundle conditioned on not having been asked that question in the previous $k$ queries is $\frac{1}{2^m - k}$.
For auction instances with large numbers of items, taking $m \to \infty$, the probability of the auction not querying bidder $2$ her value for the full bundle in $k$ random value queries is: 
\begin{gather}
    \lim_{m \to \infty} \prod_{j = 1}^k \left (1 - \frac{1}{2^m - (j - 1)}  \right ) = 
    \lim_{m \to \infty}  \prod_{j = 1}^k \left ( \frac{2^m - j}{2^m - (j - 1)} \right )  = 1
\end{gather}

If that query for the full bundle is asked to bidder $2$, then bidder $2$ will be allocated the full bundle and bidder $1$ will be allocated the empty bundle, and the social welfare of the final allocation will be equal to $V$. 
In any other case, bidder $1$ will be allocated a non-empty bundle, and the social welfare of the allocation will be equal to $\epsilon$. 

Now let's focus on the auction that asks each bidder a single random demand query, and assume that the price of each item is an i.i.d. random variable with mean value $p$.  
The expected total price for the full bundle is $m \cdot p$. 
Applying Chebyshev's inequality, the probability that the price of the full bundle is greater than $V$ is zero. Thus, bidder $2$ will always request the full bundle. 

The only possible scenario in which bidder $2$ is not allocated the full bundle is if bidder $1$ requests a non-empty bundle with equal (or higher) value than the bundle of bidder $2$, and ties are broken in bidder $1$'s favor. 
Given that bidder $1$'s value is at most $\epsilon$ for any non-empty bundle, with probability $1$, the value of the bundle requested by bidder $1$ is at most $\epsilon$.  
Given that the expected value of the price of each item is $p > 0$,  applying Chebyshev's inequality yields that as $m \to \infty$,  the probability of the full bundle having a price less than $\epsilon$ is zero.  
Thus, with probability $1$ bidder $1$ will have an inferred value less than bidder $2$ for the bundle she requested, and so with probability $1$ the full bundle will be allocated to bidder $2$, yielding a social welfare of $V \to \infty$ for this auction. 
This completes the proof.
\end{proof}

\begin{remark}
This limitation of random VQs is evidenced in practice. Empirical comparisons between VQ-based ML-powered mechanisms, such as \citet{weissteiner2023bayesian}, and DQ-based mechanisms, such as \cite{Soumalias2024MLCCA}, reveal that efficiency after initial queries is significantly lower for VQ-based approaches across all tested domains 
(see \Cref{fig:efficiency_path_plot_summary} in \Cref{sec:ExperimentalResults}).
\end{remark}

Beyond auction efficiency, the limited information provided by random VQs poses challenges for learning algorithms in ML-powered ICAS. 
In contrast, DQs provide global information about bidder preferences across the entire bundle space.
When bidder $i$ responds to a DQ at prices $p$, she solves the optimization problem:  
$x_i^*(p) \in \argmax_{x\in \X}\left\{v_i(x)-\iprod{p}{x}\right\}$,
which reveals valuable information about her preferences across all possible bundles. 
Strong evidence for this is presented in \Cref{subsec:learning_advantages_experiments}, 
where we show that the network trained only on DQs exhibits better generalization performance than one trained on random VQs. 

Additionally, if DQ prices are sufficiently low, bidders respond with their value-maximizing bundles, which may be hard to recover through VQs alone. 
By incorporating this information, the learning algorithm can more effectively identify critical regions in the allocation space and subsequently focus on refining those areas. 
This advantage is further supported by our experiments (\Cref{fig:efficiency_path_plot_summary} in \Cref{sec:ExperimentalResults}). 
We show that in our ML-powered hybrid auction, the first ML-powered VQ after a series of DQs achieves significantly higher efficiency compared to the first ML-powered VQ after an equivalent number of random VQs in the current SOTA VQ-based auction.

Moreover, even if the auction finds an efficient allocation by using VQs, it cannot terminate early 
as there is no way for the auctioneer to certify that the auction has reached $100$\% efficiency.
In contrast, for DQ-based auctions there is an easy condition that allows the auction to terminate early:

\begin{proposition}\label{prop:ClearingPricesAreEfficient}
If clearing prices exist, an auction using DQs can provide a guarantee of optimal efficiency and terminate early. %
\end{proposition}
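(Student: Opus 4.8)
The plan is to identify a publicly verifiable stopping condition for the DQ phase and to show that (i) whenever it fires, the returned allocation is provably efficient, and (ii) whenever clearing prices exist, it is in principle reachable. The stopping condition is exactly the market-clearing check in \Cref{alg:MLHCA} that triggers the early return on \Cref{alg_line:return_clearing_allocation}: at some round with prices $p\in\R_{\ge0}^m$, the bidders' reported utility-maximizing bundles exactly clear the market, $\sum_{i\in N}\bigl(x_i^*(p)\bigr)_j=c_j$ for all $j\in M$. Write $a\coloneqq(x_i^*(p))_{i\in N}\in\F$.

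First I would check that such a $p$ is a clearing price vector in the sense of \Cref{def:linear_clearing_prices}: condition~\ref{itm:clearing_bidder} holds by the very definition of a demand-query response (\Cref{eq:utility_maximizing_bundle}), and condition~\ref{itm:clearing_seller} holds because, for linear prices, an allocation that sells every item attains the indirect revenue $\sum_{j\in M}c_j p_j$ (\Cref{eq:indirect_revenue} and its accompanying footnote), and $a$ does sell every item by the stopping condition. Next I would deduce that $a$ is efficient, via the standard linear-programming-duality / first-welfare-theorem argument (the same one underlying \Cref{thm:app:connection_clearing_prices_efficiency_constrainedVersion} and \citet[Theorem~3.1]{bikhchandani2002package}): for any feasible $b\in\F$,
\[
\sum_{i\in N} v_i(b_i)=\sum_{i\in N}\bigl(v_i(b_i)-\iprod{p}{b_i}\bigr)+\sum_{j\in M}p_j\sum_{i\in N}b_{ij}\;\le\;\sum_{i\in N}U(p,v_i)+\sum_{j\in M}c_j p_j,
\]
using $v_i(b_i)-\iprod{p}{b_i}\le U(p,v_i)$ and $\sum_{i\in N}b_{ij}\le c_j$ with $p_j\ge0$; for $b=a$ both inequalities are equalities by the previous paragraph, so $\sum_{i\in N}v_i(a_i)\ge\sum_{i\in N}v_i(b_i)$ for every $b\in\F$, i.e., $a$ is efficient. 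Note this bound involves only \emph{true} values and relies solely on the truthfulness of DQ responses, not on the ML models, so the auctioneer can terminate and output $a$ together with the certificate ``the requested bundles cleared the market''.

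For the ``if clearing prices exist'' half, I would argue that the stopping condition is attainable: if a clearing price vector $p^\ast$ exists, then by \Cref{def:linear_clearing_prices} there is a profile of utility-maximizing responses at $p^\ast$ that sells every item, so the condition in \Cref{alg_line:return_clearing_allocation} is satisfiable; and by \Cref{thm:app:connection_clearing_prices_efficiency_constrainedVersion}, any feasibility-constrained minimizer of $W(\cdot,v)$ is such a $p^\ast$, which is precisely the target of the \textsc{NextPrice} subroutine.

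There is essentially no deep difficulty here beyond definition-chasing, but the one point requiring care is tie-breaking: a bidder's demand correspondence at $p^\ast$ may be multi-valued, and a response that fails to clear the market can still be utility-maximizing (cf. footnote~\ref{foot:PrecisFormulationConstraint}). Hence the precise statement is that \emph{some} selection of demand responses clears the market, and the auction obtains its optimality certificate exactly when it observes such a selection; with deterministic tie-breaking this is a measure-zero nuisance, but it must be phrased carefully so as not to overclaim.
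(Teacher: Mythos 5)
Your proof is correct and follows essentially the same route as the paper, which simply observes that a market-clearing price vector yields a Walrasian equilibrium and hence $100\%$ efficiency (deferring the first-welfare-theorem computation to \citet[Appendix~C.1]{Soumalias2024MLCCA}). You merely write out that standard duality argument explicitly and add a careful remark about tie-breaking in the demand correspondence, both of which are consistent with the paper's intended argument.
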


\begin{proof}
If clearing prices have been found, the corresponding allocation constitutes a \emph{Walrasian equilibrium}, and thus has an efficiency equal to 100\%. See \citet[Appendix~C.1]{Soumalias2024MLCCA} for a detailed proof.
\end{proof}

\begin{remark}
This is indeed an issue in practice.
In \Cref{sec:ExperimentalResults}, we experimentally show that, in realistic domains, our MLHCA can often reach $100$\% efficiency before the common maximum number of $100$ rounds used by most ML-powered ICAs (e.g. \citet{weissteiner2020deep,weissteiner2022fourier,weissteiner2022monotone,weissteiner2023bayesian,Soumalias2024MLCCA}) is reached.
\end{remark}

\subsection{Disadvantages of Only Using DQs}\label{sec:DisadvantagesOnlyDQs}
In this section, we show the disadvantages of using DQs to elicit the bidders' preferences.

The first major disadvantage of an auction employing only DQs is that the auction's efficiency can actually drop by adding more DQs.

\begin{proposition}[Restatement of \Cref{le:EfficencyDropDQ}]
In a DQ-based ICA, adding DQs can actually reduce efficiency.
A single DQ can cause an efficiency drop arbitrarily close to $100$\%. 
By comparison, in a VQ-based ICA, adding additional queries can never reduce efficiency (assuming truthful bidding). 
\end{proposition}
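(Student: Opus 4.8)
The plan is to prove the two halves of the statement by entirely different means: for the efficiency‑drop claim I would build one explicit low‑competition example with a two‑query demand sequence, and for the value‑query half I would give a short monotonicity argument. I do not see a slick abstract argument for the first half, so an example is the natural route.

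For the drop, I would take $N=\{1,2\}$, two items $M=\{A,B\}$ of unit capacity, bidder $1$ additive with unit item value (so $v_1(S)=|S|$), and bidder $2$ a ``specialist'' with $v_2(S)=W$ whenever $A\in S$ and $v_2(S)=0$ otherwise, for a large constant $W$; both are monotone and normalized, and the efficient allocation $(\{B\},\{A\})$ has social welfare $1+W$. The first demand query uses a high price vector, e.g.\ $p^1=(\tfrac32,2)$, chosen so that $\emptyset$ is bidder $1$'s \emph{unique} utility maximizer while bidder $2$ still uniquely demands $\{A\}$; then $\widetilde v_1(\cdot;R_1)\equiv 0$, $\widetilde v_2(\{A\};R_2)=\tfrac32$, and every solution of the WDP (\Cref{WDPFiniteReports}) assigns $A$ to bidder $2$, so the realized efficiency is at least $W/(1+W)$. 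The second demand query uses a low, nearly uniform price vector, e.g.\ $p^2=(\tfrac9{10},\tfrac9{10})$, chosen so that $\{A,B\}$ is bidder $1$'s \emph{strict} utility maximizer (here $2-\tfrac95>1-\tfrac9{10}>0$) while bidder $2$ still demands $\{A\}$. By \Cref{def:inferred_value} this records $\widetilde v_1(\{A,B\};R_1)=\iprod{(1,1)}{p^2}=\tfrac95$, while $\widetilde v_1(\{B\};R_1)=0$ and $\widetilde v_2(\{A\};R_2)=\tfrac32$ are unchanged; hence the unique WDP optimum becomes $(\{A,B\},\emptyset)$ with true social welfare $2$, i.e.\ efficiency $2/(1+W)$. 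So a single added DQ drops efficiency by at least $(W-2)/(1+W)\to 1$ as $W\to\infty$.

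For the value‑query half, with truthful responses and no demand queries, \Cref{def:inferred_value} gives $\widetilde v_i(x;R_i)=v_i(x)$ for every reported bundle $x$ and $\widetilde v_i(x;R_i)=0$ otherwise, and enlarging the reports only enlarges each bidder's reported‑bundle set, so $\widetilde v_i(\cdot;R_i)$ is pointwise weakly increasing in $R_i$. I would then note that the WDP optimal value $\mathrm{OPT}(R):=\max_{a\in\F}\sum_{i}\widetilde v_i(a_i;R_i)$ equals the largest true social welfare attainable by an allocation that gives each bidder one of her reported bundles or $\emptyset$: replacing an unreported bundle in any WDP optimizer by $\emptyset$ preserves feasibility and the inferred objective, and on such allocations inferred welfare coincides with true welfare. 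Since adding value queries only enlarges the set of available reported bundles, $\mathrm{OPT}(R)$ — hence the efficiency $V(a^*(R))/V(a^*)=\mathrm{OPT}(R)/V(a^*)$ of the final allocation — is weakly increasing, which is the claim.

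The main obstacle is purely the bookkeeping in Part 1: one must pick the two price vectors so that (i) every demand response is the intended one and is the \emph{unique} maximizer, removing all tie‑breaking ambiguity, (ii) the inferred value $\iprod{(1,1)}{p^2}$ recorded by the second query strictly exceeds the best inferred social welfare reachable after the first query (this is why bidder $1$ must answer $\emptyset$ — not $\{B\}$ — to the first DQ, so that no inferred value for $\{B\}$ is built up), and (iii) the true welfare of the resulting ``everything to bidder $1$'' allocation stays bounded as $W\to\infty$. Checking these finitely many strict inequalities, and verifying that $\widetilde v_1(\{B\};R_1)$ and $\widetilde v_1(\{A\};R_1)$ remain $0$, is where the care lies; Part 2 is routine once the ``replace unreported bundles by $\emptyset$'' reduction is written out.
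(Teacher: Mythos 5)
Your proposal is correct and follows essentially the same strategy as the paper's proof: an explicit two-bidder, two-item instance in which an early DQ records a modest inferred value supporting the efficient assignment of the contested item and a later DQ at different prices records a larger inferred value for an inefficient bundle, flipping the WDP (the paper uses a three-DQ sequence with values $400$, $2$, $1.1$ and rescales to push the drop toward $100\%$, whereas you parametrize with $W\to\infty$; both work). Your second half is the same ``with only VQs, inferred social welfare equals true social welfare on reported bundles, so the WDP objective is monotone in the reports'' argument the paper gives, just spelled out via the replace-unreported-bundles-by-$\emptyset$ reduction.
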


\begin{proof}
Let $m=2$, $n=2$, $c_1=1$, $c_2=1$,
\begin{align*}
v_1&=\max\left \{ 400 \cdot \indicatorOne_{x_1\geq1},2 \cdot \indicatorOne_{x_2\geq1}\right \} \text{ and}
\\
v_2&=1.1 \cdot \indicatorOne_{x_1\geq1}.
\end{align*}

Suppose the auction has asked two DQs.
The first DQ $p=(1,1)$ is responded by both bidders with $(1,0)\in \argmax_{x\in \X}\left\{v_i(x)-\iprod{p}{x}\right\}$.
The second DQ $p=(1.2,1)$ is responded by bidder 1 with $(1,0)\in \argmax_{x\in \X}\left\{v_1(x)-\iprod{p}{x}\right\}$ and by bidder 2 with $(0,0)\in \argmax_{x\in \X}\left\{v_2(x)-\iprod{p}{x}\right\}$.

After these 2 DQs the WDP based on the inferred values (see \Cref{WDPFiniteReports}), would assign item 1 to bidder 1 (resulting in an inferred social welfare of $1.2+0=1.2$). 
This is the efficient allocation with a true social welfare (SCW) of $400$, i.e., an efficiency equal to 100\%.

Now suppose that a third DQ $p=(401,1)$ is added to the auction. 
Bidder $1$'s demand response is $(0,1)\in \argmax_{x\in \X}\left\{v_1(x)-\iprod{p}{x}\right\}$ and bidder 2's response is $(0,0)\in \argmax_{x\in \X}\left\{v_2(x)-\iprod{p}{x}\right\}$.
The WDP would now assign item $2$ to bidder 1 and item $1$ to bidder 2, resulting in an inferred SCW of $1+1=2$. This would result only in an efficiency of $\frac{2+1.1}{400}<1\%$.

While the inferred SCW obviously cannot decrease in any round (since the set we maximize over cannot decrease in any round and inferred values cannot decrease), we have shown here that the true SCW can decrease substantially. In this example, the SCW dropped by more than 99\%. One could easily modify this example to even obtain an efficiency drop arbitrarily close to 100\% if one decreases the values 1,1.2 and 2 (the prices and the values inside the value functions) by any small factor or increases the numbers 400 and 401, by any large factor. Then the proof would still work, which shows that the efficiency can even fall from 100\% to values arbitrarily close to 0\%.

On the other hand, if we only ask VQs, there is no difference between inferred SCW and true SCW (assuming truthful bidding), which results in non-decreasing SCW.
\end{proof}

\begin{remark} 
This is a significant issue in practice. 
In \Cref{sec:ExperimentalResults} we experimentally show that in the most realistic spectrum auction domain, the CCA's efficiency drops by over $7$\% with the introduction of more DQs. 
In a second realistic domain, the CCA actually has higher efficiency after just $5$ DQs compared to after $100$. 
This efficiency degradation is not only a concern for the CCA but also affects ML-powered DQ-based ICAs in similar ways. 
\end{remark}

In the next lemma, we show that the same issue arises in an auction that uses both DQs and VQs:

\begin{lemma} \label{le:mixed_auction_efficiency_drop}
In an auction that first uses DQs and then VQs, adding VQs can actually reduce efficiency. The efficiency drop can even be arbitrarily close to 100\%.
\end{lemma}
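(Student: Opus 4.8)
The plan is to adapt the construction in the proof of \Cref{le:EfficencyDropDQ} almost verbatim, simply replacing its ``harmful'' third DQ by a single harmful VQ. I would reuse the same two-bidder, two-item instance ($m=n=2$, $c_1=c_2=1$) with $v_1=\max\{400\cdot\indicatorOne_{x_1\geq1},\,2\cdot\indicatorOne_{x_2\geq1}\}$ and $v_2=1.1\cdot\indicatorOne_{x_1\geq1}$, together with the same two demand queries $p^1=(1,1)$ and $p^2=(1.2,1)$. As established there, after these two DQs the inferred values are $\widetilde{v}_1((1,0))=1.2$, $\widetilde{v}_2((1,0))=1$, and every other bundle has inferred value $0$; hence the WDP (\Cref{WDPFiniteReports}) uniquely assigns item $1$ to bidder $1$, which is the efficient allocation, so the DQ-only efficiency equals $100\%$.

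Instead of adding the third DQ at $p=(401,1)$, I would now have the auction ask bidder $1$ a single value query for the bundle $(0,1)$; truthfully she answers $v_1((0,1))=2$, so by \Cref{def:inferred_value} her inferred value for $(0,1)$ jumps from $0$ to $2$ (at least the value $1$ that the harmful DQ would have produced). With this one extra report, the inferred social welfare of the allocation ``item $2$ to bidder $1$, item $1$ to bidder $2$'' becomes $2+1=3>1.2$, so the WDP now strictly prefers it over assigning item $1$ to bidder $1$. Its \emph{true} social welfare is only $v_1((0,1))+v_2((1,0))=2+1.1=3.1$, against the optimum $V(a^*)=400$, i.e.\ an efficiency below $1\%$. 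Thus adding one VQ to a DQ-based auction dropped its efficiency from $100\%$ to under $1\%$.

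To get a drop arbitrarily close to $100\%$, I would invoke the same rescaling already used for \Cref{le:EfficencyDropDQ}: replacing the constant $400$ by an arbitrarily large $T$ (and/or shrinking the small constants $1,1.2,2,1.1$ by an arbitrarily small factor) leaves every step of the argument intact while sending the post-VQ efficiency $\tfrac{2+1.1}{T}$ to $0$, so the efficiency loss caused by the single VQ can be made arbitrarily close to $100\%$.

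The only thing that needs care is the finite case check in the WDP: one must verify that the added VQ actually flips the inferred-social-welfare $\argmax$ to the ``bad'' allocation and that no other feasible allocation attains higher inferred welfare — a short enumeration over the few feasible allocations, entirely analogous to the one in the proof of \Cref{le:EfficencyDropDQ}. I would also remark that the bridge bid of \Cref{def:bridge_bid} — which instead asks bidder $1$ her value for her WDP bundle $(1,0)$, revealing $400$ — precisely precludes this pathology, consistent with \Cref{lemma:bridge_bid}.
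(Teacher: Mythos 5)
Your proposal is correct and follows essentially the same route as the paper's own proof: the paper likewise reuses the instance and the first two DQs from the proof of \Cref{le:EfficencyDropDQ}, replaces the harmful third DQ with a VQ for bidder $1$ on the bundle $(0,1)$ (the paper additionally queries bidder $2$ on $(1,0)$, which is immaterial), and concludes that the WDP switches to the allocation with true social welfare $3.1$ versus the optimum $400$, with the same rescaling remark for the ``arbitrarily close to $100\%$'' claim.
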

\begin{proof}
   Consider the setting from the proof of \Cref{le:EfficencyDropDQ} including the first 2 DQs. Recall that in this setting after these 2 DQs, the WDP would achieve 100\% efficiency.
   Now instead of the third DQ, we ask the following VQ: We ask bidder 1 for her value of the bundle $(0,1)$ and we ask bidder 2 for her value of the bundle $(1,0)$. Then the WDP based on these 3 rounds would assign item 2 to bidder 1, and item 1 to bidder 2, as we explain in the following. The inferred SCW  $v_1(0,1)+v_2(1,0)=2+1.1$ (which is equal to the true SCW of this allocation) is higher than the inferred SCW of all other allocations consisting of elicited bundles: For bidder 1 the DQ responses were always $(1,0)$ with inferred value $1.2$, and the VQ elicited $v_1(0,1)=2$. For bidder 2, the DQ responses were $(1,0)$ with inferred value $1$ and $(0,0)$ with inferred value $0$, and the VQ response was $v_2(1,0)=1.1$. So we see that the highest inferred SCW among all feasible allocations is achieved by assigning item 2 to bidder 1 and item 1 to bidder 2 (e.g., assigning it the other way around would only achieve an inferred SCW of $1.2+0$, while the true SCW $v_1(1,0)+v_2(0,1)=400+0$ would be much larger).

   So the efficiency dropped from $100\%$ to $\frac{2+1.1}{400}<1\%$ after the VQ (i.e., the efficiency drops by more than 99\%).
\end{proof}

Even though \Cref{le:mixed_auction_efficiency_drop} shows that an auction using DQs followed by VQs can still experience an arbitrarily large efficiency drop, we can completely address this issue using a \textit{single}
carefully designed VQ, which we call the \enquote{bridge bid}.

\begin{definition}[Bridge bid] \label{def:bridge_bid}
    The bridge bid asks each bidder her value for the bundle she would have been allocated according to the WDP after the last DQ.  
\end{definition}

\begin{lemma} \label{lemma:bridge_bid}
In an ICA that first asks DQs and then VQs, by first using a single specific VQ, the bridge bid from \Cref{def:bridge_bid}, the auction can ensure its efficiency is at least as high as
the efficiency achieved by its DQs alone. 
\end{lemma}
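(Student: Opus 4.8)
The plan is to exploit two structural facts: the allocation produced by the WDP on the DQ reports is itself feasible, and the bridge bid pins down the bidders' \emph{true} values on exactly the bundles appearing in that allocation. Concretely, write $a^{\mathrm{DQ}}$ for the allocation returned by the WDP (\Cref{WDPFiniteReports}) after the last DQ round, i.e. when only the DQ reports have been elicited. By definition of the WDP, $a^{\mathrm{DQ}}\in\F$, and the efficiency achieved by the DQ phase alone is $V(a^{\mathrm{DQ}})/V(a^*)=\sum_{i\in N}v_i(a^{\mathrm{DQ}}_i)/V(a^*)$.

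First I would add the bridge bid (\Cref{def:bridge_bid}): each bidder $i$ now reports $(a^{\mathrm{DQ}}_i, v_i(a^{\mathrm{DQ}}_i))\in\Rvq_i$, so by \Cref{def:inferred_value} the inferred value of bidder $i$ for her bundle $a^{\mathrm{DQ}}_i$ under the augmented reports $R'=(\Rdq,\Rvq)$ satisfies $\widetilde{v}_i(a^{\mathrm{DQ}}_i;R'_i)=v_i(a^{\mathrm{DQ}}_i)$. Hence the \emph{inferred} social welfare of $a^{\mathrm{DQ}}$ under $R'$ equals its \emph{true} social welfare $V(a^{\mathrm{DQ}})$. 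Since the post-bridge WDP maximizes inferred social welfare over all feasible allocations and $a^{\mathrm{DQ}}\in\F$ is one such candidate, the WDP-optimal allocation $a^*(R')$ has inferred social welfare at least $V(a^{\mathrm{DQ}})$. Finally, because inferred social welfare is always a lower bound on true social welfare (the remark following \Cref{WDPFiniteReports}), $V(a^*(R'))=\sum_{i\in N}v_i(a^*(R')_i)\ge\sum_{i\in N}\widetilde{v}_i(a^*(R')_i;R'_i)\ge V(a^{\mathrm{DQ}})$; dividing by $V(a^*)$ gives the claim for the round immediately after the bridge bid.

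To extend the guarantee to all subsequent VQ rounds, I would note that the bridge-bid datapoints $(a^{\mathrm{DQ}}_i, v_i(a^{\mathrm{DQ}}_i))$ are never removed, so $\widetilde{v}_i(a^{\mathrm{DQ}}_i;\cdot)$ stays equal to $v_i(a^{\mathrm{DQ}}_i)$ and $a^{\mathrm{DQ}}$ stays feasible; thus the same two-line argument (candidate $a^{\mathrm{DQ}}$ in the WDP, then inferred $\le$ true) applies verbatim at every later round. The only points requiring care are bookkeeping rather than mathematical depth: one must track that ``DQ-only efficiency'' refers to the true welfare $V(a^{\mathrm{DQ}})$ of the DQ-phase WDP allocation, keep the two inequalities pointing the right way (the WDP maximizes inferred welfare; inferred welfare $\le$ true welfare), and observe that the case where DQ market-clearing prices were already found is vacuous, since then the bridge bid is not invoked and efficiency is already $100\%$. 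I do not anticipate a genuine obstacle; the lemma is essentially the observation that locking in true values on a feasible allocation makes that allocation a floor for the WDP.
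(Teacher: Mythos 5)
Your proof is correct and follows essentially the same route as the paper's: the bridge bid makes the inferred social welfare of the DQ-phase WDP allocation equal to its true social welfare, so that allocation serves as a feasible floor for every subsequent WDP, and the chain ``true $\ge$ inferred $\ge$ inferred value of $a^{\mathrm{DQ}}$ $=$ true value of $a^{\mathrm{DQ}}$'' closes the argument. If anything, your version is slightly cleaner, since you avoid the paper's (unneeded) intermediate claim that the winning allocation itself is unchanged after the bridge bid.
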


\begin{proof}
The bridge bid itself can obviously not decrease efficiency, because it simply replaces the inferred SCW of the winning allocation of the previous WDP with the true SCW of exactly the same allocation. In other words, the inferred values of the bundles of the previously WDP-winning allocation can be increased or stay the same, while all the other inferred values stay the same. Thus the winning allocation stays the winning allocation when the bridge bid is added.  For the remainder of the proof, we will show that all the VQs after the bridge bid can also not decrease the efficiency. In every further WDP another allocation can only outperform the bridge bid allocation if it has a higher inferred\footnote{Note that after every VQ it is still possible that the WDP combines bundles queried during any VQ with bundles that were DQ responses for any old DQ. Thus even after some VQs the inferred SCW of the WDP-winning allocation can be strictly smaller than its true SCW.} social welfare. However, if it has a higher inferred SCW, its true SCW cannot be lower than the one of the bridge bid. And as we have shown in the beginning of the proof, the SCW of the allocation of the bridge bid is equal to the SCW of the last WDP winner after the last DQ. Thus, for any VQ after the bridge bid the SCW cannot be worse than the winning allocation of the WDP right after the last DQ.
\end{proof}

\begin{remark}
Again, this in practice is highly impactful. 
In our experimental section (\Cref{sec:ExperimentalResults}), we show that in the most realistic domain, our MLHCA without this bridge bid loses $7$ percentage points of efficiency.  
The auction needs another $20$ VQs to recover its DQ-only efficiency. 
By using just a single specialized VQ, the bridge bid, we can completely alleviate this problem. 
For a more detailed discussion, see \Cref{sec:bridge_bid_evaluation}.
\end{remark}

The next theorem shows an even more fundamental limitation of only asking DQs. 
Specifically, asking only DQs can result in low efficiency, \textit{even} in the limit where we ask \textit{all} possible DQs.

\begin{theorem}[Restatement of \Cref{thm:DQsNotSufficient}]
For every $\epsilon>0$, there exist infinitely many instances of auctions for which no combination\footnote{We want to emphasize that \Cref{thm:DQsNotSufficient} holds for any combination of DQs, even combinations consisting of all (uncountably many) possible DQs (i.e., a DQ for every price vector in $[0,\infty)^m$). For these auction instances, even an oracle with complete knowledge about everything and infinite computational abilities could not ask any combination of DQs resulting in a more efficient allocation (see \Cref{rem:EvenOracleCanNotFindEfficientDQs}). \Cref{thm:DQsNotSufficient} implies that no DQ-only mechanism can guarantee an efficiency above 55\%.} of DQs can achieve an efficiency above $50\%+\epsilon$.
This remains true even if the bidders additionally report their true values for all bundles they requested in those DQs. 
\end{theorem}

\begin{proof}
First, we give an example with concrete numbers to convey the main intuition of the proof.
Let $n=2$, $m=1$, $c_1=10$,
$v_1(x)=100\indicatorOne_{x_1\geq1}$,
$v_2(x)=9x_1+\frac{1}{25}x_1^2$.
Here, the unique efficient allocation would assign 1 item to bidder 1 and the remaining 9 items to bidder 2, resulting in an SCW of $v_1(1)+v_2(9)=100+9\cdot9+\frac{9^2}{25}=184.24$.
However, there is no DQ $p\in \R^m_{\ge 0}$ that bidder 2 would answer with $x_2^*(p)=(9)$:
\begin{itemize}
    \item If the price $p$ is below $\frac{94}{10}$, bidder 2 will answer with $x_2^*(p)=(10)$;
    \item if the price $p=\frac{94}{10}$, bidder 2 will answer with either $x_2^*(p)=(10)$ or $x_2^*(p)=(0)$;
    \item if the price $p$ is higher than $\frac{94}{10}$, bidder 2 will answer with $x_2^*(p)=(0)$.     
\end{itemize}
Therefore, the WDP cannot assign 9 items to bidder 2 if only DQs were asked, no matter how many DQs were asked. Raising the bids for those bundles would also not help, because this would still not give us any value for 9 items for bidder 2. 
The best SCW that such WDPs based on DQ responses (and raised DQ responses) can achieve is thus $100$, which results in an efficiency of $\frac{100}{184.24}\approx 54.28\%$.

After this concrete intuitive example, we give the general proof.
Let $n=2$, $m=1$, $\max(2,\frac{1}{\epsilon})<c_1\in\N$,
$v_1(x)=c_1^2\indicatorOne_{x_1\geq1}$, $\delta\in\left(0,\min(0.5,\epsilon)\right)$,
$v_2(x)=(c_1-\delta)x_1+\frac{\delta}{c_1^2}x_1^2$.
Here, the unique efficient allocation would assign 1 item to bidder 1 and the remaining $c_1-1$ items to bidder 2, resulting in an SCW of $v_1(1)+v_2(c_1-1)=c_1^2+(c_1-\delta)(c_1-1)+\frac{\delta}{c_1^2}(c_1-1)^2$.
However, there is no DQ $p\in \R^m_{\ge 0}$ that bidder 2 would answer with $x_2^*(p)=(c_1-1)$, due to the strict convexity of $v_2$:
\begin{itemize}
    \item If the price $p$ is below $\frac{v_2(c_1)}{c_1}=(c_1-\delta)+\frac{\delta}{c_1}$, bidder 2 will answer with $x_2^*(p)=(c_1)$;
    \item if the price $p$ is exactly $\frac{v_2(c_1)}{c_1}$, bidder 2 will answer with either $x_2^*(p)=(c_1)$ or $x_2^*(p)=(0)$;
    \item if the price $p$ is higher than $\frac{v_2(c_1)}{c_1}$, bidder 2 will answer with $x_2^*(p)=(0)$.     
\end{itemize}
Therefore, the WDP cannot assign $c_1-1$ items to bidder 2 if only DQs were asked, no matter how many DQs were asked. Raising the bids for those bundles would also not help, because this would still not give us any value for $c_1-1$ items for bidder 2. 
The best SCW that such WDPs based on DQ responses (and raised DQ responses) can achieve is \href{https://www.wolframalpha.com/input?i=\%28c-\%5Cdelta\%29c\%2B\%5Cfrac\%7B\%5Cdelta\%7D\%7Bc\%5E2\%7Dc\%5E2+\%3Cc\%5E2}{thus} $c_1^2$, which results in an efficiency of $\frac{c_1^2}{c_1^2+v_2(c_1-1)}\href{https://www.wolframalpha.com/input?i=c\%5E2\%2F\%28c\%5E2+\%2B+\%28c-\%5Cdelta\%29\%28c-1\%29\%2B\%5Cfrac\%7B\%5Cdelta\%7D\%7B\%28c-1\%29\%5E2\%7Dc\%5E2\%29+\%3C\%3D1\%2F2\%2B\%5Cdelta\%2C+1\%2F2\%3E\%5Cdelta\%3E0\%2C+c\%3E\%5Cfrac\%7B1\%7D\%7B\%5Cdelta\%7D}{<} 50\%+\epsilon$. Since there are infinitely many possible choices of $\delta\in\left(0,\min(0.5,\epsilon)\right)$, the proof is concluded. (Note that there are infinitely many other scenarios that were also suitable for this proof.)
\end{proof}

Thus, every method that only asks DQs (e.g., CCA or \cite{Soumalias2024MLCCA}) will result in inefficient allocations even in the limit of infinitely many iterations in the case of certain value functions (even if raised clock bids are added in the supplementary round).

\begin{remark} The issue highlighted in \Cref{thm:DQsNotSufficient} also arises in practical settings.
In \Cref{sec:ExperimentalResults}, we experimentally show that in the most realistic domain, MRVM, the final 50 DQs of ML-CCA \citep{Soumalias2024MLCCA}, the current SOTA DQ-based ICA, only increase efficiency by  0.3\% points. 
If the bidders also report their true values for all bundles they requested, this only causes an efficiency increase of less than $0.2$\% points. 
In contrast, for MLHCA, the last $30$ VQs cause an efficiency increase of over $1.8$\% points. 
For the other domains, we see a qualitatively similar picture in \Cref{fig:efficiency_path_plot_summary}.
\end{remark}

\begin{remark}\label{rem:EvenOracleCanNotFindEfficientDQs}
    The proof of \Cref{thm:DQsNotSufficient} is \emph{not} related to \Cref{le:EfficencyDropDQ}. \Cref{thm:DQsNotSufficient} also holds if you allow an oracle with complete information to choose any set of DQs (i.e., without using any of the harmful DQs from \Cref{le:EfficencyDropDQ}). \Cref{thm:DQsNotSufficient} and \Cref{le:EfficencyDropDQ} are two distinct orthogonal problems of DQs. For example, \Cref{thm:DQsNotSufficient} also holds when the \emph{clock-bids raised} heuristic is used, while \Cref{le:EfficencyDropDQ} does not hold if the \emph{clock-bids raised} heuristic is used.\footnote{When the \emph{clock-bids raised} heuristic is used, there are no harmful DQs (i.e., every further DQ can only increase the SCW, but never decrease the SCW). Even then, \Cref{thm:DQsNotSufficient} shows that any arbitrary (possibly infinite) set of DQs cannot reach more than 55\% efficiency for the auction instance in the proof of \Cref{thm:DQsNotSufficient} for example.} 
\end{remark}

In \Cref{thm:DQsNotSufficient}, we showed that a DQ-based auction cannot guarantee full efficiency. 
Intuitively, the driving force behind this limitation is that 
despite the broad information that DQs provide, they cannot fully reveal a bidder's value function. 
In \Cref{ex:OneVQNeeded}, we show a practical example where both linear and non-linear value functions would result in exactly the same response to any DQ by the bidder. 
However, the same is not true for a VQ-based auction, leading  to the following result: 

\begin{lemma}\label{thm:InfinteVQsSufficient}
Asking each bidder $\prod_{j=1}^m c_j$ different VQs guarantees that the allocation will be efficient. This result holds true if DQs are added to the auction.
\end{lemma}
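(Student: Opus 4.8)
The plan is to unwind the definitions: choose the value queries so that every bidder's \emph{inferred} valuation coincides with her true valuation on all of $\X$, which collapses the Winner Determination Problem \eqref{WDPFiniteReports} onto the very welfare-maximization problem that defines an efficient allocation.

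First I would have each bidder $i\in N$ answer a value query for every bundle $x\in\X$. Since $\X=\{0,\dots,c_1\}\times\cdots\times\{0,\dots,c_m\}$ has $\prod_{j=1}^m(c_j+1)$ elements and the empty bundle $\vzero$ carries the known value $v_i(\vzero)=0$ (value functions are normalized, as noted in the proof of \Cref{le:OnlyVQsBadForSparseValueFunction}), at most $\prod_{j=1}^m(c_j+1)$ distinct VQs per bidder suffice to make $(x,v_i(x))\in\Rvq_i$ for every $x\in\X$. The crucial point is then the first case of \Cref{def:inferred_value}: whenever $(x,v_i(x))\in\Rvq_i$ we have $\widetilde v_i(x;R_i)=v_i(x)$, so after these queries $\widetilde v_i(\cdot;R_i)=v_i$ as functions on $\X$, for every $i$.

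Second, I would substitute this identity into \eqref{WDPFiniteReports}. For every feasible allocation $a\in\F$,
\[
\sum_{i\in N}\widetilde v_i(a_i;R_i)=\sum_{i\in N}v_i(a_i)=V(a),
\]
i.e.\ the inferred social welfare equals the true social welfare on all of $\F$. Hence the WDP maximizer satisfies $a^*(R)\in\argmax_{a\in\F}V(a)$, so $V(a^*(R))=V(a^*)$ and the returned allocation has efficiency $1$. For the clause ``this result holds true if DQs are added'': augmenting any $R_i$ with demand-query data only affects the \emph{else}-branch of \eqref{eq:inferred_value}, which is never evaluated for a bundle whose VQ has already been answered. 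Thus the identity $\widetilde v_i(\cdot;R_i)=v_i$ --- and therefore the entire argument --- is untouched. This is precisely why the non-monotonicity phenomenon of \Cref{le:EfficencyDropDQ} cannot bite here: no inference step is used once every value is elicited directly.

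I do not anticipate a genuine obstacle; the statement is essentially a definitional consequence. The only care needed is (i) invoking the case split of \Cref{def:inferred_value} in the right direction, so that direct VQ answers dominate DQ-based lower bounds, and (ii) the elementary counting that all of $\X$ is covered by the stated number of queries, using $v_i(\vzero)=0$ to drop the trivial bundle. If the count $\prod_{j=1}^m c_j$ is meant verbatim rather than $\prod_{j=1}^m(c_j+1)$, the cleanest phrasing is simply ``query every bundle in $\X$''; no sharper bound is needed for the conclusion.
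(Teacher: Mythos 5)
Your proof is correct and takes essentially the same route as the paper's own (two-line) proof, which simply observes that once all values are elicited the WDP coincides with true social-welfare maximization; you merely make explicit the case split in \Cref{def:inferred_value} and why added DQs cannot interfere. Your side remark that $|\X|=\prod_{j=1}^m(c_j+1)$ rather than $\prod_{j=1}^m c_j$ is a fair catch about the stated query count, but it does not affect the substance, since the intended reading is clearly ``query every bundle in $\X$.''
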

\begin{proof}
If the bidders give us their values for all possible bundles, then we have access to their complete value functions. Then the WDP is equivalent to optimizing the SCW.
\end{proof}

Thus, \cite{weissteiner2020deep,weissteiner2022fourier,weissteiner2022monotone,weissteiner2023bayesian} and the hybrid method that we introduce in this paper have a guarantee to converge to an efficient allocation in the limit of infinitely many iterations.\footnote{Note that all these methods always enforce to ask a new VQ in any round, i.e., if the WDP suggests to ask a bidder a VQ for a bundle she was already asked for in a previous round, then we solve a constrained WDP instead with the constraint that this bidder is not allowed to be asked for any previously asked bundle again.}

The number $\prod_{j=1}^m c_j$ in \Cref{thm:InfinteVQsSufficient} is obviously just a worst-case bound for the worst possible querying strategy.
In theory, it would be sufficient to ask each bidder only one VQ $v_i(a^*_i)$ corresponding to the efficient allocation $a^*$ to achieve 100\% efficiency. However, in practice, the auctioneer does not know $a^*$.
Our experimental results displayed in \Cref{tab:mlhca_vs_boca_mlcca_combined_redacted} show that our method MLHCA usually finds the efficient allocation after \emph{much} fewer queries than $\prod_{j=1}^m c_j$. And once each bidder answers the VQ $v_i(a^*_i)$, further queries cannot reduce the efficiency anymore.

To better understand the importance of asking VQs, we directly compare \Cref{thm:DQsNotSufficient} and \Cref{thm:InfinteVQsSufficient}. While \Cref{thm:DQsNotSufficient} proves that there exist auction instances where even an oracle with full information cannot find any combination of DQs that results in a decent efficiency above 55\%, we know that an oracle can always find a single VQ that directly results in a perfect efficiency of 100\%. \Cref{thm:InfinteVQsSufficient} proves that even without any prior information, asking at most $\prod_{j=1}^m c_j$ arbitrarily stupidly chosen different VQs always results in 100\% efficiency, while \Cref{thm:DQsNotSufficient} proves that even asking infinitely many arbitrarily smartly chosen DQs can result in catastrophically bad efficiencies below 55\%. In practice, we neither have full oracle information nor can we afford to ask $\prod_{j=1}^m c_j$ stupidly chosen different VQs; however, our experiments show that a few VQs chosen by our MLHCA usually result in very high efficiencies.

\subsection{The Advantages of Combining DQs and VQs}\label{sec:theoretical_analysis_combining_dqs_vqs}
\paragraph{DQs are Cognitively Simpler Than VQs Early in the Auction.} All ML-powered, VQ-based ICAs in the literature begin by asking bidders their values for uniformly at random selected bundles to initialize the ML models. In contrast, the SOTA ML-powered DQ-based approach \citep{brero2018bayesian,brero2019fast,Soumalias2024MLCCA} starts by asking bidders for their preferred bundles at low initial prices that gradually increase over rounds. 
From a practical standpoint, it is nearly impossible for bidders to accurately assess VQs for randomly chosen bundles, whereas responding to DQs with low prices is far easier.\footnote{In the example from \Cref{foot:randomVQ,foot:DQSupermarket}, imagine being asked your value for a bundle of 30 frying pans and 500 coconuts. It’s hard to assess such a random combination. Now, imagine shopping at a supermarket with a 50\% discount across all items; it’s easier to determine what items you want under these conditions.} 
As the auction progresses and the bidders' ML models become more accurate, a VQ-based ML-powered ICA can ask targeted VQs that align better with bidder interests, making them easier to answer.\footnote{Continuing with our example, imagine being asked for the value of ingredients specifically for a strawberry cake in one iteration and for a blueberry cake in the next. If your goal is to bake a cake, these targeted VQs are much easier to respond to.} See \Cref{sec:Intuitive Arguments Why DQs are Cognitively Simpler Than VQs Early in the Auction} for an extended discussion.

\paragraph{DQs are More Effective in the Early Stages of the Auction.} Initially, the auctioneer lacks knowledge of which bundles align with bidders' interests. 
Beginning with DQs allows the auctioneer to gather early insights about the bidders' preferences over the whole bundle space, facilitating the use of more targeted queries later on.
This practice is well-established in the combinatorial auction community. For instance, the initial DQ phase in the CCA is often referred to as a “price discovery phase” \citep{ausubel2006clock}. 
We argue that the same concept holds even in ML-powered auctions. 
Our experiments in \Cref{sec:ExperimentalResults} confirm that DQ-based approaches (e.g., ML-CCA \citep{Soumalias2024MLCCA}) outperform VQ-based approaches \citep{weissteiner2020deep, weissteiner2022fourier, weissteiner2022monotone, weissteiner2023bayesian} during the early rounds of the auction. 
However, as suggested by \Cref{thm:DQsNotSufficient} and \Cref{thm:InfinteVQsSufficient} , VQ-based approaches eventually surpass DQ-based mechanisms in later iterations.

A key contributing factor as to why VQ-based ML-powered approaches perform better than DQ-based approaches is that they can take into account the WDP, i.e., the downstream optimization problem that will determine the final allocation.\footnote{By definition, all the bundles in a VQ form a feasible allocation. Furthermore, VQs typically allocate (almost) all items to bidders, as they maximize the estimated social welfare. The MVNN architecture ensures monotonicity in the estimated value functions. If the estimated value functions were strictly monotonic, the solution to the MILPs determining the next VQ would always allocate all items.} In contrast, responses to a single DQ often lead to over-demand for certain items or leave some items unassigned (under-demand). In \Cref{ex:OneVQNeeded}, bidder 2 lacks information to know that she should bid for 9 items. Only the auctioneer, having information from all bidders, knows that assigning 9 items to bidder 2 would complement bidder 1's preferences. The auctioneer can leverage this aggregated knowledge by asking bidder 2 a VQ for 9 items, whereas DQs alone would not provide this opportunity.

\begin{example} \label{ex:OneVQNeeded} 
In the example from the proof of \Cref{thm:DQsNotSufficient}, after sufficiently many DQs have been asked, a single VQ would suffice to increase the efficiency from 
$\approx 54.28\%$ 
to $100$\%. MLHCA would ask this VQ in its first VQ round, provided that enough DQs had been asked beforehand, as we explain in the remainder of this example. $v_1$ can be very precisely reconstructed from DQs $p=\epsilon$ (which is responded by $x=(1)$), $p=100-\epsilon$ (which is responded by $x=(1)$), and $p=100+\epsilon$ (which is responded by $x=(0)$). The last two DQs reveal that $100-\epsilon \leq v_1(1) \leq 100+\epsilon$.
And the first DQ reveals that $v_1(x)-v_1(1)\leq (x-1)\epsilon$ for any $x>1$.
Combining these information reveals that $v_1\leq 100\indicatorOne_{\cdot\geq 1} +10\epsilon$ and with the help of monotonicity these 3 DQs reveal that $v_1\geq 100\indicatorOne_{\cdot\geq 1} -\epsilon$. So, we can reconstruct the true $v_1$ up to $10\epsilon$.
For bidder 2, from DQs $p=\frac{94-\epsilon}{10}$ (which is responded by $x=10$) and $p=\frac{94+\epsilon}{10}$ (which is responded by $x=0$), we can only reconstruct that $v_2(\cdot)\leq \frac{94+\epsilon}{10}(\cdot)$ and that $v_2(10)\geq 94-\epsilon$. E.g., the linear function $ \frac{94}{10}(\cdot)$ would not contradict any possible DQ response from bidder 2. Our ML algorithm should not have any problem with estimating $v_1$ sufficiently well. If additionally, our ML algorithm estimates $v_2$ (approximately) as this linear function $ \frac{94}{10}(\cdot)$, then the WDP would directly assign 1 item to bidder 1 and 9 items to bidder 2, which is the efficient allocation. In theory, MVNNs could also express functions that achieve 0 training loss on all DQs for bidder 2 but do not result in an efficient allocation. However, these functions would be highly non-linear and for many NN architectures it is shown that they prefer functions which are in a certain sense close to linear \citep{implReg1,implReg2, HeissPart3Arxiv,HeissInductiveBias2024}. %
In \Cref{appendix:InductiveBias} we explain, why our MVNNs would learn a linear approximation of $v_2$. Therefore the WDP would result in the efficient allocation in this example.
\end{example}

\begin{remark}
Note that this example is not pathological. 
In \Cref{sec:ExperimentalResults}, we will show that in 
realistic domains using $40$ DQs and only $2$ VQs (1 bridge bid + 1 ML-VQ), our MLHCA can achieve higher efficiency than the SOTA DQ-based mechanism using $100$ queries. 
\end{remark}

Our MLHCA is the first auction to integrate both a sophisticated DQ and VQ generation algorithm. By leveraging insights from auction theory and starting with DQs before transitioning to VQs, MLHCA achieves state-of-the-art efficiency in all rounds and demonstrates significantly improved final efficiency across all domains compared to the current state-of-the-art.

Moreover, we argue that the combination of DQs and VQs is particularly powerful for learning bidders' value functions, as the information from these two query types complements each other nicely (see \Cref{sec:Learning Advantages of DQs and VQs}).

\subsection{Why One Should Ask DQs Before VQs and Not the Other Way Around}\label{sec:Why One Should Ask DQs Before VQs and Not the Other Way Around}

We neither want to claim that DQs are more informative or better than VQs, nor the other way around. We strongly believe that DQs are more informative and practical at the start of the auction, while we also believe that VQs are more informative and more effective at the end of the auction. We have multiple different reasons to believe so, adding up to very large effects in our experiments in \Cref{subsec:experiment_inverse_query_order}.
In the following paragraphs, we'll quickly summarize these reasons.

\paragraph{Reason 1: Cognitively Simpler.}
While our experiments do not measure the cognitive load on bidders, we already argued in \Cref{sec:theoretical_analysis_combining_dqs_vqs} that at the early stage of the auction, answering VQs is considered to be extremely difficult by practitioners, while starting ICAs with DQs is and ending them with value-bids is very common in practice. See \Cref{sec:Intuitive Arguments Why DQs are Cognitively Simpler Than VQs Early in the Auction} for an extended discussion.

\paragraph{Reason 2: Global Exploration vs Local Exploitation.}
At the start of an auction, it is not known which regions of the bundle space could be particularly relevant for a bidder. DQs provide very quickly a coarse but global overview of a bidder's value function. This overview remains very valuable for all later iterations in deciding which regions are worth exploring and which are not. In contrast, the values of some random bundles (which are with high probability far away from the relevant region of the bundle space for large, combinatorial domains) will become almost irrelevant during the later iterations of the auction. This is why DQs are more valuable at the start of the auction. 
However, towards the end of the auction, entirely the opposite is true; 
the ML models already have lots of information on the bidders' value functions and therefore can pinpoint which regions of the bundle space are relevant for which bidder. 
Therefore very precise local information in exactly those regions becomes extremely valuable, and VQs can ask for exactly this kind of information, whereas the coarse, global information of DQs brings almost no additional value at this point.\footnote{
For strong empirical evidence of this diminished value of DQs, see \Cref{subsec:experiment_inverse_query_order}
}
\Cref{thm:DQsNotSufficient,ex:OneVQNeeded} illustrate this point mathematically. They show scenarios in which it is  impossible for DQs to provide \emph{any} additional information on the bidders' value functions, since infinitely many very different value functions with the same concave envelope would result in exactly the same DQ reply, making these value functions indistinguishable from any DQ.
At the same time, this information would be crucial for achieving an acceptable efficiency, 
and those value functions are clearly distinguishable using VQs.
In this example, the first ML-VQ after sufficiently many DQs would directly deliver the needed information to almost double the efficiency. And, in contrast to DQs, \Cref{thm:InfinteVQsSufficient} shows that, asking sufficiently many VQs can always exactly recover the true value function.

\paragraph{Reason 3: ML-VQs Result in Compatible Bids, While DQs Do Not.}
Each bidder answers a DQ with a requested bundle. 
However, combining all these bundles usually does not result in a feasible allocation that allocates all items.
In fact, this is only the case if linear \textit{clearing prices} have been found, which is extremely challenging in realistic domains. In fact, in the most realistic domain, the SOTA approach of \citet{Soumalias2024MLCCA} clears $0$\% of the instances tested,  and it is not even known whether linear clearing prices exist. For the auction in \Cref{ex:OneVQNeeded}, linear clearing prices provably do not exist, making it mathematically impossible to find clearing prices via DQs.  
However, for our ML-VQs, our algorithm first selects a promising \emph{feasible} allocation (that allocates all items) and then asks each bidder her value for the bundles she receives in that allocation. This implies that for an ML-VQ round, the bids of all the bidders together result in a feasible allocation. In the early rounds of an auction, the efficiency of the intermediate WDPs are completely irrelevant, as they are not final. The goal of the first rounds is mainly to gather relevant information for later rounds. Therefore, it does not hurt that the bids from the first DQ-rounds are not compatible. However, exactly the opposite is true in the last rounds of the auction. For example, in the very last round, improving the models of bidders' value functions does not provide \emph{any} value anymore, since these models are not used anymore after the last round. However, obtaining a highly efficient \emph{feasible} allocation is the number one goal at the end of the auction. \Cref{thm:DQsNotSufficient,ex:OneVQNeeded} show that in many scenarios, even if one had perfect information on all the value functions, no combination of DQs can obtain bids that result in a feasible allocation with an acceptable efficiency. In contrast, in any possible scenario, a single ML-VQ would always result in a feasible allocation with 100\% efficiency if one had perfect information on all the value functions. And our experiments show that after 40 DQs (and 1 bridge bid), we often have already sufficient information to directly obtain 100\% efficiency after a single ML-VQ, and even if we do not directly achieve 100\% efficiency after the first ML-VQ, 
we usually achieve already a very good efficiency 
a few ML-VQs later.

We think that Reason 1, is the most relevant for implementing practical auctions while being completely unrelated to our experimental results. We hypothesize that Reasons 2 and 3 are the main explanations of our experimental results and explain why adding DQs \emph{after} 80 VQs does not bring any efficiency gains in our experiments in \Cref{subsec:experiment_inverse_query_order}. Note that both Reason 2 and 3 consist of 4 subreasons each. Each of them consist of an advantage of DQs in early rounds, a disadvantage of VQs in early rounds, a disadvantage of DQs in late rounds and an advantage of VQs in late rounds. They all point in the same direction to first ask DQs and then ask VQs afterwards. While we did not find a single reason to switch the order, we found even further (maybe more subtle) reasons to start the auction with DQs. 

\paragraph{Reason 4: At the Start DQs Provide Better Bids and Better Local Information Than VQs}
Reason 2 says that at the start of the auction, \emph{mainly} global information is relevant, and Reason 3 says that at the start, one cares \emph{mainly} about learning the value functions rather than identifying bundles relevant for the WDP. However, even at the start it is beneficial to \emph{additionally} obtain useful local information and relevant bids. \Cref{le:OnlyVQsBadForSparseValueFunction} is telling us that at the beginning of the auction DQs even provide more relevant local information than random VQs, especially for high dimensional bundle spaces. While random VQs result in bids for random bundles, DQs always result in bids for bundles that are at least to some extent aligned with the bidders' interests, providing useful local information as well. \Cref{fig:efficiency_path_plot_summary_inverse} in \Cref{subsec:experiment_inverse_query_order} suggests that the dimension of GSVM and LSVM is high enough for this argument to hold, but the 3 dimensions of SRVM are not enough.

\subsection{Intuitive Arguments Why DQs are Cognitively Simpler Than VQs Early in the Auction}\label{sec:Intuitive Arguments Why DQs are Cognitively Simpler Than VQs Early in the Auction}
This subsection offers an intuitive and informal discussion, grounded in common sense and informed by conversations with practitioners, in contrast to the more formal and scientifically rigorous analysis found throughout the rest of the appendix and the main paper.
While answering a DQ may appear computationally hard in theory\footnote{%
While the worst-case computational cost of answering a DQ is exponential, our algorithm computes near-optimal answers millions of times per auction in under 0.2 seconds per query (see \cref{alg_dq:line5} in \Cref{alg:mixed_training}). Although real-world value functions may be more complex than our MVNN approximations, the main bottleneck in practice is not computation but uncertainty in the bidder's own value estimates.}, real-world bidders often do not need to compute the precise value of any bundle to answer a DQ. Instead, they can make comparisons based on relative value. For instance, a bidder may know that certain licenses are not worth their prices and can confidently exclude them without quantifying by how much. This is sufficient to give an exact answer to a DQ.

For example, the experimental setting in \citet{CognitiveLimitsCombinatorialAuctionsScheﬀel_Ziegler_Bichler} does not optimally reflect this reality. In their study, participants were given explicit formulas for their value functions. In that setting, it is relatively easy to evaluate the value of any bundle (i.e., to answer a VQ), but computationally hard to solve the optimization problem required to answer a DQ. However, in actual spectrum auctions, bidders do not have such formulas. Especially for bundles outside their business model, it can be very difficult to estimate values at all. In contrast, bidders often have strong intuition about how different bundles compare to each other, which is exactly what DQs require. In this sense, real-world bidders are often better equipped to answer DQs than VQs—opposite to the assumptions in \citet{CognitiveLimitsCombinatorialAuctionsScheﬀel_Ziegler_Bichler}.

To further illustrate, imagine a shopper who only visits a supermarket once a year and must buy enough food to survive the year—just as telecoms acquire spectrum licenses only in rare, infrequent auctions. It may be difficult for the shopper to assign an exact monetary value to the minimum amount of food they need. Yet, when comparing bananas and apples with price tags, it's relatively easy to decide which one to buy more of, based on their relative value. Making this choice corresponds to answering a DQ and does not require the bidder to give absolute values.

In established real-world spectrum auctions, such as the CCA, the auction begins with DQs and only later allows bidders to report values for additional bundles in a supplementary round \citep{ausubel2006clock}. This structure reflects the widely held belief among practitioners—and one we strongly share—that VQs are more difficult to answer at the beginning of an auction than toward the end, when both bidders and the auctioneer have developed a better understanding of relevant bundles and prevailing price levels. While this hypothesis is well aligned with practical auction design, there is still limited scientific work formally analyzing the cognitive demands of DQs versus VQs. We believe future research could provide a more refined and nuanced understanding of when and why each type of query is cognitively easier or harder to answer—insights that could further inform auction design in practice.

\newpage
\clearpage

\section{Details on \Cref{sec:learning_advantages_reduced}}\label{sec:Learning Advantages of DQs and VQs} \label{sec:learning_advantages}
In this section, we introduce our mixed training algorithm and provide experimental evidence supporting our theoretical analysis from \Cref{sec:auction_advantages}. 
Specifically, we demonstrate the learning benefits of initializing auctions with DQs rather than VQs and highlight how combining DQs with VQs leads to superior learning performance.

\subsection{Training Algorithm Detailed Description} \label{sec:app:training_algorithm}
In this section, we provide the details on our training algorithm to combine DQs and VQs. 
To leverage the advantages of both DQs and VQs, we propose a straightforward two-stage training algorithm. 
In each epoch, the ML model is first trained on all DQ responses using the loss function from \citep{Soumalias2024MLCCA} 
(Lines \ref{alg_dq:line5} to \ref{alg_dq:line7}).
The main idea behind this loss is that for each DQ, an optimization problem is solved to predict the bidder's utility-maximizing bundle at the given prices, treating her ML model as her true value function. 
In case the predicted reply disagrees with the bidder's true reply, the loss is the difference in predicted utilities between these 2 bundles, given the current prices. 
Next, in each epoch, the model is trained on the VQ responses using a standard regression loss (Lines \ref{alg_dq:vq_start} to \ref{alg_dq:vq_end}).
This mixed approach ensures that the model benefits from both the broad information of DQs and the precise value information from VQs. 

\begin{algorithm2e}[h!]
    \DontPrintSemicolon
    \SetKwInOut{Input}{Input}
    \SetKwInOut{Output}{Output}
    \Input{Demand query data $\Rdq_i=\left\{\left(x_i^*(p^{r}),p^{r}\right)\right\}_{r=1}^{R}$, 
    Value query data $\Rvq_i = \left\{\left(x_i^l, v_i (x_i^l)\right)\right\}_{l=1}^{L}$
    Epochs $T\in \N$, Learning Rate $\gamma>0$, Cardinal loss function $F$ (e.g., least-square loss)}
    $\theta_0 \gets $ init mMVNN  \Comment*[f]{\citet[S.3.2]{weissteiner2023bayesian}}\; 
    \For{$t = 0 \text{ to } T - 1$}{
        \For(\Comment*[f]{Demand responses for prices}){$r = 1 \text{ to } R$}{
         Solve $\hat{x}^*_i(p^r) \in \argmax_{x\in \X}\mathcal{M}_i^{\theta_t}(x) - \iprod{p^r}{x}$\label{alg_dq:line5}\;
          \If(\Comment*[f]{mMVNN is wrong}){$\hat{x}^*_i(p^r) \neq x_i^*(p^{r})$\label{alg_dq:line6}}
          {
          $L(\theta_t) \gets 
       \left((\mathcal{M}_i^{\theta_t}(\hat{x}^*_i(p^r)) - \iprod{p^r}{\hat{x}^*_i(p^r)}) - (\mathcal{M}_i^{\theta_t}(x_i^*(p^{r})) - \iprod{p^r}{x_i^*(p^{r})})\right)^+$\label{alg_dq:line7}
       \Comment*[r]{Add predicted utility difference to loss}
       $\theta_{t+1} \gets \theta_{t} - \gamma(\nabla_{\theta} L(\theta))_{\theta=\theta_t}$\Comment*[r]{SGD step}
        }
        }
        \For(\Comment*[f]{Value Queries}){$l = 1 \text{ to } L$}{ \label{alg_dq:vq_start}
        $L(\theta_t) \gets F(\mathcal{M}_i^{\theta_t}(x^l_i), v_i(x^l_i) ) $  \Comment*[r]{Cardinal Loss on VQs}
        $\theta_{t+1} \gets \theta_{t} - \gamma(\nabla_{\theta} L(\theta))_{\theta=\theta_t}$\Comment*[r]{SGD step} \label{alg_dq:vq_end}
        }
    }
    \Return{Trained mMVNN $\mathcal{M}_i^{\theta_T}$}
    \caption{\textsc{MixedTraining}}
    \label{alg:mixed_training}
\end{algorithm2e}

\begin{remark}[Computationally efficient implementation of \Cref{alg:mixed_training}]
In practice, running \Cref{alg:mixed_training} exactly the way it is printed here would be quite slow. Our actual implementation does not rerun \cref{alg_dq:line5} \emph{every} epoch, but reuses $\hat{x}^*_i(p^r)$ for a couple of epochs. This does not violate the bidder's preference due to the positive part in \cref{alg_dq:line7}. In this way, we can significantly speed up the training algorithm.
\end{remark}

This training algorithm can be interpreted as a stochastic gradient descent on the loss function
 \[L_i(\theta):=
    \sum_{r=1}^{R}\Big(\mathcal{M}_i^{\theta}(\hat{x}^*_i(p^r)) - \iprod{p^r}{\hat{x}^*_i(p^r)}%
- \left(\vphantom{\big(}\mathcal{M}_i^{\theta}(x_i^*(p^{r})) - \iprod{p^r}{x_i^*(p^{r})}\right)\Big)^+
+\sum_{l=1}^{L} F(\mathcal{M}_i^{\theta}(x^l_i), v_i(x^l_i) )
\label{subeq:dataLossDQVQmean},\]
where $F$ is a loss function (e.g., $F(y,\tilde{y})=(y-\tilde{y})^2$) and  $\hat{x}^*_i(p^r) \in \argmax_{x\in \X}\mathcal{M}_i^{\theta_t}(x) - \iprod{p^r}{x}$.

In practice, one can obviously use modifications of this algorithm such as adding regularization or momentum or other typical deep learning techniques. 

\subsection{Experimental Analysis} \label{subsec:learning_advantages_experiments}
In this section, we demonstrate the learning benefits of initializing auctions with DQs rather than VQs and highlight how combining both query types leads to superior learning performance.

We conduct the following experiment:
We perform \emph{hyperparameter optimization (HPO)} to train an mMVNN for the most critical bidder type in the most realistic domain—the national bidder in the MRVM domain. 
In \Cref{sec:app_learning_experiments} we present the same experiment for all other domains. 
Our HPO procedure is the following. 
For a  single bidder of that type, we generate three distinct training sets:

\begin{enumerate}
    \item The first training set contains 40 DQs simulating 40 CCA clock rounds, along with 20 VQs for bundles chosen uniformly at random.
    \item The second training set consists of 60 DQs, simulating 60 CCA clock rounds, with no VQs.
    \item The third training set contains 60 VQs and no DQs.
\end{enumerate}

We evaluate the generalization performance of the trained models on two distinct sets:
A \textit{random bundle set} ($\mathcal{V}_r$), which consists of 50,000 bundles sampled uniformly at random from the bundle space.
A \textit{random price-driven set} ($\mathcal{V}_p$), which consists of the bundles requested by the bidder in 200 randomly generated price vectors $\{p^r\}_{r=1}^{200}$, where each item's price
is drawn uniformly between 0 and three times its average value for that bidder type.
$\mathcal{V}_r$ evaluates generalization performance over the entire bundle space, while $\mathcal{V}_p$ focuses on the bidder's utility-maximizing bundles for various prices.

For each HPO configuration, we average the performance across $10$ bidders of the same type. The best-performing configuration for each validation set is selected based on the coefficient of determination.

For the selected configurations, we evaluate performance on $10$ separate test seeds representing new bidders, generating the test sets $\mathcal{T}_r$ and $\mathcal{T}_p$ in the same way as for the validation sets. 
For each test set, we report the \emph{coefficient of determination ($R^2$)},
\emph{Kendall Tau (KT)}, scaled \emph{Mean Absolute Error (scaled MAE)} normalized with respect to the average value of a bundle in that domain and $R^2$ centered ($R^2_c$), a shift invariant version of $R^2$. 
An $R^2_c$ value of 1 indicates that the ML model has learned the bidder’s value function perfectly, up to a constant shift. 
By comparing $R^2_c$ with the standard $R^2$, 
we can assess, for the bundles tested, the shift magnitude in the learned value function.\footnote{Note that this shift is not perfectly constant as (m)MVNNs map the zero bundle to zero.}

Each HPO procedure was conducted under identical conditions, including the same test instances, random seeds, hyperparameter search space, and total computation time.  
For more details on the HPO process, see \cref{subsec:appendix_hpo}.

\begin{table*}[t]
    \renewcommand\arraystretch{1.2}
    \setlength\tabcolsep{2pt}
	\robustify\bfseries
	\centering
	\begin{sc}
 \begin{adjustbox}{max width=1.1\textwidth}
\small
\begin{tabular}{l  c  c  c  c  c  c  c  c  c  c}
\toprule
\textbf{Optimization} &  \multicolumn{2}{c}{\textbf{Train Points}} & \multicolumn{2}{c}{{$\mathbf{R^2}$}} & \multicolumn{2}{c}{\textbf{KT}} & \multicolumn{2}{c}{\textbf{MAE scaled}} 
& \multicolumn{2}{c}{\textbf{$\mathbf{R^2_c}$}}
\\
\cmidrule(lr){2-3} \cmidrule(lr){4-5} \cmidrule(lr){6-7} \cmidrule(lr){8-9} \cmidrule(lr){10-11}
\textbf{Metric} &  $\text{VQs}$ & DQs &  $\mathcal{T}_r$ & $\mathcal{T}_p$ &  $\mathcal{T}_r$ & $\mathcal{T}_p$ &  $\mathcal{T}_r$ & $\mathcal{T}_p$ &  $\mathcal{T}_r$ & $\mathcal{T}_p$ \\ 
\cmidrule(lr){1-1} \cmidrule(lr){2-3} \cmidrule(lr){4-5} \cmidrule(lr){6-7} \cmidrule(lr){8-9} \cmidrule(lr){10-11}
$R^2$ on $\mathcal{V}_r$   & $20$ & $40$ & \ccell $0.84$ & \ccell $0.42$ & \ccell $0.79$ & \ccell $0.80$ & \ccell $0.037$ & \ccell $0.044$ & \ccell $0.84$ & \ccell $0.80$ \\
& $60$ & $0$ & $0.73$ & $-10.07$ & $0.68$ & $0.64$ & $0.052$ & $0.236$ & $0.74$ & $0.20$ \\
& $0$ & $60$ & $0.24$ & $-3.07$ & $0.77$ & $0.77$  & $0.103$ & $0.128$ & $0.83$ & $0.76$\\
\midrule
$R^2$ on $\mathcal{V}_p$ & $20$ & $40$ & \ccell $0.82$ & \ccell $0.01$ & \ccell $0.79$ & \ccell$0.80$ & \ccell $0.041$ & \ccell $0.062$ & \ccell $0.84$ & \ccell $0.83$ \\ 
& $60$ & $0$ & $0.76$ & $-3.40$ & $0.72$ & $0.62$ & $0.049$ & $0.141$ & $0.77$ & $0.05$ \\ 
& $0$ & $60$ & $-0.05$ & $-6.24$ & $0.78$ & $0.72$ &  $0.103$ & $0.154$ & $0.84$ & $0.69$ \\
\bottomrule
\end{tabular}
\end{adjustbox}
\end{sc}
\vskip -0.1 in
\caption{Learning comparison of training only on DQs, only on VQs, or on both. 
Shown are averages over ten instances for the winning configuration of each HPO procedure. 
Winners are marked in gray. (Identical to \Cref{tab:learning_comparison} in the main paper.)
}
\label{tab:learning_comparison_appendix}
\vskip -0.0cm
\end{table*}

\Cref{tab:learning_comparison_appendix} shows that training on a mixture of DQs and VQs consistently outperforms training on either query type alone. 
This is evident across all metrics, and especially for the utility-maximizing bundles of test set $\mathcal{T}_p$, where mixed training yields almost three times lower MAE compared to other approaches.

Furthermore, the mixed-query model was the only one able to approximately learn the correct mean value for both validation sets, as reflected by the small difference between its $R^2_c$ and standard $R^2$. 
In contrast, models trained solely on DQs or VQs showed a much larger discrepancy between these two metrics for at least one of the validation sets.
As explained in \Cref{sec:auction_advantages}, when training only on DQs, 
the model only has relative information about bundle values and thus the value function is not uniquely identifiable, preventing the network from learning it accurately. 
On the other hand, models trained solely on VQs experience a distributional shift between the two test sets—one set focuses on utility-maximizing bundles, while the other contains bundles selected uniformly at random.
Since the VQ training set is drawn uniformly at random and lacks utility-maximizing bundles, the model fails to capture the bidder’s value function for these critical bundles.\footnote{
Note that at the start of an ML-powered, VQ-based auction, the ML models are not yet sufficiently accurate, preventing the auctioneer from asking VQs for utility- or value-maximizing bundles.}

In \Cref{tab:learning_comparison_appendix} we observe that the models trained only on DQs exhibit much better generalization performance in the bundles of $\mathcal{T}_p$ than the models trained on random VQs, despite of their lack of absolute value information.  
The reason for the better generalization performance is the strong distributional shift between the bundles of the two sets. 
But from an allocative value perspective, the bundles in the set $\mathcal{T}_p$ are those for which the bidders have high utility, and thus value. 
Thus, this is the critical area of the allocation space where the auctioneer wants the models to perform well. 
This gives strong empirical motivation as to why starting the learning process with DQs is more effective than starting it with VQs. 
In \Cref{sec:ExperimentalResults} we showed that the efficiency after the first ML-powered VQ is, across all domains, much higher for the model trained on DQs compared to the one trained on random VQs. 
The reason behind this improvement is precisely the fact that the DQ-trained models have learned a better approximation of the bidders' value functions in the most critical part of the allocation space. 
In fact, the learning performance is so much better that, in two out of the four realistic domains tested, a \textit{single} ML-powered VQ in the DQ-trained networks suffices to achieve better auction efficiency than the VQ-trained networks using 60 ML-powered VQs.

Comparing the models trained only on DQs with random VQs in \Cref{tab:learning_comparison_appendix} provides strong empirical evidence of the two main, orthogonal learning advantages of starting an ML-powered auction with DQs compared to random VQs. 
The first advantage is that CCA DQs provide global information about the bundle space, which promotes exploration of the allocation space. 
This global information that DQs provide is evident from the higher KT that the DQ-trained network can achieve across both test sets compared to the VQ-trained one. 
The reason for this increased performance is that, as explained in \Cref{sec:auction_advantages}, DQs provide global relative information about the entire allocation space.

The second learning advantage of starting an auction with CCA DQs is that they provide particularly much information about the critical, high valued areas of the allocation space right from the start. 
This is evident from the fact that the models trained only on DQs exhibit much better generalization performance in the bundles of $\mathcal{T}_p$ than the models trained on random VQs, despite their lack of absolute value information.
The reason for the better generalization performance is the strong distributional shift between the bundles of the two sets. 
But from an allocative value perspective, the bundles in the set $\mathcal{T}_p$ are those for which the bidders have high utility, and thus value. 
Thus, this is the critical area of the allocation space where the auctioneer wants the models to perform well. 

These two learning advantages are so critical that, as we will demonstrate in \Cref{sec:ExperimentalResults}, the efficiency gains after the first ML-powered VQs is, across all domains, much higher for the model trained on DQs compared to the one trained on random VQs. 
In fact, the learning performance is so much better that, in two out of the four domains, 
our hybrid auction (\Cref{section:the_mechanism}) using just two ML-powered VQs, following training on 40 DQs, achieves higher efficiency than the SOTA VQ-based mechanism using $40$ random VQs and $60$ ML-powered VQs.

In \Cref{fig:pred_vs_true_combined}, we present prediction vs. true value plots for the top-performing configurations with respect to $R^2$ on $\mathcal{V}_r$ from \Cref{tab:learning_comparison_appendix}.
We compare the model trained on 40 DQs and 20 VQs against the one trained on 60 VQs, corresponding to the first and second rows of \Cref{tab:learning_comparison_appendix}.
Bundles from  $\mathcal{T}_r$ are represented by red circles, while those from  $\mathcal{T}_p$ are shown in blue. For bundles in $\mathcal{T}_p$, we also plot their \textit{inferred values}, reflecting their price when the bidder requested them. 

\begin{figure}[hbt!]
    \centering
    \begin{subfigure}[b]{0.40\textwidth}
        \centering
        \resizebox{\textwidth}{!}{%
            \includegraphics[trim=54 9 40 55, clip]{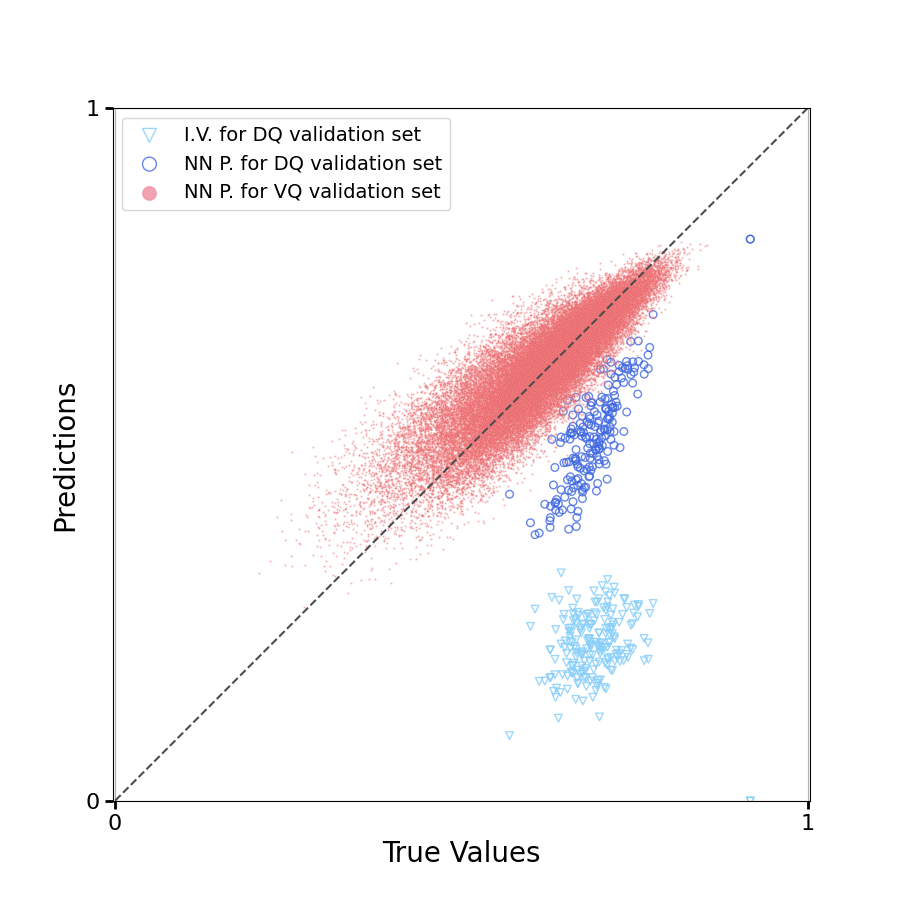}
        }
        \caption{}
        \label{subfig:pred_vs_true_national_MRVM_vq}
    \end{subfigure}
    \
    \begin{subfigure}[b]{0.40\textwidth}
        \centering
        \resizebox{\textwidth}{!}{%
            \includegraphics[trim=54 9 40 55, clip]{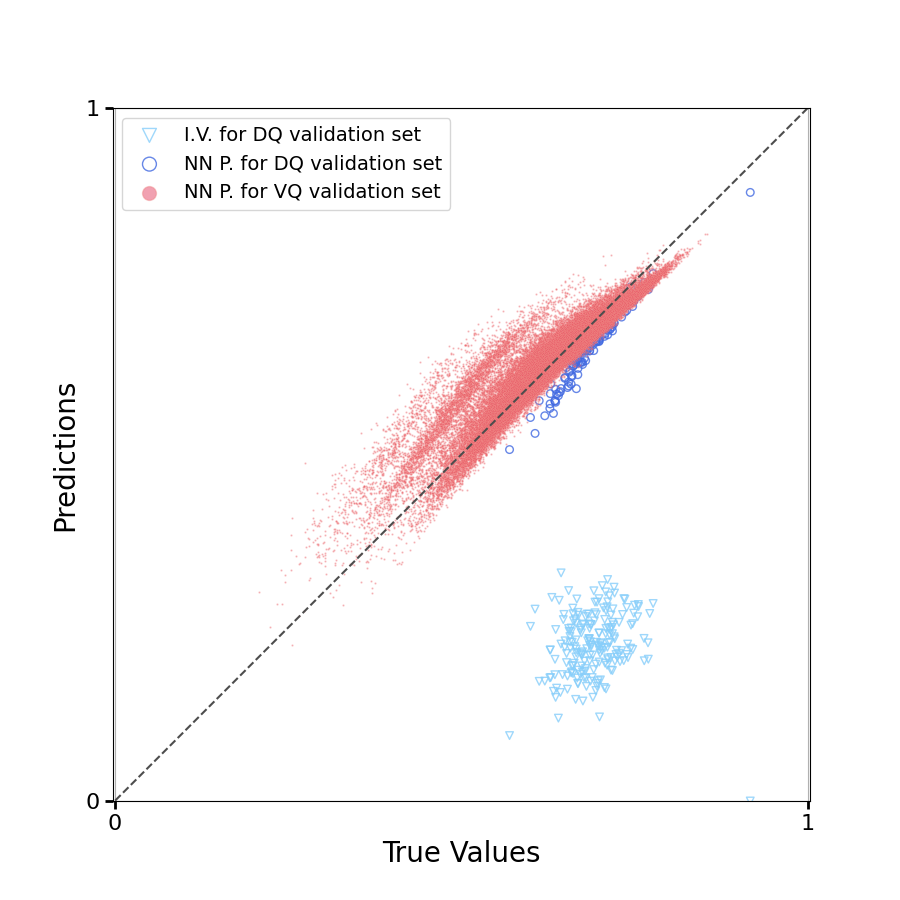}
        }
        \caption{}
        \label{subfig:pred_vs_true_national_MRVM_mixed}
    \end{subfigure}
    \caption{Comparison of scaled prediction vs. true values for an mMVNN trained with different query types for the national bidder in the MRVM domain. (a) Training with 60 value queries. (b) Training with 40 demand queries and 20 value queries.}
    \label{fig:pred_vs_true_combined}
\end{figure}

In \Cref{subfig:pred_vs_true_national_MRVM_vq} we observe that the model trained solely on VQs consistently under-predicts values for bundles in $\mathcal{T}_p$.
Furthermore, there is a very large spread in the predicted values of these bundles. 
These bundles are out of distribution for the network, and thus it cannot generalize to them. 
If we examine the inferred values for the same bundles, we observe a substantial deviation from the true diagonal. The vertical distance between each bundle's inferred value and the true diagonal line corresponds to the bidder’s utility when requesting that bundle - the quantity she is maximizing. 
In contrast, as shown in \Cref{subfig:pred_vs_true_national_MRVM_mixed}, the model trained on the mixed dataset is able to place the bundles of $\mathcal{T}_p$ in an almost perfect parallel line to the true diagonal, and with a much smaller shift. 
These bundles are not out of distribution for that network, which means it can perform better. 
For the bundles of $\mathcal{T}_r$, we observe that the predictions of both models are centered around the true diagonal, indicating that both networks have learned the correct mean value.
However, again we can observe that for the network trained on the mixed dataset, its predictions on $\mathcal{T}_r$ are again more tightly clustered in a line around the true diagonal, 
as was also suggested by the stronger MAE and KT in \Cref{tab:learning_comparison_appendix}.
These observations illustrate the powerful synergy between DQs and VQs. The \textit{global, relative information} provided by DQs enables the network to align its predictions roughly along a consistent trajectory—essentially forming a parallel line to the true diagonal. The \textit{absolute value information} from the VQs then fine-tunes this alignment, effectively positioning the line exactly on the true diagonal, ensuring the predicted values match the true values accurately.

\subsection{Learning Experiments for Other Domains}\label{sec:app_learning_experiments}
In \Cref{tab:learning_comparison_gsvm,tab:learning_comparison_lsvm,tab:learning_comparison_srvm} we present the results of the learning experiment of \Cref{subsec:learning_advantages_experiments} for all additional domains.

Across all domains, the network trained only on DQs demonstrates the worst generalization performance on the dataset $\mathcal{T}_r$. This is primarily due to two factors: the absence of absolute value information that VQs provide and the distributional shift between $\mathcal{T}_r$ and $\mathcal{T}_p$, with the DQ training data being more aligned with $\mathcal{T}_p$.

The performance of the network trained solely on VQs varies by domain. In the GSVM and SRVM domains, the learning task is relatively easy, as indicated by the already strong performance of previous ML-powered ICAs. In these domains, the networks trained only on VQs perform well across both test sets (\Cref{tab:learning_comparison_gsvm,tab:learning_comparison_srvm}). However, in the more challenging LSVM domain—similarly to the MRVM domain discussed in \Cref{subsec:learning_advantages_experiments}—the network trained exclusively on VQs performs well on the $\mathcal{T}_r$ test set, which contains points from the same distribution as its training data, but performs worse on the utility-maximizing bundles of $\mathcal{T}_p$ compared to the network trained on both query types.

This inferior learning performance on the critical dataset $\mathcal{T}_p$ explains why MLHCA outperforms pure VQ-based ML-powered ICAs, such as \citet{weissteiner2023bayesian,weissteiner2020deep}, in the LSVM domain.

\begin{table*}[h!]
\renewcommand\arraystretch{1.2}
\setlength\tabcolsep{2pt}
\centering
\begin{sc}
\begin{adjustbox}{max width=\textwidth}
\small
\begin{tabular}{l  c  c  c  c  c  c  c  c  c  c}
\toprule
\textbf{Optimization} &  \multicolumn{2}{c}{\textbf{Train Points}} & \multicolumn{2}{c}{{$\mathbf{R^2}$}} & \multicolumn{2}{c}{\textbf{KT}} & \multicolumn{2}{c}{\textbf{MAE scaled}} 
& \multicolumn{2}{c}{\textbf{$\mathbf{R^2_c}$}}
\\
\cmidrule(lr){2-3} \cmidrule(lr){4-5} \cmidrule(lr){6-7} \cmidrule(lr){8-9} \cmidrule(lr){10-11}
\textbf{Metric} &  $\text{VQs}$ & DQs &  $\mathcal{T}_r$ & $\mathcal{T}_p$ &  $\mathcal{T}_r$ & $\mathcal{T}_p$ &  $\mathcal{T}_r$ & $\mathcal{T}_p$ &  $\mathcal{T}_r$ & $\mathcal{T}_p$ \\ 
\cmidrule(lr){1-1} \cmidrule(lr){2-3} \cmidrule(lr){4-5} \cmidrule(lr){6-7} \cmidrule(lr){8-9} \cmidrule(lr){10-11}
$R^2$ on $\mathcal{V}_r$ & 20 & 40 & 0.96 & 0.95 & 0.90 & 0.94 & 0.07 & 0.12 & 0.96 & 0.98 \\
& 60 & 0 & 0.99 & 0.98 & 0.96 & 0.98 & 0.03 & 0.05 & 0.99 & 0.98 \\
& 0 & 60 & 0.79 & 0.97 & 0.83 & 0.94 & 0.04 & 0.02 & 0.91 & 0.98 \\
\midrule
$R^2$ on $\mathcal{V}_p$ &  20 & 40 & 0.96 & 0.99 & 0.91 & 0.96 & 0.07 & 0.04 & 0.97 & 0.99 \\
& 60 & 0 & 0.99 & 0.98 & 0.96 & 0.98 & 0.03 & 0.02 & 0.99 & 0.98 \\
& 0 & 60 & 0.79 & 0.97 & 0.83 & 0.94 & 0.13 & 0.05 & 0.91 & 0.98 \\
\bottomrule
\end{tabular}
\end{adjustbox}
\end{sc}
\caption{Learning comparison of training only on DQs, only on VQs, or on both for the GSVM domain.}
\label{tab:learning_comparison_gsvm}
\end{table*}

\begin{table*}[h!]
\renewcommand\arraystretch{1.2}
\setlength\tabcolsep{2pt}
\centering
\begin{sc}
\begin{adjustbox}{max width=\textwidth}
\small
\begin{tabular}{l  c  c  c  c  c  c  c  c  c  c}
\toprule
\textbf{Optimization} &  \multicolumn{2}{c}{\textbf{Train Points}} & \multicolumn{2}{c}{{$\mathbf{R^2}$}} & \multicolumn{2}{c}{\textbf{KT}} & \multicolumn{2}{c}{\textbf{MAE scaled}} 
& \multicolumn{2}{c}{\textbf{$\mathbf{R^2_c}$}}
\\
\cmidrule(lr){2-3} \cmidrule(lr){4-5} \cmidrule(lr){6-7} \cmidrule(lr){8-9} \cmidrule(lr){10-11}
\textbf{Metric} &  $\text{VQs}$ & DQs &  $\mathcal{T}_r$ & $\mathcal{T}_p$ &  $\mathcal{T}_r$ & $\mathcal{T}_p$ &  $\mathcal{T}_r$ & $\mathcal{T}_p$ &  $\mathcal{T}_r$ & $\mathcal{T}_p$ \\ 
\cmidrule(lr){1-1} \cmidrule(lr){2-3} \cmidrule(lr){4-5} \cmidrule(lr){6-7} \cmidrule(lr){8-9} \cmidrule(lr){10-11}
$R^2$ on $\mathcal{V}_r$ & 20 & 40 & 0.38  & 0.88 & 0.65 & 0.80 & 0.46 & 0.33 & 0.44 & 0.91 \\
& 60 & 0 & 0.67 & 0.80 & 0.75 & 0.81 & 0.30 & 0.46 & 0.67 & 0.87 \\
& 0 & 60 & -1.20 & 0.99 & 0.80 & 0.84 & 1.10 & 0.11 & 0.46 & 0.99  \\
\midrule
$R^2$ on $\mathcal{V}_p$ & 20 & 40 & 0.38  & 0.88 & 0.65 & 0.80 & 0.46 & 0.33 & 0.44 & 0.91 \\
& 60 & 0 & 0.65 & 0.82 & 0.81 & 0.88 & 0.25 & 0.38 & 0.66 & 0.87  \\
& 0 & 60 & -2.97 & 0.96 & 0.77 & 0.85 & 1.51 & 0.22 & 0.42 & 0.97 \\
\bottomrule
\end{tabular}
\end{adjustbox}
\end{sc}
\caption{Learning comparison of training only on DQs, only on VQs, or on both for the LSVM domain.}
\label{tab:learning_comparison_lsvm}
\end{table*}

\begin{table*}[h!]
\renewcommand\arraystretch{1.2}
\setlength\tabcolsep{2pt}
\centering
\begin{sc}
\begin{adjustbox}{max width=\textwidth}
\small
\begin{tabular}{l  c  c  c  c  c  c  c  c  c  c}
\toprule
\textbf{Optimization} &  \multicolumn{2}{c}{\textbf{Train Points}} & \multicolumn{2}{c}{{$\mathbf{R^2}$}} & \multicolumn{2}{c}{\textbf{KT}} & \multicolumn{2}{c}{\textbf{MAE scaled}} 
& \multicolumn{2}{c}{\textbf{$\mathbf{R^2_c}$}}
\\
\cmidrule(lr){2-3} \cmidrule(lr){4-5} \cmidrule(lr){6-7} \cmidrule(lr){8-9} \cmidrule(lr){10-11}
\textbf{Metric} &  $\text{VQs}$ & DQs &  $\mathcal{T}_r$ & $\mathcal{T}_p$ &  $\mathcal{T}_r$ & $\mathcal{T}_p$ &  $\mathcal{T}_r$ & $\mathcal{T}_p$ &  $\mathcal{T}_r$ & $\mathcal{T}_p$ \\ 
\cmidrule(lr){1-1} \cmidrule(lr){2-3} \cmidrule(lr){4-5} \cmidrule(lr){6-7} \cmidrule(lr){8-9} \cmidrule(lr){10-11}
$R^2$ on $\mathcal{V}_r$ & 20 & 40 & 1.00 & 0.89 & 0.97 & 0.90 & 0.02 & 0.03 & 1.00 & 0.93 \\
& 60 & 0 &  1.00 & 0.96 & 0.99 & 0.97 & 0.00 & 0.01 & 1.00 & 0.97 \\
& 0 & 60 & 0.93 & -0.13 & 0.96 & 0.92 & 0.11 & 0.10 & 0.97 & 0.94 \\
\midrule
$R^2$ on $\mathcal{V}_p$ &  20 & 40 & 1.00 & 0.94 & 0.98 & 0.92 & 0.01 & 0.02 & 1.00 & 0.94\\
& 60 & 0 & 1.00 & 0.96 & 0.99 & 0.97 & 0.00 & 0.01 & 1.00 & 0.97  \\
& 0 & 60 & 0.91 & 0.02 & 0.96 & 0.86 & 0.12 & 0.09 & 0.95 & 0.89\\
\bottomrule
\end{tabular}
\end{adjustbox}
\end{sc}
\caption{Learning comparison of training only on DQs, only on VQs, or on both for the SRVM domain.}
\label{tab:learning_comparison_srvm}
\end{table*}

\newpage
\clearpage

\section{Detailed Auction Mechanism} \label{sec:app:detailed_mechanism}
In this section, we present a detailed description of MLHCA. 
The full auction mechanism is presented in \Cref{alg:MLHCA_detailed}.

\begin{algorithm2e}[h!]
        \DontPrintSemicolon
        \SetKwInOut{parameters}{Parameters}
        \parameters{$\Qcca,\Qdq$, $\Qvq$, $\Qround$ and $\pi$}
    $\Rvq \gets (\{\})_{i=1}^N$ \;
    $\Rdq \gets (\{\})_{i=1}^N$ \;
    \For(\Comment*[f]{{Draw $\Qcca$ initial prices}}){$r=1,...,\Qcca$}{ \label{alg_detailed_line:qinit_start}
    $p^r \gets CCA(\Rdq)$ \;
        \ForEach(\Comment*[f]{{Initial demand query responses}}){$i \in N$}{
        $\Rdq_i \gets \Rdq_i\cup\{(x^*_{i}(p^r),p^{r})\}$ \label{alg_detailed_line:qinit_p_response}
        }
        }
    \For(\Comment*[f]{{ML-powered DQs}}){$r=\Qcca+1,...,\Qcca + \Qdq$}{
        \ForEach{$i \in N$}{
        {$\MVNNi{} \gets$} \textsc{MixedTraining}$(\Rdq_i, \Rvq_i)$ \label{alg_detailed_line:train_mvnns}\Comment*[r]{{\Cref{alg:mixed_training}}}
        }
        $p^{r} \gets$ \textsc{NextPrice$(\left(\MVNNi{}\right)_{i=1}^n)$}\Comment*[r]{{\Cref{sec:ML-powered Demand Query Generation}}}\label{alg_detailed_line:next_pv}
        \ForEach(\Comment*[f]{{Demand query responses for $p^r$}}){$i \in N$}{
         $\Rdq_i \gets \Rdq_i\cup\{(x^*_{i}(p^r),p^{r})\}$ \label{alg_detailed_line:dq_responses}
        }
        \If(\Comment*[f]{Market-clearing}){$\sum\limits_{i=1}^n (x^*_{i}(p^r))_j= c_j\, \forall j\in M$}{
        $a^*(\Rdq, \Rvq) \gets (x^*_i(p^r))_{i=1}^n$ \Comment*[r]{{Set final allocation to clearing allocation}}
        Calculate payments $\pi(\Rdq, \Rvq)\gets(\pi_i(\Rdq, \Rvq))_{i=1}^n$ \;
        \Return{$a^*(\Rdq, \Rvq)$ and $\pi(\Rdq, \Rvq)$} \label{alg_detailed_line:return_clearing_allocation}
        }}
    \ForEach(\Comment*[f]{{Bridge bid}}){$i \in N$}{
         $\Rvq_i \gets \Rvq_i \cup \{(a^*_i(\Rdq, \Rvq), v_i(a^*_i(\Rdq, \Rvq)))\}$ \label{alg_detailed_line:bridge_bid}
        }
    Initialize empty FIFO queue $\mathcal{Q} \gets ()$ \Comment*[r]{Pending marginal-economy allocations}
    \For(\Comment*[f]{{ML-powered VQs}}){$r=\Qcca+ \Qdq + 2,...,\Qcca + \Qdq + \Qvq$}{ \label{alg_detailed_line:vq_for_loop}
        \If(\Comment*[f]{Serve pending marginal economy}){$\mathcal{Q}$ is non-empty}{ \label{alg_detailed_line:marginal_start}
        $(x^*_i(\Rdq, \Rvq))_{i \in N} \gets$ pop front of $\mathcal{Q}$ \; \label{alg_detailed_line:marginal_end}
        }
        \Else(\Comment*[f]{Main economy, then refill queue}){ \label{alg_detailed_line:main_start}
        \ForEach{$i \in N$}{
        {$\MVNNi{} \gets$} \textsc{MixedTraining}$(\Rdq_i, \Rvq_i)$ \label{alg_detailed_line:train_mvnns_vqs}\Comment*[r]{{\Cref{alg:mixed_training}}}
        }
        \ForEach(\Comment*[f]{Main economy: bidder $i$'s bundle from her own restricted WDP}){$i \in N$}{
        $x'(\Rdq, \Rvq) \gets \argmax\limits_{x \in \mathcal{F} :\, x_i \notin \Rvq_i}\ \sum_{i' \in N} \mathcal{M}_{i'}^{\theta}(x_{i'})$ \;
        $x^*_i(\Rdq, \Rvq) \gets x'_i(\Rdq, \Rvq)$ \;
        }
        Draw $j_1, \dots, j_{\Qround - 1}$ uniformly without replacement from $N$ \;
        \For(\Comment*[f]{Pre-compute marginal-economy allocations}){$k = 1$ \KwTo $\Qround - 1$}{
        \ForEach{$i \in N \setminus \{j_k\}$}{
        $x'(\Rdq, \Rvq) \gets \argmax\limits_{\substack{x \in \mathcal{F} :\, x_{j_k} = \mathbf{0},\\ x_i \notin \Rvq_i\cup\left\{x_i^*(\Rdq, \Rvq), x_i^{(1)},\dots, x_i^{(k-1)}\right\}}}\ \sum_{i' \in N \setminus \{j_k\}} \mathcal{M}_{i'}^{\theta}(x_{i'})$ \;
        $x^{(k)}_i \gets x'_i(\Rdq, \Rvq)$ \;
        }
        $x^{(k)}_{j_k} \gets \mathbf{0}$ \Comment*[r]{Marginalized bidder receives empty bundle}
        Push $x^{(k)} = (x^{(k)}_i)_{i \in N}$ onto $\mathcal{Q}$ \;
        }
        }
        \ForEach(\Comment*[f]{{Value query responses for $x^*(\Rdq, \Rvq)$}}){$i \in N$}{
         $\Rvq_i \gets \Rvq_i\cup\{(x^*_{i}(\Rdq, \Rvq),v_i(x^*_{i}(\Rdq, \Rvq)))\}$ \label{alg_detailed_line:vq_responses}
        }
        }
    Calculate final allocation $a^*(\Rdq, \Rvq)$ as in \Cref{WDPFiniteReports}\; \label{alg_detailed_line:final_allocation}
    Calculate payments $\pi(\Rdq, \Rvq)$ \Comment*[r]{{E.g., VCG (\Appendixref{app:payment_methods}{Appendix~A})}} \label{alg_detailed_line:final_payments}
    \Return{$a^*(\Rdq, \Rvq)$ and $\pi(\Rdq, \Rvq)$} \label{alg_detailed_line:return_final_result}
    \caption{\small \textsc{MLHCA}($\Qcca,\Qdq, \Qvq, \Qround, \pi$)}
    \label{alg:MLHCA_detailed}
\end{algorithm2e}

In Lines \ref{alg_detailed_line:qinit_start} to \ref{alg_detailed_line:qinit_p_response}, we generate the first $\Qcca$ DQs using the same price update rule as the CCA.
In each of the next $\Qdq$ ML-powered rounds, we first train, for each bidder, an mMVNN on her demand responses (\Cref{alg_detailed_line:train_mvnns}). 
Next, in \Cref{alg_detailed_line:next_pv}, we call \textsc{NextPrice} \citep{Soumalias2024MLCCA} to generate the next DQ based on the agents' trained mMVNNs (see \Cref{sec:ML-powered Demand Query Generation}).
If MLHCA has found market-clearing prices, then the corresponding allocation is efficient and is returned, along with payments $\pi(\Rdq, \Rvq)$ according to the deployed payment rule  (\Cref{alg_detailed_line:return_clearing_allocation}).
If, by the end of the ML-powered DQs, the market has not cleared, 
we switch to VQ rounds. 
In the first VQ round (\Cref{alg_detailed_line:bridge_bid}),
we ask each bidder for her \textit{bridge bid} (see \Cref{def:bridge_bid}).
As proven in \Cref{lemma:bridge_bid}, this single VQ ensures that MLHCA's efficiency is lower bounded by the efficiency after just the DQ rounds.

The difference in the algorithm description compared to the simplified version presented in \Cref{section:the_mechanism} lies in the remaining $\Qvq - 1$ VQ rounds. 
Specifically, we make use of \textit{marginal economies}, following the approach of MLCA \citep{brero2021workingpaper}.
These $\Qvq - 1$ rounds are organized into blocks of $\Qround$ consecutive rounds.
The first round of each block is a \textit{main-economy round} (Lines \ref{alg_detailed_line:main_start} to the end of the \textbf{else} branch): for each bidder $i$, we train her mMVNN on her DQ and VQ responses, and query her value for the bundle she receives in the predicted optimal allocation based on all bidders' ML models, under the constraint that she has not been queried for that bundle in the past.
In the same round, we additionally pre-compute $\Qround - 1$ \textit{marginal-economy allocations}: we draw $\Qround - 1$ bidders $j_1, \dots, j_{\Qround-1}$ uniformly at random from $N$ without replacement, and for each $k$, we solve the WDP with bidder $j_k$'s bundle constrained to be empty (i.e., we \textit{marginalize} $j_k$ out of the economy).
These pre-computed allocations are stored in a FIFO queue $\mathcal{Q}$.
During the next $\Qround - 1$ rounds of the block --- the \textit{marginal-economy rounds} (Lines \ref{alg_detailed_line:marginal_start} to \ref{alg_detailed_line:marginal_end}) --- we pop one pre-computed allocation from $\mathcal{Q}$ and query every bidder for her value of the bundle she receives in it, without any further training or optimization.
Note that in a marginal-economy round where bidder $j_k$ is the marginalized bidder, her bundle is the empty bundle by construction, and hence her VQ response is $v_{j_k}(\mathbf{0}) = 0$.
Since MVNNs satisfy $\MVNNi{}(\mathbf{0}) = 0$ by construction (see \Cref{app:MVNN}), this known value provides no new information for the ML model; the marginal economies primarily serve to collect informative VQ responses from the non-marginalized bidders and to improve the incentive properties of the auction (for a detailed analysis, see \Cref{sec:marginal_economies} and \citet{brero2021workingpaper}).
Similar to all papers in this line of work, e.g., \citet{brero2021workingpaper,weissteiner2022monotone,weissteiner2023bayesian}, 
we set $\Qround = 4$ in all of our experiments.

The final allocation and payments are then determined based on all elicited reports (Lines \ref{alg_detailed_line:final_allocation} to \ref{alg_detailed_line:final_payments}).
Note that MLHCA can be combined with various possible payment rules $\pi(\Rdq, \Rvq)$, such as VCG or VCG-nearest.

\newpage
\clearpage

\section{Experiment Details} \label{sec:app_experiment_details}
 Our code is available on GitHub: \url{https://github.com/marketdesignresearch/MLHCA}.

\subsection{SATS Domains}\label{subsec:appendix_SATS_domains}
In this section, we provide a more detailed overview of the four SATS domains, which we use to experimentally evaluate MLHCA:
\begin{itemize}[leftmargin=*,topsep=0pt,partopsep=0pt, parsep=0pt]
\item \textbf{Global Synergy Value Model (GSVM)} \citep{goeree2010hierarchical} has 18 items with capacities $c_j=1$ for all 
$j\in \{1,\ldots,18\}$, $6$ \emph{regional} and $1$ \emph{national bidder}. In this domain the value of a package increases by a certain percentage with every additional item of interest. Thus, the value of a bundle only depends on the total number of items contained in a bundle which makes it one of the simplest models in SATS. In fact, bidders’ valuations exhibit at most two-way (i.e., pairwise) interactions between items.
\item \textbf{Local Synergy Value Model (LSVM)} \citep{scheffel2012impact} has $18$ items with capacities $c_j=1$ for all 
 $j\in \{1,\ldots,18\}$, $5$ \emph{regional} and $1$ \emph{national bidder}. Complementarities arise from spatial proximity of items.
\item \textbf{Single-Region Value Model (SRVM)} \citep{weiss2017sats} has $3$ items with capacities $c_1=6,c_2=14,c_3=9$ and $7$ bidders (categorized as  \emph{local}, \emph{high frequency}, \emph{regional}, or \emph{national}) and models UK 4G spectrum auctions.
\item \textbf{Multi-Region Value Model (MRVM)} \citep{weiss2017sats} has $42$ items with capacities $c_j\in \{2,3\}$ for all 
 $j\in \{1,\ldots,42\}$ and $10$ bidders (\emph{local}, \emph{regional}, or \emph{national}) and models large Canadian 4G spectrum auctions.
\end{itemize}
In the efficiency experiments in this paper, we instantiated for each SATS domain the $100$ synthetic CA instances with the seeds $\{101,\ldots,200\}$. We used \href{https://github.com/spectrumauctions/sats/releases/}{SATS version 0.8.1}.

\subsection{Compute Infrastructure}\label{subsec:app:compute_infrastructure}
All experiments were conducted on a compute cluster running Debian GNU/Linux 10 with Intel Xeon E5-2650 v4 2.20GHz processors with 24 cores and 128GB RAM and Intel E5 v2 2.80GHz processors with 20 cores and 128GB RAM and Python 3.8.10.

\subsection{Hyperparameter Optimization Details}\label{subsec:appendix_hpo}
In this section, we provide details on our exact HPO methodology and the ranges that we used.

We separately optimized the HPs of the mMVNNs for each bidder type of each domain, using a different set of SATS seeds than for all other experiments in the paper. Specifically, for each bidder type, we first trained an mMVNN using as initial data points the demand responses of an agent of that type during $40$ consecutive CCA clock rounds, and her value responses for $30$ uniformly at random selected bundles,
and then measured the generalization performance of the resulting network on a validation set that consisted of 50,000 uniformly at random bundles of items, similar to $\mathcal{V}_r$ in \Cref{subsec:learning_advantages_experiments}.
The number of seeds used to evaluate each model was equal for all models and set to 10. 
Finally, for each bidder type, we selected the set of HPs that performed the best on this validation set with respect to the coefficient of determination ($R^2$). The full range of HPs tested for all agent types and all domains is shown in \Cref{table_HPO_ranges}, while the winning configurations are shown in \Cref{tab:appenedix_hpo_winners}.

The winning configurations for both metrics are shown in \Cref{tab:appenedix_hpo_winners}.

\begin{table}[h!]
\centering
\resizebox{0.6\textwidth}{!}{
\begin{tabular}{ll}
\toprule
 \multicolumn{1}{l}{\textbf{Hyperparameter}} &                      \multicolumn{1}{l}{\textbf{HPO-Range}} \\
\midrule
Non-linear Hidden Layers            & {[}1,2,3{]}                         \\
         Neurons per Hidden Layer             & {[}8, 10, 20, 30{]} \\
         Learning Rate  & (1e-4, 1e-2)                        \\
         Epochs         & {[}30, 50, 70, 100, 300, 500, 1000{]}                            \\
         L2-Regularization & (1e-8, 1e-2)                       \\
         Linear Skip Connections\footnotemark   & {[}True, False{]}           \\
         Cached DQ solution Frequency  & {[}1, 2, 5, 10{]} \\
         Batch Size for VQs & {[}1, 5, 10{]} \\
\bottomrule
\end{tabular}
}
\vskip 0.1cm
    \caption{HPO ranges for all domains.}
    \label{table_HPO_ranges}
\end{table}
\footnotetext{For the definition of (m)MVNNs with a linear skip connection, please see \citet[Definition~F.1]{weissteiner2023bayesian}}

\begin{table*}[ht]
    \renewcommand\arraystretch{1.2}
    \setlength\tabcolsep{2pt}
	\robustify\bfseries
	\centering
	\begin{sc}
	\resizebox{1\textwidth}{!}{
	\small
\begin{tabular}{llllllllll}
\toprule
Domain & Bidder Type & \# Hidden Layers & \# Hidden Units & Lin. Skip & Learning Rate & L2 Regularization & Epochs & Cached Solution Freq. & Batch Size \\
\midrule
LSVM & Regional & 1 & 20 & False & 0.001 & 0.000001 & 100 & 20 & 1 \\
     & National & 1 & 30 & True & 0.0001 & 0.001 & 1000 & 10 & 5 \\
\midrule
GSVM & Regional & 2 & 30 & True & 0.0001 & 0.001 & 1000 & 10 & 5 \\
     & National & 2 & 30 & False & 0.0005 & 0.0001 & 200 & 10 & 1 \\
\midrule
MRVM & Local & 3 & 20 & True & 0.001 & 0.00001 & 200 & 5 & 10 \\
     & Regional & 1 & 30 & True & 0.0001 & 0.000001 & 1000 & 20 & 1 \\
     & National & 2 & 20 & False & 0.001 & 0.000001 & 100 & 5 & 10 \\
\midrule
SRVM & Local & 1 & 10 & True & 0.01 & 0.0001 & 1000 & 5 & 1 \\
     & Regional & 1 & 30 & False & 0.005 & 0.000001 & 500 & 5 & 1 \\
     & High Frequency & 1 & 10 & True & 0.005 & 0.00001 & 500 & 10 & 5 \\
     & National & 1 & 10 & True & 0.005 & 0.00001 & 1000 & 5 & 10 \\
\bottomrule
\end{tabular}
}
    \end{sc}
    \vskip -0.1 in
    \caption{Winning HPO configurations for $R^2$}
\label{tab:appenedix_hpo_winners}
\end{table*}

\subsection{Computational Costs and Bidder-Perceived Latency} \label{sec:compute_analysis}
All experiments were conducted using the compute infrastructure described in \Cref{subsec:app:compute_infrastructure}.

The average wall-clock runtime for a single instance of MLHCA across the GSVM, LSVM, SRVM, and MRVM domains was 1 day, 12 hours, and 42 minutes; 16 hours and 42 minutes; 11 hours and 6 minutes; and 7 days and 36 minutes, respectively. Each instance ran on a single CPU with 20 or 24 physical cores, as detailed in \Cref{subsec:app:compute_infrastructure}, without GPU acceleration. GPUs were incompatible with both our training and query generation implementation.

Considering the substantial welfare gains achieved by MLHCA, we regard these compute costs as marginal. In total, we conducted experiments on over 1,000 realistically-sized instances.

Furthermore, in real-world settings, no more than two query rounds are typically conducted within a day. Therefore, the bidder-perceived latency of our mechanism is not a concern.

\newpage
\clearpage

\subsection{Extended Efficiency Results} \label{sec:extended_efficiency_results}

In this appendix we compare to further competitors including some that use significantly more queries in \Cref{tab:mlhca_vs_boca_mlcca_combined_extended}. Each profit-max bid (see \Cref{sec:detailsBiddignHeuristics}) used by some of our competitors actually consists of 1 constrained DQ plus 1 VQ. However, we count each profit-max query as only 1 VQ. Although this counting scheme is strongly in favor of the competitors, in the 3 most realistic domains LSVM, SRVM and MRVM, our method (MLHCA) significantly outperforms the efficiency of any other competitor except those who use substantially more queries. Even those mechanisms that use more queries did not manage to outperform our efficiency in any of the 4 domains.

\begin{table*}[htb!]
    \renewcommand\arraystretch{1.2}
    \setlength\tabcolsep{2pt}
        \robustify\bfseries
        \centering
        \begin{sc}
    \begin{adjustbox}{max width=\textwidth}
    \small
    \begin{tabular}{l  c  c  c  c  c  c  c  c  c  c  c  c }
    \toprule
    &  \multicolumn{12}{c}{\textbf{Efficiency Loss in \%}}  \\
    \cmidrule(lr){2-13}
    \textbf{Domain} & \textbf{MLHCA} & \textbf{BOCA} & \textbf{ML-CCA\textsubscript{clock}} & \textbf{ML-CCA\textsubscript{raised}} & \textbf{ML-CCA\textsubscript{profit}} & \textbf{CCA} & \textbf{CCA\textsubscript{raised}} & \textbf{CCA\textsubscript{profit}} & \textbf{MVNN} & \textbf{NN} & \textbf{FT} & \textbf{RS} \\
    \cmidrule(lr){1-1} \cmidrule(lr){2-13}
    \textbf{GSVM} & \ccell $0.00 \pm 0.00$ & --- & $1.77 \pm 0.68$ & $1.07 \pm 0.37$ & \ccell$0.00$ & $9.60 \pm 1.49$ & $6.41$ & \ccell$0.00$ & \ccell $00.00 \pm 0.00$ & \ccell$00.00 \pm 0.00$ & $01.77 \pm 0.96$ & $30.34 \pm 1.61$ \\
    \textbf{LSVM} & \ccell $0.04 \pm 0.07$ & $0.39 \pm 0.31$ & $8.36 \pm 1.70$ & $3.61 \pm 0.77$ & \ccell$0.05$ & $17.44 \pm 1.60$ & $8.40$ & $0.24$ & $00.70 \pm 0.40$ & $02.91 \pm 1.44$ & $01.54 \pm 0.65$ & $31.73 \pm 2.15$ \\
    \textbf{SRVM} & \ccell $0.00 \pm 0.00$ & $0.06 \pm 0.02$ & $0.41 \pm 0.11$ & $0.07 \pm 0.02$ & \ccell$0.00$ & $0.37 \pm 0.11$ & $0.19$ & \ccell$0.00$ & $00.23 \pm 0.06$ & $01.13 \pm 0.22$ & $00.72 \pm 0.16$ & $28.56 \pm 1.74$ \\
    \textbf{MRVM} & \ccell $4.81 \pm 0.57$ & $7.77 \pm 0.35$ & $6.94 \pm 0.24$ & $6.68 \pm 0.22$ & $6.32$ & $7.53 \pm 0.48$ & $7.38$ & $6.82$ & $08.16 \pm 0.41$ & $09.05 \pm 0.53$ & $10.37 \pm 0.57$ & $48.79 \pm 1.13$ \\
    \cmidrule(lr){2-13}
    &  \multicolumn{12}{c}{\textbf{Number of 
    Queries}}  \\
    \cmidrule(lr){2-13}
    \textbf{\#DQs} &  40 &   0 & 100 & 100 & 100 & 100 & 100 & 100 &  0 & 0 & 0 & 0\\
    \textbf{\#VQs} &  60 & 100 &   0 & 1-100& 101-200& 0 & 1-100& 101-200 & 100 & 100 & 100 & 100\\
    \textbf{\#Qs} & 100 & 100 & 100 & 101-200 & 201-300& 100 & 101-200 & 201-300 & 100 & 100 & 100 & 100\\
    \bottomrule
    \end{tabular}
    \end{adjustbox}
    \end{sc}
    \vskip -0.1 in
    \caption{Extending \Cref{tab:mlhca_vs_boca_mlcca_combined_redacted} by ML-CCA\textsubscript{PROFIT} (which adds 100 expensive profit-max bids to ML-CCA\textsubscript{RAISED}), CCA\textsubscript{RAISED}, CCA\textsubscript{PROFIT}, MVNN \citep{weissteiner2022monotone}, NN (based on classical neural networks \cite{weissteiner2020deep}), FT (based on Fourier Transforms \cite{weissteiner2022fourier}), and RS (random search, as a baseline). Shown are averages and a 95\% CI. Winners based on a $t$-test with significance level of 5\% are marked in grey.
    }
    \label{tab:mlhca_vs_boca_mlcca_combined_extended}
    \vskip -0.3cm
\end{table*}

\subsubsection{Comparing MLHCA to \texorpdfstring{ML-CCA\textsubscript{PROFIT}}{ML-CCA-PROFIT}}
In each domain, the 2nd highest efficiency is achieved by ML-CCA\textsubscript{PROFIT}, which uses both more DQs and more VQs than our method (see \Cref{tab:mlhca_vs_boca_mlcca_combined_redacted}). In total ML-CCA\textsubscript{PROFIT} uses 201-300 queries if you count profit-max bids as only 1 query. If you count profit-max bids as 2 queries ML-CCA\textsubscript{PROFIT} uses 301-400 queries. Although ML-CCA\textsubscript{PROFIT} asks 2-4 times more queries than MLHCA, ML-CCA\textsubscript{PROFIT} is not able to outperform MLHCA in any domain, not even slightly. On the contrary, MLHCA substantially outperforms ML-CCA\textsubscript{PROFIT} in the most realistic domain MRVM.

\subsubsection{Understanding the Scale of these Efficiency Improvements.}
For example, the most realistic simulator MRVM was specifically designed to simulate the 2014 Canadian 4G spectrum auction. The real 2014 Canadian 4G spectrum auction achieved a revenue of USD 5.27 billion \citep{ausubel2017practical}. The revenue of an auction is a very conservative lower bound of the social welfare (SCW), which is typically significantly higher than the revenue. With this conservative lower bound 1\% point corresponds to more than USD 50 million, probably even significantly more.
For this simulation MRVM, MLHCA is outperforming \emph{every} other 100-query method by more than 2\% point (averaged over 50 random instances of the auction). Even those results of \Cref{tab:mlhca_vs_boca_mlcca_combined_redacted} that use substantially more than 100 queries, are outperformed by only 100 queries of MLHCA by over $1.5$\% points. Every supplemented version of the CCA from \Cref{tab:mlhca_vs_boca_mlcca_combined_redacted} (with up to 300 queries) is outperformed by only 100 queries of MLHCA by over $2$\% points. This is particularly interesting, since CCA is among the most popular mechanisms currently used for real-world auctions. Such an improvement $2$\% points would result in a gain in SCW of USD 100 million of a single instance of the 2014 Canadian spectrum auction.

Translating these results to the latest Canadian Spectrum Auction \citep{canada_3800mhz_auction_2023}, MLHCA's welfare gains would equate to over $50$ million USD compared to all other mechanisms.

Note that repeatedly multiple spectrum auctions are conducted all over the world. E.g., CCA generated over \emph{USD $20$ billion} in revenue between $2012$ and $2014$ alone \citep{ausubel2017practical}.

\subsection{Details on Bidding Heuristics}\label{sec:detailsBiddignHeuristics} 
The second phase of the CCA (or ML-CCA) is \textit{the supplementary round}. In this phase, each bidder can submit a finite number of additional bids for bundles of items, which are called \textit{push bids}. 
Then, the final allocation is determined based on the combined set of all inferred bids of the clock phase, plus all submitted push bids of the supplementary round. 
This design aims to combine good price discovery in the clock phase with good expressiveness in the supplementary round. In simulations, the supplementary round is parametrized by the assumed bidder behaviour in this phase, i.e., which bundle-value pairs they choose to report. As in
\citep{brero2021workingpaper}, we consider the following heuristics when simulating bidder behaviour: 
\begin{itemize}[leftmargin=*,topsep=0pt,partopsep=0pt, parsep=0pt]
    \item \textbf{Clock Bids:} Corresponds to having no supplementary round. Thus, the final allocation is determined based only on the inferred bids of the clock phase (\Cref{WDPFiniteReports}).
    \item \textbf{Raised Clock Bids}: The bidders also provide their true value for all bundles they bid on during the clock phase. 
    \item \textbf{Profit Max:} Bidders provide their true value for all bundles that they bid on in the clock phase, and additionally submit their true value for the $\QPmax$ bundles, earning them the highest utility at the prices of the final clock phase. I.e., each profit-max bid can be seen as
    \begin{itemize}
        \item one constrained DQ to determine which bundle is best given a price vector $p$, constrained on not choosing an already chosen bundle plus
        \item one VQ to determine the exact value of this bundle.
    \end{itemize}
    However, in \Cref{tab:mlhca_vs_boca_mlcca_combined_redacted} we count each profit-max query only as 1 VQ.
\end{itemize}

Note that our mechanism also allows the bidders to voluntarily provide push bids. These additional VQs could probably increase the efficiency of our mechanism even slightly further. However, we evaluate our mechanism in the worst case, where no push bids are provided to MLHCA. And the results in \Cref{tab:mlhca_vs_boca_mlcca_combined_redacted} show that even without receiving any push bids, MLHCA achieves the highest average efficiency in each tested domain.

\newpage
\clearpage

\subsection{Revenue Results} \label{subsec:appendix_revenue_results}

\begin{table}[h!]
    \renewcommand\arraystretch{1.2}
    \setlength\tabcolsep{2pt}
    \robustify\bfseries
    \centering
    \begin{sc}
    \begin{adjustbox}{max width=\textwidth}
    \small
    \begin{tabular}{l c c c c c }
    \toprule
    & \multicolumn{2}{c}{\textbf{Efficiency Loss in \%}} & \multicolumn{2}{c}{\textbf{Relative Revenue in \%}} \\
    \cmidrule(lr){2-3} \cmidrule(lr){4-5}
    \textbf{Domain} & \textbf{MLHCA} & \textbf{BOCA} & \textbf{MLHCA} & \textbf{BOCA} \\
    \cmidrule(lr){1-1} \cmidrule(lr){2-3} \cmidrule(lr){4-5}
    \textbf{GSVM} & \ccell $0.00 \pm 0.00$ & --- & $70.15 \pm 4.43$ & --- \\
    \textbf{LSVM} & \ccell $0.04 \pm 0.07$ & $0.39 \pm 0.31$ & \ccell $79.43 \pm 3.05$ & $73.53 \pm 3.72$ \\
    \textbf{SRVM} & \ccell $0.00 \pm 0.00$ & $0.06 \pm 0.02$ & \ccell $56.05 \pm 1.69$ & $54.22 \pm 1.46$ \\
    \textbf{MRVM} & \ccell $4.81 \pm 0.57$ & $7.77 \pm 0.35$ & $27.97 \pm 2.16$ & \ccell $42.04 \pm 1.89$ \\
    \bottomrule
    \end{tabular}
    \end{adjustbox}
    \end{sc}
    \vskip -0.1 in
    \caption{Efficiency loss and relative revenue comparison between MLHCA (40DQs + 60VQs) and BOCA (100VQs). Shown are averages and a 95\% CI. Winners marked in gray.}
    \label{tab:mlhca_vs_boca_revenue}
    \vskip -0.3cm
\end{table}

In \Cref{tab:mlhca_vs_boca_revenue}, we present the relative revenue results of MLHCA and BOCA, both using VCG payments (see \Cref{sec:vcg_payments}). 
We define relative revenue as the percentage of optimal welfare recovered as revenue on a per-instance basis. For a detailed discussion of the corresponding efficiency results, please refer to \Cref{sec:ExperimentalResults}.

Unlike efficiency, the best-performing mechanism in terms of revenue varies by domain. In the LSVM and SRVM domains, MLHCA generates higher revenue than BOCA, while in the MRVM domain, the opposite is true. 

The explanation for MLHCA's higher revenue in the LSVM and SRVM domains is straightforward: MLHCA achieves higher efficiency than BOCA in these domains and still has at least 42 VQs remaining after matching BOCA's efficiency. 
This additional exploration afforded by those extra VQs enables MLHCA to identify many high-value allocations, ultimately driving up prices under the VCG payment rule.

The lower revenue of MLHCA compared to BOCA in the MRVM domain can be explained by examining the DQ rounds of MLHCA.
In this domain, lower competition among bidders results in relatively low item prices, which reduces the inferred-to-true welfare ratio of the allocations based only on the DQs. This is illustrated in \Cref{fig:inferred_scw_all_domains}. 
The low inferred value from these queries prevents them from driving up the VCG prices, even though they lead to allocations with high efficiency. As a result, in this domain, only the VQs contribute significantly to the auction's payments. Given that BOCA uses 100 VQs while MLHCA uses only 60, this difference leads to BOCA achieving higher revenue in the MRVM domain.

\newpage
\clearpage

\subsection{Evaluating the Effectiveness of the Bridge Bid} \label{sec:bridge_bid_evaluation}

\begin{figure*}[h!]
    \begin{center}
    \vskip -0.4cm
    \begin{subfigure}[t]{0.5\textwidth}
        \centering
        \resizebox{\textwidth}{!}{
        \includegraphics[trim=39 5 45 45, clip]{Figures/efficiency_plots/bridge_bid_comparison_MRVM_v1.7_threshold_0_bridge_comparison.pdf}
        }
        \caption{Efficiency of MLHCA with and without the bridge bid (\Cref{def:bridge_bid}) in the MRVM domain.}
        \label{fig:mrvm_bridge_bid_comparison}
    \end{subfigure}
    \hfill
    \begin{subfigure}[t]{0.49\textwidth}
        \centering
        \resizebox{\textwidth}{!}{
        \includegraphics[trim=30 0 45 48, clip]{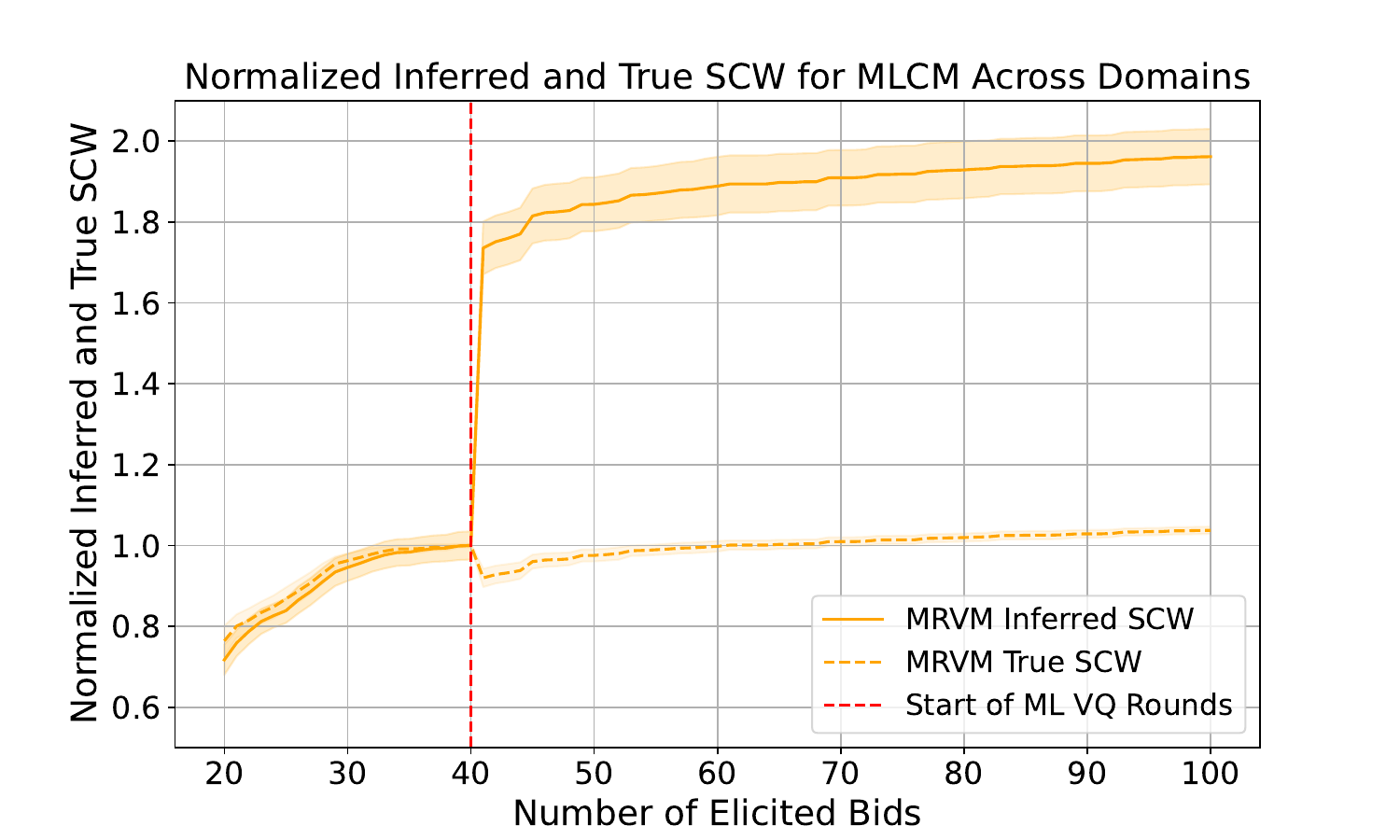}
        }
        \caption{Normalized inferred and true social welfare (SCW) of MLHCA without the bridge bid in the MRVM domain. Both quantities are normalized with respect to their average values at the start of the ML-based VQs.}
        \label{fig:mrvm_inferred_vs_true_scw}
    \end{subfigure}
    \caption{Comparison of MLHCA's performance in the MRVM domain: (a) Efficiency with and without the bridge bid; (b) Normalized inferred and true SCW. Shown are averages over 50 instances including 95\% CIs.}
    \label{fig:combined_mrvm_figures}
    \end{center}
\end{figure*}

In this section, we experimentally evaluate the effectiveness of the bridge bid 
from \Cref{sec:auction_advantages}.

In \Cref{fig:mrvm_bridge_bid_comparison}, we plot MLHCA’s efficiency in the MRVM domain as a function of the number of elicited bids, comparing performance with and without the bridge bid. Without the bridge bid, we observe a significant efficiency drop of 7.3\% points when MLHCA transitions to its VQ rounds. 
This is consistent with our theoretical results in \Cref{le:mixed_auction_efficiency_drop}, where we showed that efficiency can decrease when the first VQ is introduced after DQs. 
In the MRVM domain, the most realistic setting, this effect is particularly pronounced. 
Notably, the auction requires 20 of our powerful ML-powered VQs just to recover the efficiency lost by the introduction of the first VQ.
By contrast, using the bridge bid (\Cref{def:bridge_bid}) completely mitigates this efficiency drop, as predicted by \Cref{lemma:bridge_bid}. 
However, as \Cref{fig:mrvm_bridge_bid_comparison} shows, if enough VQs are elicited, MLHCA without the bridge bid can eventually recover its efficiency, and both approaches converge to similar performance levels.

However, given that the auctioneer cannot determine the true efficiency of the auction at runtime, it is prudent to use the bridge bid version, which ensures consistent performance throughout the auction and significantly outperforms the alternative for the majority of rounds. Therefore, we consider this version the default approach for our MLHCA.

To better understand the cause of this efficiency drop, we refer to \Cref{fig:mrvm_inferred_vs_true_scw}, where we plot the normalized inferred and true social welfare of MLHCA without the bridge bid in the MRVM domain. 
Both quantities are normalized to their values at the start of the ML-powered VQ rounds. At this point, we observe a stark contrast: 
the first VQ increases inferred social welfare by over $70$\%, while decreasing true social welfare by more than 7\%.
Before the ML-powered VQs, agents' reports were limited to their responses to DQs, and the auction’s inferred social welfare was calculated based on the prices of the allocated bundles, as described in \Cref{WDPFiniteReports}. 
Due to the relatively low competition in MRVM, there was a substantial gap between the agents’ true values and the inferred values based on their DQ responses.\footnote{Low competition in the auction can be gauged from its revenue, as, in the absence of reserve prices, revenue is primarily driven by competition among bidders. MRVM has the lowest ratio of revenue to welfare across all domains by a factor of nearly 2; see \Cref{subsec:appendix_revenue_results}.}

\begin{figure}[t]
    \centering
    \resizebox{0.5\textwidth}{!}{
        \includegraphics[trim=0 0 0 48, clip]{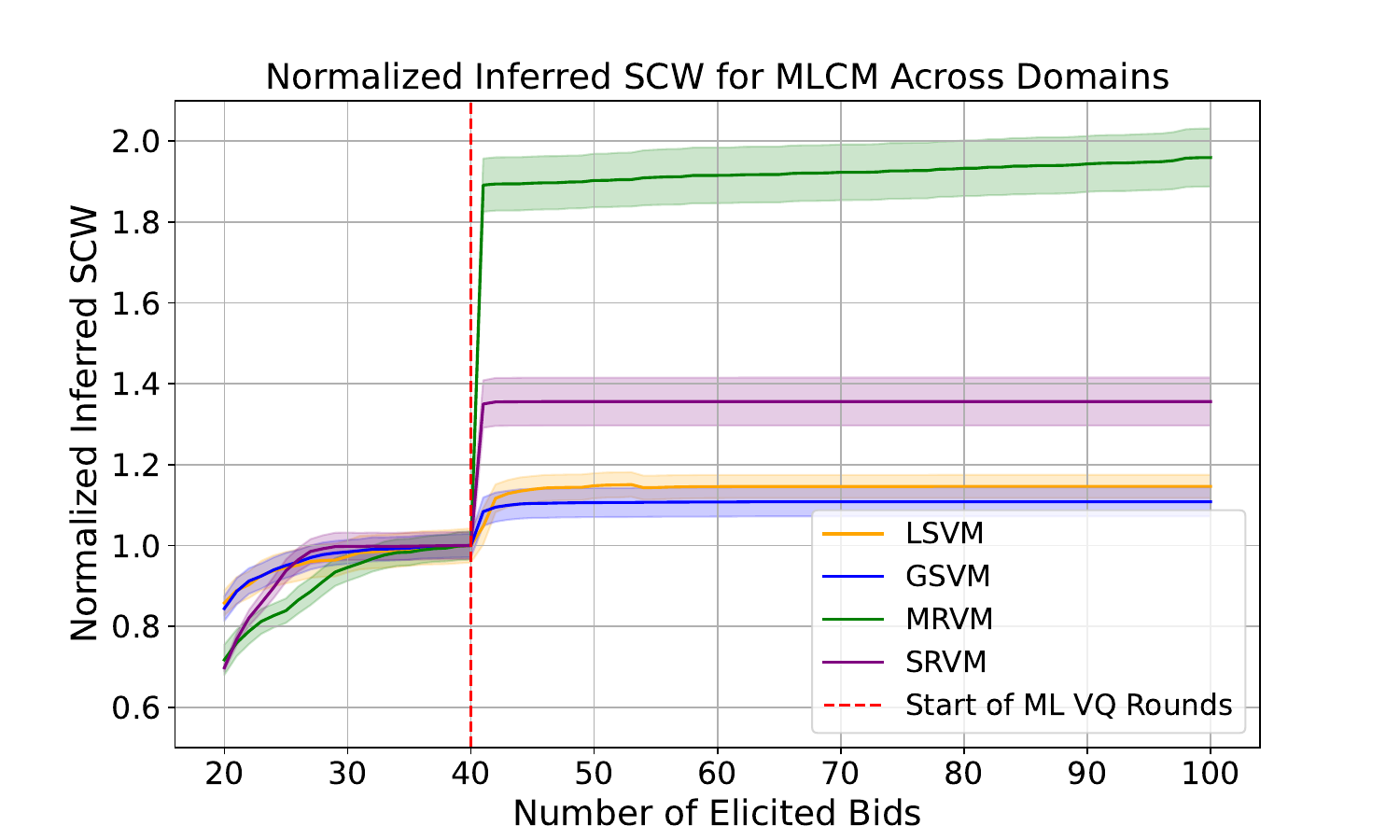}
    }
    \vspace{-0.2cm}
    \caption{Normalized Inferred Social Welfare of MLHCA (with the bridge bid) in all domains. Shown are averages over 50 instances including 95\% CIs.}
    \label{fig:inferred_scw_all_domains}
\end{figure}

When the auction transitioned to VQs, agents responded with their true values for the queried bundles, leading to a sharp increase in inferred social welfare. However, the bidders’ true values for the bundles they received during the DQ rounds were much higher than their inferred values, which the WDP failed to capture. As a result, transitioning to ML-powered VQs without the bridge bid caused a sharp increase in \textit{inferred} social welfare alongside a drop in \textit{true} social welfare.

This efficiency drop when transitioning to VQs is less pronounced in other domains. 
In \Cref{fig:inferred_scw_all_domains}, we plot MLHCA's SCW for all domains as a function of the number of elicited bids, normalized to the start of the ML-powered VQ rounds. In these domains, the higher level of competition leads to inferred values for the queried bundles during the DQ rounds being much closer to the true values. Consequently, the bridge bid is less critical in these settings.

\newpage
\clearpage

\subsection{Inverse Query Order}  \label{subsec:experiment_inverse_query_order}

\begin{table*}[h!]
    \renewcommand\arraystretch{1.2}
    \setlength\tabcolsep{2pt}
	\robustify\bfseries
	\centering
	\begin{sc}
    \begin{adjustbox}{max width=\textwidth}
    \small
    \begin{tabular}{l  c  c  c  c  c  c  c  c  c }
    \toprule
    &  \multicolumn{5}{c}{\textbf{Efficiency Loss in \%}}  & \multicolumn{3}{c}{\textbf{Queries to Reject Null Hypothesis}} \\
    \cmidrule(lr){2-6} \cmidrule(lr){7-9}
    \textbf{Domain} & \textbf{MLHCA} &  \textbf{Inverse} & \textbf{BOCA} & \textbf{ML-CCA\textsubscript{clock}} & \textbf{ML-CCA\textsubscript{raised}}  & \textbf{BOCA $\geq$ MLHCA} & \textbf{Inverse $\geq$ MLHCA} & \textbf{ML-CCA\textsubscript{raised} $\geq$ MLHCA} \\
    \cmidrule(lr){1-1} \cmidrule(lr){2-6} \cmidrule(lr){7-9}
    \textbf{GSVM} & \ccell $0.00 \pm 0.00$ & $0.28 \pm 0.30$ & --- & $1.77 \pm 0.68$ & $1.07 \pm 0.37$ & --- & 50 & 42 \\
    \textbf{LSVM} & \ccell $0.04 \pm 0.07$ & $5.12 \pm 2.18$ & $0.39 \pm 0.31$ & $8.36 \pm 1.70$ & $3.61 \pm 0.77$  & 58 & 43 & 43 \\
    \textbf{SRVM} & \ccell $0.00 \pm 0.00$ & \ccell $0.08 \pm 0.16 $ & $0.06 \pm 0.02$ & $0.41 \pm 0.11$ & $0.07 \pm 0.02$ & 42 & --- & 42 \\
    \bottomrule
    \end{tabular}
    \end{adjustbox}
    \end{sc}
    \vskip -0.1 in
    \caption{MLHCA (40DQs + 60VQs) vs Inverse (80 VQs + 20 DQs), BOCA (100VQs), ML-CCA (ML-CCA\textsubscript{clock}) (100DQs) and ML-CCA with raised clock bids (ML-CCA\textsubscript{raised}) (100DQs and up to 100VQs). Shown are averages and a 95\% CI. Winners based on a $t$-test with significance level of 5\% are marked in grey. 
    }
    \label{tab:mlhca_vs_inverse_combined_redacted}
    \vskip -0.3cm
\end{table*}

\begin{figure*}[h!]
    \begin{center}
    \resizebox{1.05\textwidth}{!}{
    \includegraphics[trim=0pt 0pt 0pt 0pt, clip]{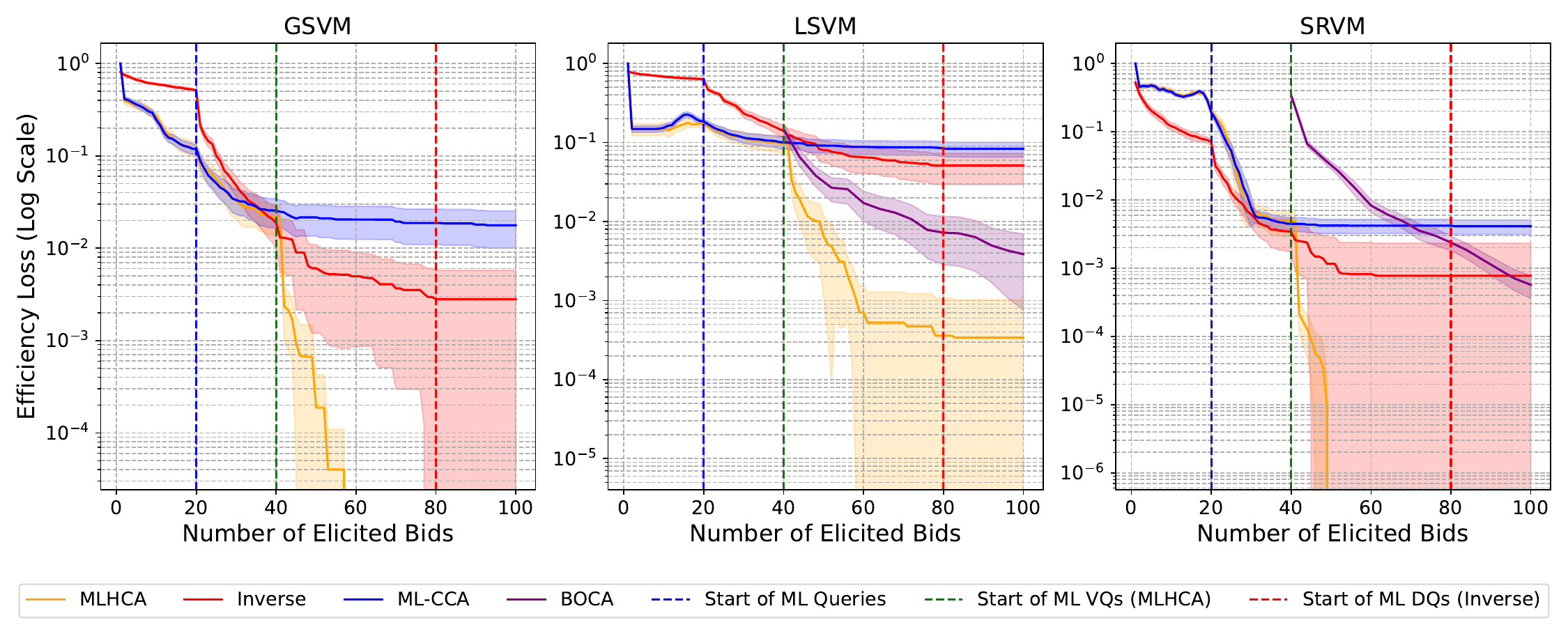}
    }
    \vspace{-0.5cm}
    \caption{Efficiency loss paths (i.e., regret plots) of MLHCA compared to BOCA, ML-CCA and Inverse, an auction that inverses the order of DQs and VQs compared to MLHCA. 
    Shown are averages over 50 instances with 95\% CIs.}
    \label{fig:efficiency_path_plot_summary_inverse}
    \end{center}
    \vskip -0.2 in
\end{figure*}

In this section, we experimentally evaluate how the order of queries affects auction performance.
To this end, we introduce a variant of our mechanism, which we refer to as \textit{Inverse}.
This mechanism mirrors the structure of MLHCA described in \Cref{section:the_mechanism}, but with the query order reversed: \textit{Inverse} begins with value queries (VQs), followed by demand queries (DQs).
Concretely, while MLHCA starts with $20$ CCA-based DQs to initialize learning, followed by $20$ ML-powered DQs and then $60$ ML-powered VQs, \textit{Inverse} instead begins with $20$ randomly chosen VQs, proceeds with $60$ ML-powered VQs, and concludes with $20$ ML-powered DQs.
Unlike MLHCA, \textit{Inverse} does not use the \textit{bridge bid} introduced in \Cref{section:the_mechanism}. Because its final stage consists of DQs, there is no need for a special bid to preserve its DQ-only efficiency at later phases.

To isolate the effect of query order on auction efficiency, we evaluate all auction formats on the same set of auction instances. For the same reason, we use identical hyperparameters for both MLHCA and the \textit{Inverse} auction, as described in \Cref{subsec:appendix_hpo}. 
For more details on the experimental setup, please refer to \Cref{subsec:experiment_setup}.

As in \Cref{sec:efficiency_results}, \Cref{tab:mlhca_vs_boca_mlcca_combined_redacted} reports the average efficiency loss of each mechanism after $100$ queries. For ML-CCA, we additionally report results assuming it is supplemented with the clock bids raised heuristic (see \Cref{sec:benchmark_icas}), which may require up to 100 additional VQs per bidder.\footnote{
Under the clock bids raised heuristic, bidders report their value only for each unique bundle they bid on during the auction—up to 100 bundles for 100 DQs.} We also report how many queries MLHCA needs to statistically outperform the final efficiency of the other mechanisms.

We observe that in all domains tested,\footnote{Due to time constraints between the final reviews and the camera-ready deadline, results for the MRVM domain are not yet available.}
MLHCA consistently outperforms the \textit{Inverse} auction. In the LSVM domain, the efficiency difference reaches 5 percentage points, which would translate to welfare gains exceeding 200 million USD based on the value of goods typically traded in such auctions.
Notably, in both the GSVM and LSVM domains, MLHCA statistically outperforms the \textit{Inverse} auction at the 95\% confidence level while using at most half as many queries. At this point, \textit{Inverse} has already asked each bidder 50 of the more cognitively demanding VQs, whereas MLHCA has used only 10 VQs.
In the SRVM domain, despite achieving higher average efficiency, MLHCA does not statistically outperform the \textit{Inverse} auction. This is due to the simplicity of the domain: the \textit{Inverse} auction reaches perfect (100\%) efficiency in 49 out of 50 instances, rendering the two mechanisms statistically indistinguishable.

To gain a more detailed understanding of the impact of the inverse query order,
\Cref{fig:efficiency_path_plot_summary_inverse} illustrates the efficiency loss path for the GSVM, LSVM, and SRVM domains.
Across all three domains, we observe that placing ML-powered DQs after the ML-powered VQs fails to improve the auction’s efficiency.
This is particularly noteworthy in the LSVM domain, where the efficiency loss after the ML-powered VQ phase was over $5$ percentage points, meaning there was still significant room for improvement.
Although the bidders' ML models become significantly more accurate once the auction reaches its DQ phase, the DQ algorithm is unable to leverage this improved model accuracy to generate more efficient outcomes.

Taken together, the results in this section provide strong empirical support for our theoretical analysis in \Cref{sec:theoretical_analysis_combining_dqs_vqs,sec:Why One Should Ask DQs Before VQs and Not the Other Way Around}: effective hybrid auctions should begin with DQs to gather global information about bidders’ valuation functions, and only then transition to VQs to both refine understanding in the most critical regions of the allocation space and to obtain bids for bundles that are compatible for a feasible allocation.
Across all tested domains, reversing this order results in substantial efficiency losses: ML-powered DQs fail to improve upon the efficiency achieved by the preceding VQs, while VQs, when used upfront, perform worse than in the standard order due to the reduced learning performance from training solely on VQs.
These findings underscore the pivotal role of query ordering in the design of machine learning-powered combinatorial auctions.

While all subplots in \Cref{fig:efficiency_path_plot_summary_inverse} clearly support our claim that asking DQs before VQs results in better final efficiency than asking them the other way around, certain details in the subplot for SRVM might raise some questions that we answer in \Cref{rem:Why are 20 random VQs better now than 20 random VQs were for BOCA for SRVM,rem:Are initial random VQs better for SRVM than initial CCA-DQs}.

\begin{remark}[Why are 20 random VQs better now than 20 random VQs were for BOCA for SRVM?]\label{rem:Why are 20 random VQs better now than 20 random VQs were for BOCA for SRVM}
    In \Cref{fig:efficiency_path_plot_summary_inverse}, for SRVM, we see that the efficiency of the \emph{Inverse} auction after its initial random 20 VQs is approximately 4 times better than \emph{BOCA} after its initial 40 random VQs. The reason for this, is a difference in the data pre-processing used for BOCA from \cite{weissteiner2023bayesian} and our data pre-processing from \cite{Soumalias2024MLCCA}. 
    SRVM \citep{weiss2017sats} has $3$ items with capacities $c_1=6,c_2=14,c_3=9$. In \cite{weissteiner2023bayesian}, these items were treated as $c_1+c_2+c_3=6+14+9=29$ unique items. In other words, we and \cite{Soumalias2024MLCCA} provide the mechanism with the prior knowledge that the first 6 items are indistinguishable copies of each other, and the next 14 items are indistinguishable copies of each other, and the remaining 9 items are indistinguishable copies of each other, while \cite{weissteiner2020deep,weissteiner2022fourier,weissteiner2022monotone,weissteiner2023bayesian} did not use this prior knowledge in any form. This additional prior knowledge used by \textit{Inverse}, but not used by BOCA, explains the advantage of \textit{Inverse} over BOCA in this domain. 
    The reason that those works did not leverage that information is that the \textit{multiset} MVNNs required to capture that prior knowledge were only introduced in \citet{Soumalias2024MLCCA}, which superseded these works. 
    Nevertheless, the point of the inverse auction is not to compare against the SOTA ML-powered auction, BOCA, but to empirically evaluate the effect of switching the query order compared to MLHCA. 
    In the end, BOCA can still outperform \textit{Inverse} for 2 potential reasons: 1) BOCA keeps asking VQs until the end, while \textit{Inverse} switches to DQs for the last 20 rounds, where DQs lose their effectiveness towards the end of the auction, or 2) BOCA explicitly fosters exploring unexplored regions of the bundle space, which gives it an advantage in the long run.
\end{remark}

\begin{remark}[Are initial random VQs better for SRVM than initial CCA-DQs?]\label{rem:Are initial random VQs better for SRVM than initial CCA-DQs}
    In \Cref{fig:efficiency_path_plot_summary_inverse}, we directly see that for GSVM and LSVM, starting the auction with 20 CCA-DQs is better than starting the auction with 20 random VQs. However, for SRVM, the efficiency after the initial random 20 VQs of \textit{Inverse} is better than the efficiency after the initial 20 CCA-DQs of MLHCA or ML-CCA. Nevertheless, for all considered domains, including SRVM, we can see that MLHCA takes the lead after round 42. Our intuitive explanation is that 1) for all considered domains, including SRVM, the initial 20 CCA-DQs foster a better learning of the value functions, 2) the bundles that bidders request in the CCA-DQs, are usually incompatible, which seems to especially hurt the efficiency of SRVM, and 3) ML-VQs are always asked for perfectly compatible bundles, and if the models understand the value functions already sufficiently well, directly the first ML-VQ after the bridge bid can result in an allocation with very high efficiency (as intuitively demonstrated in \Cref{ex:OneVQNeeded} and as supported empirically by all our experiments). In summary, the efficiency observed from DQs can be misleading as a signal of learning progress, since the requested bundles are often incompatible and therefore fail to reflect the model’s understanding. In contrast, the efficiency observed immediately after an ML-VQ provides a much more accurate reflection of the current learning state, as it elicits bids on perfectly compatible bundles across all bidders.
\end{remark}

\end{document}